\newtheorem{thm}{Theorem}[section]
\newtheorem{df}[thm]{Definition}
\newtheorem{prp}[thm]{Proposition}
\newtheorem{thmappC}{Theorem C.\!}
\newtheorem{prpappC}[thmappC]{Proposition C.\!}
\newtheorem{lemappC}[thmappC]{Lemma C.\!}
\journal{Elsevier}
\begin{document}

\begin{frontmatter}



\title{Counting rule of Nambu-Goldstone modes for internal and spacetime symmetries: Bogoliubov theory approach}

\author[cems,reckeio]{Daisuke A. Takahashi}\ead{daisuke.takahashi.ss@riken.jp}
\author[reckeio,physkeio]{Muneto Nitta}
\address[cems]{RIKEN Center for Emergent Matter Science (CEMS), Wako, Saitama 351-0198, Japan}
\address[reckeio]{Research and Education Center for Natural Sciences, Keio University, Hiyoshi 4-1-1, Yokohama, Kanagawa 223-8521, Japan}
\address[physkeio]{Department of Physics, Keio University, Hiyoshi 4-1-1, Yokohama, Kanagawa 223-8521, Japan}
\begin{abstract}
When continuous symmetry is spontaneously broken, 
there appear Nambu-Goldstone modes (NGMs) 
with linear or quadratic dispersion relation, 
which is called type-I or type-II, respectively. 
We propose a framework to count these modes 
including the coefficients of the dispersion relations 
by applying the standard Gross-Pitaevskii-Bogoliubov theory. 
Our method is mainly based on (i) zero-mode solutions of the Bogoliubov equation originated from spontaneous symmetry breaking and (ii) their generalized orthogonal relations, which naturally arise from well-known Bogoliubov transformations and are referred to as ``$\sigma$-orthogonality'' in this paper. 
Unlike previous works, 
our framework is applicable without any modification 
to the cases where there are additional zero modes, 
which do not have a symmetry origin, 
such as quasi-NGMs, and/or 
where spacetime symmetry is spontaneously broken 
in the presence of a topological soliton or a vortex. 
As a by-product of the formulation, we also give a compact summary for mathematics of bosonic Bogoliubov equations and Bogoliubov transformations, which becomes a foundation for any problem of Bogoliubov quasiparticles. 
The general results are illustrated by various examples in 
spinor Bose-Einstein condensates (BECs). In particular, the result on the spin-3 BECs includes new findings such as a type-I--type-II transition and an increase of the type-II dispersion coefficient caused by the presence of a linearly-independent pair of zero modes.
\end{abstract}

\begin{keyword}
Nambu-Goldstone modes \sep Bogoliubov theory \sep Gross-Pitaevskii equation \sep spinor Bose-Einstein condensates \sep Spontaneous symmetry breaking \sep Indefinite inner product space 


\end{keyword}

\end{frontmatter}


\section{Introduction}\label{sec:intro}
It often occurs in nature that 
a continuous symmetry of a system 
is not preserved in the ground state. 
Such a spontaneous symmetry breaking (SSB) is ubiquitous in nature 
from magnetism, superfluidity and 
superconductivity to quantum field theories such as unification of 
fundamental forces.
When such a SSB occurs,  
there must appear gapless modes 
known as Nambu-Goldstone modes (NGMs) 
and low-energy physics is solely determined by 
these degrees of freedom. 

It is generally known that dispersion relations of NGMs 
are always linear in relativistic theories.  
However, the situation is different in non-relativistic systems; 
the dispersion relation can be either linear $ (\epsilon \propto |k|) $  or quadratic  $ (\epsilon \propto k^2) $. 
Also, the number of NGMs coincides with the number 
of generators of broken symmetries 
in relativistic theories, 
but such a relation does not exist in general for 
non-relativistic cases. 
The well-known illustrative examples in condensed matter physics are the Heisenberg ferromagnets and antiferromagnets. In both cases, the Hamiltonian has the $ SO(3) $ spin-rotation symmetry, but that of ground states is reduced to $ SO(2) $, ignoring discrete symmetries. So, the number of broken continuous symmetries is given by $ \dim (SO(3)/SO(2))=2 $ for both cases. However, the type and the number of emergent NGMs are different; while we have only one spin precession mode with quadratic dispersion in the ferromagnetic case, two spin-wave excitations with linear dispersion appear in the antiferromagnetic case. Thus, the question is: How should we determine the number of NGMs having linear and quadratic dispersions in non-relativistic systems?

The first attack to the above-mentioned problem was made by Nielsen and Chadha in 1976.
They classified 
NGMs with linear and quadratic dispersion relations 
to be of type-I and type-II, respectively, 
and summarized the numbers of those modes 
in the form of 
the Nielsen-Chadha inequality \cite{Nielsen:1975hm}. 
(Strictly speaking, they defined type-I (II) by a dispersion relation with an odd (even) power of the momentum, but this classification is not essential in view of today's understanding.) 
After that, in the 21st century, following a novel remark by Nambu \cite{Nambu:2004yia}, 
Watanabe and Brauner conjectured that 
the equality of the Nielsen-Chadha inequality is saturated 
in generic situation and gave a criterion 
to the numbers of type-I and type-II NGMs 
in the form of a matrix, which we call the Watanabe-Brauner (WB) matrix, whose components are commutators of 
generators corresponding to broken symmetries,  
sandwiched by the ground state 
\cite{PhysRevD.84.125013}. 
More recently, this conjecture has been proved by the effective Lagrangian approach on a coset space \cite{PhysRevLett.108.251602} and by Mori's projection operator method \cite{Hidaka:2012ym}. 
In particular,  
the effective Lagrangian approach 
based on a coset space $ G/H $ for a symmetry $G$ spontaneously 
broken down to its subgroup $H$ in the ground state  
\cite{Coleman:1969sm,Callan:1969sn,Leutwyler:1993gf} 
is a very powerful tool to determine 
the low energy dynamics solely from symmetry arguments, 
and was extensively used in Refs.~\cite{PhysRevLett.108.251602,Watanabe:2014fva}. 
These theoretical developments are now in the stage of experimental verification, because various kinds of multicomponent Bose-Einstein condensates (BECs) are realized in ultra cold atomic gases, such as binary mixtures \cite{HoShenoy,Myatt} and spinor BECs \cite{Kawaguchi:2012ii,RevModPhys.85.1191} with spin-1 \cite{PhysRevLett.80.2027,Stenger,JPSJ.67.1822,Ho:1998zz}, spin-2 \cite{PhysRevA.61.033607,PhysRevLett.92.040402,PhysRevLett.92.140403,PhysRevA.69.063604}, and spin-3 \cite{PhysRevLett.94.160401,PhysRevLett.96.190405,PhysRevA.84.053616}. For example, the dispersion relations of the Bogoliubov phonon and the ferromagnetic spin wave are confirmed in Refs. \cite{PhysRevLett.79.553,2014arXiv1404.5631M}.

Thus far, the theory was formulated for 
internal symmetries. 
The number of NGMs
and their dispersion relations are more 
complicated when 
spacetime symmetry such as translations 
and rotations are spontaneously broken. 
See Refs.~\cite{Watanabe:2013iia,Hayata:2013vfa,Brauner:2014aha} 
for recent discussions. 
Spacetime symmetries are also spontaneously broken  
in topologically non-trivial backgrounds, 
such as 
a quantized vortex
\cite{Kobayashi01022014}, 
a domain wall in magnets \cite{Kobayashi:2014xua}, 
and in two-component BECs
\cite{PhysRevA.88.043612,Watanabe:2014zza} 
and 
a skyrmion line in magnets 
\cite{Watanabe:2014pea,Kobayashi:2014eqa}.
In these cases, there appear NGMs localized on/along topological objects. 
The equality in Refs.~\cite{Nielsen:1975hm,PhysRevD.84.125013,PhysRevLett.108.251602,Hidaka:2012ym} holds 
even in these cases, 
but a careful treatment 
of the singularities 
in the core of topological excitations 
is needed to derive non-commutative nature of generators. 

While the effective Lagrangian approach in the coset space $ G/H $ 
used in Refs.~\cite{Coleman:1969sm,Leutwyler:1993gf,PhysRevLett.108.251602,Watanabe:2014fva}
can generally find possible terms by symmetry considerations, an explicit value of the coefficient of dispersion relations can be obtained only by solving each system concretely. Furthermore, the coset space cannot describe the deviation of the order parameter from ground states, so it cannot grasp a correct physical picture for the motions of NGMs. For example, the spin-1 polar BEC, which is a non-magnetic phase of the spin-1 BEC,  has two type-I spin-wave excitations. These excitations induce a small magnetization and hence the order parameter deviates from the polar state \cite{JPSJ.67.1822,Ho:1998zz}. See also Refs.~\cite{PhysRevA.81.063632,PhysRevA.86.063614}. This effect is completely ignored if the description is closed in the coset space, because the phase of the order parameter is fixed. The similar situation also occurs in Heisenberg antiferromagnets. In order to include these effects, we must formulate the theory in a full order-parameter space. Also, the theories so far do not include  gapless modes without 
an origin of SSB. For example, the theory cannot deal with quasi-NGMs \cite{Weinberg:1972fn,Georgi:1975tz,PhysRevLett.105.230406} 
appearing when the order parameter manifold is larger than the symmetry of Lagrangian or Hamiltonian.
	
	In this paper, we formulate a theory of counting rule and dispersion relations for NGMs by the standard Gross-Pitaevskii (GP) and Bogoliubov theories \cite{Gross,Pitaevskii,Bogoliubov}, and settle the above-mentioned remaining problems. Compared to earlier formulations, our theory will be more down-to-earth and easy-to-access, since we do not need a sophisticated modern geometry. Though we illustrate our formulation by the specific multi-component GP model, our formalism can be extended to more general systems. \\ 
	\indent Here we overview the formalism of this paper. Let us consider the $ N $-component Bose-condensed systems in $ d $-dimensional spatial dimension, where the order parameter is given by $ \boldsymbol{\psi}(\boldsymbol{r})=(\psi_1(\boldsymbol{r}),\dots,\psi_N(\boldsymbol{r}))^T,\ \boldsymbol{r}\in\mathbb{R}^d$. The Bogoliubov quasiparticle wavefunctions in this system are described by a $ 2N $-component vector $ \boldsymbol{w}=(\boldsymbol{u}(\boldsymbol{r}),\boldsymbol{v}(\boldsymbol{r}))^T $ with $ \boldsymbol{u}(\boldsymbol{r})=(u_1(\boldsymbol{r}),\dots,u_N(\boldsymbol{r}))^T $ and $ \boldsymbol{v}(\boldsymbol{r})=(v_1(\boldsymbol{r}),\dots,v_N(\boldsymbol{r}))^T $
	\footnote{Here, we regard $ \boldsymbol{w} $ as the whole quasiparticle eigenvector and $ \boldsymbol{u}(\boldsymbol{r}) $ and $ \boldsymbol{v}(\boldsymbol{r}) $ as the expansion coefficients. So, we do not write the argument $ \boldsymbol{r} $ for $ \boldsymbol{w} $. More precisely speaking, it should be interpreted as $ \boldsymbol{w}=\sum_i\int\mathrm{d}\boldsymbol{r}\left[ u_i(\boldsymbol{r})\ket{i,\boldsymbol{r},u}+v_i(\boldsymbol{r})\ket{i,\boldsymbol{r},v}\right] $, where $ \big\{ \ket{i,\boldsymbol{r},\alpha} | 1\le i\le N, \boldsymbol{r}\in\mathbb{R}^d, \alpha=u,v \big\} $ is a basis for the Hilbert space of quasiparticles such that the completeness relation is given by $ 1=\sum_i\int\mathrm{d}\boldsymbol{r} \big(\ket{i,\boldsymbol{r},u}\bra{i,\boldsymbol{r},u}+\ket{i,\boldsymbol{r},v}\bra{i,\boldsymbol{r},v}\big) $. However, we shortly write it as  $ \boldsymbol{w}=(\boldsymbol{u}(\boldsymbol{r}),\boldsymbol{v}(\boldsymbol{r}))^T $, because the precise expression is lengthy.}. 
	Then, the generalized inner product between two quasiparticle wavefunctions $ \boldsymbol{w}_1=(\boldsymbol{u}_1(\boldsymbol{r}),\boldsymbol{v}_1(\boldsymbol{r}))^T $ and $ \boldsymbol{w}_2=(\boldsymbol{u}_2(\boldsymbol{r}),\boldsymbol{v}_2(\boldsymbol{r}))^T $ is defined by
	\begin{align}
		(\boldsymbol{w}_1,\boldsymbol{w}_2)_\sigma:=\int\mathrm{d}\boldsymbol{r}\left[ \boldsymbol{u}_1(\boldsymbol{r})^\dagger\boldsymbol{u}_2(\boldsymbol{r})-\boldsymbol{v}_1(\boldsymbol{r})^\dagger\boldsymbol{v}_2(\boldsymbol{r})  \right]. \label{eq:intro105}
	\end{align}
	This inner product naturally arises from the Bogoliubov transformation of bosonic field operators. It is well-known for the one-component case \cite{Fetter197267,DalfovoGiorginiPitaevskiiStringari,PethickSmith,FetterWalecka}. 
	In this paper, this product and the orthogonality based on it are called a \textit{$ \sigma $-inner product} and \textit{$ \sigma $-orthogonality}. They play a crucially important role to classify NGMs. 
	
	The classification scheme in our theory is summarized as follows. Let us suppose that the system breaks $ n $ continuous symmetries. First, we derive \textit{SSB-originated zero-mode solutions} $ \boldsymbol{w}_1,\dots,\boldsymbol{w}_n $ for zero-energy Bogoliubov equations (Subsec.~\ref{sec:symzero} for internal symmetries and Sec.~\ref{sec:spacetime} for spacetime symmetries). They have the form of
	\begin{align}
		\boldsymbol{w}_i=\begin{pmatrix} Q_i\boldsymbol{\psi}(\boldsymbol{r}) \\ -Q_i^*\boldsymbol{\psi}(\boldsymbol{r})^* \end{pmatrix},\quad i=1,\dots,n, \label{eq:intro110}
	\end{align}
	where $ Q_1,\dots,Q_n $ are generators of the Lie algebra for broken symmetries. 
	Then, whether a given zero mode becomes a ``seed'' of a type-I or type-II NGM is determined as follows:
	\begin{itemize}
		\item If a given zero mode solution  $ \boldsymbol{w}_i $ is  $ \sigma $-orthogonal to all $ \boldsymbol{w}_j $'s (including itself), it gives rise to a type-I mode.
		\item If there exists a pair of zero mode solutions $ \boldsymbol{w}_i $ and $ \boldsymbol{w}_j $ having a non-vanishing $ \sigma $-inner product, these two zero modes yield one type-II mode.
	\end{itemize} 
	On the basis of this criterion, the number of type-II modes can be counted by a Gram matrix defined as follows. Let us define an $ n\times n $  Gram matrix $ P $ with respect to the $ \sigma $-inner product, whose $ (i,j) $-component $ P_{ij} $ is given by
	\begin{align}
		P_{ij}=(\boldsymbol{w}_i,\boldsymbol{w}_j)_\sigma. \label{eq:introgram}
	\end{align}
	Then, the number $ n_{\text{II}} $ of type-II NGMs is given by
	\begin{align}
		n_{\text{II}}=\frac{1}{2}\operatorname{rank}P,
	\end{align}
	and the number of type-I NGMs is given by $ n_{\text{I}}=n -2n_{\text{II}} $.\\
	\indent While the above criterion using Eq.~(\ref{eq:intro105}) is the most general one, we can use a simplified treatment for the $ \sigma $-inner product when the order parameter $ \boldsymbol{\psi} $ has translational symmetries in some directions. In this case we can omit the integration with respect to these directions, since it only gives the factor of the system volume or the delta function. In particular, if  $ \boldsymbol{\psi} $ is spatially uniform, we need no integration (Secs.~\ref{sec:formulation}, \ref{sec:LAofsigmaprod}, and \ref{sec:perturb}). For the cases in which the order parameter leaves a translational symmetry in some direction, see Sec.~\ref{sec:spacetime}.
	
	In the case of the internal symmetry breaking (Secs.~\ref{sec:formulation}, \ref{sec:LAofsigmaprod}, and \ref{sec:perturb}), the above counting scheme based on the Gram matrix is completely equivalent to the counting rule using the WB matrix \cite{PhysRevD.84.125013,PhysRevLett.108.251602,Hidaka:2012ym}. However, we believe that our result will be more useful and general, because we can apply this method even for
	\begin{enumerate}[(i)]
		\item the case of spacetime symmetry breaking \textit{without any modification}. In particular, we do not need a sensitive mathematical treatment for cores of topological excitations to derive the central extension of a Lie algebra and non-commutativity of translation operators \cite{Kobayashi01022014,Kobayashi:2014xua,Watanabe:2014zza,Watanabe:2014pea}. The calculation of $ \sigma $-inner products is generally easier than the derivation of non-commutativity. 
		\item the case in which there exist accidental zero-energy solutions of the Bogoliubov equation $ \boldsymbol{w}_{n+1},\dots,\boldsymbol{w}_{n+m} $ which do not have an SSB origin. What we should do is only to add them in the list of zero modes and reconsider a new Gram matrix of size $ (n+m)\times(n+m) $.
	\end{enumerate}
	Thus, our formulation will give a simpler and unified method to count the number of type-II modes. The examples of (i) can be found in Sec. \ref{sec:spacetime}, in which we discuss Kelvin modes and ripplons. The general aspect of (ii) is discussed in Sec.~\ref{sec:perturb} and \ref{app:basischoice}. One fascinating example of (ii) is the quasi-NGMs in the spin-2 nematic phase (Subsec.~\ref{subsec:spin2}). 
	
	Note that our classification scheme is purely based on the dispersion relations of gapless modes, and different from the type-A,B classification proposed in Ref.~\cite{PhysRevLett.108.251602}. If the NGMs are classified on the basis of the pairing of the degrees of freedom arising from the SSB, the classification of Ref.~\cite{PhysRevLett.108.251602} is still valid even in the presence of additional zero-mode solutions. 

	We also perform the complete block-diagonalization of the WB matrix \cite{PhysRevD.84.125013} in Subsec. \ref{sec:bdwbm}. We clarify that, unlike the original assumption by Nielsen and Chadha \cite{Nielsen:1975hm,PhysRevD.84.125013}, the two zero modes yielding a type-II NGM are not necessarily linearly dependent. Furthermore, we point out that the linear independence of these two zero modes makes the coefficient of the quadratic dispersion relation larger than that of a free particle (Sec.~\ref{sec:perturb}). The illustrative example for this can be found in the F and H phases of the spin-3 BEC (Subsec.~\ref{subsec:spin3}). These findings are new and overlooked in preceding works.

	As a by-product of constructing the whole theory of NGMs, we also provide a self-contained compact summary for mathematics of finite-dimensional Bogoliubov equations and Bogoliubov transformations in Sec.~\ref{sec:LAofsigmaprod}. The \textit{Bogoliubov-hermitian and Bogoliubov-unitary matrices} defined in this section are equivalent to the Bogoliubov equations and Bogoliubov transformations in finite-dimensional systems, respectively. In particular, we would like to spotlight Colpa's results \cite{Colpa1986,Colpa1986II} for positive-semidefinite cases, which become a foundation to formulate the standard form of zero-energy Bogoliubov equations and the perturbation theory in Sec.~\ref{sec:perturb}. The proofs are a little simplified compared to Colpa's original ones. Several fundamental linear-algebraic theorems on the existence of the basis and on the diagonalizability will be useful not only in the problem of NGMs but also in any kind of problem in Bose-condensed systems. 
	
	Several remaining and related issues are discussed in Subsec.~\ref{subsec:discussion} and corresponding Appendices. In Subsec.~\ref{subsec:bigjordan}, we show that the system is unstable if a zero-wavenumber Bogoliubov matrix does not satisfy the positive-semidefinite assumption (Sec.~\ref{sec:LAofsigmaprod}) and has a large Jordan block. The corresponding perturbation theory for a large Jordan block is given in \ref{app:jordan}. We give a general treatment of ``massive'' NGMs \cite{PhysRevLett.110.011602,PhysRevLett.111.021601,Nicolis,2014arXiv1406.6271} in Bogoliubov theory in Subsec.~\ref{eq:subsecexmass} and \ref{app:exmassive}. The resulting perfect tunneling properties \cite{Kovrizhin,Kagan,DanshitaYokoshiKurihara,doi:10.1143/JPSJ.77.013602,PhysRevA.78.063611,PhysRevA.83.033627,PhysRevA.83.053624,PhysRevA.84.013616} of these NGMs are discussed in Subsec.~\ref{subsec:tunneling}. 

	This paper is organized as follows. Section~\ref{sec:formulation} is devoted to a fundamental setup. In Subsec.~\ref{subsec:GP}, we introduce the multicomponent GP and Bogoliubov equations and define the problem. 
	In Subsec.~\ref{sec:symzero}, we derive SSB-originated zero-mode solutions, which become a central object in this paper. In Subsec.~\ref{sec:bdwbm}, we derive a block-diagonalized form of the WB matrix. 
	In Sec.~\ref{sec:LAofsigmaprod}, we introduce Bogoliubov-hermitian and Bogoliubov-unitary matrices, $ \sigma $-inner products, and  $ \sigma $-orthogonality, and provide several linear-algebraic theorems. 
	Section~\ref{sec:perturb} includes one of main results of this paper; we formulate a perturbation theory for a finite momentum $ k $, and we derive the dispersion relations of type-I and type-II NGMs. 
	In Sec.~\ref{sec:examples}, we provide examples from spinor BECs to illustrate the general results. 
	In Sec.~\ref{sec:spacetime}, as an example of spacetime symmetry breaking, we treat Kelvin modes in one-component BECs and ripplons in two-component BECs. We show that these NGMs have type-II dispersion relations in finite-size systems \cite{Kobayashi01022014,PhysRevA.88.043612} and that the main criterion based on $ \sigma $-orthogonality does not change even in these cases. We also give a perspective for non-integer dispersion relations in infinite-size systems. 
	Section~\ref{sec:summary} is devoted to summary and discussions. In \ref{app:quantum}, we show that our theory is also applicable to the quantum field theory up to the Bogoliubov approximation. In \ref{app:classicalhamilton}, we show the equivalence between the Bogoliubov transformation group and the symplectic group. \ref{app:proof} and \ref{app:basischoice} provide the proofs of the theorems appearing in the main sections. \ref{app:typeI2nd} complements the perturbative calculation of Sec.~\ref{sec:perturb}, where formulae for higher-order terms of eigenvectors and eigenvalues are given. In \ref{app:jordan}, we formulate a perturbation theory when a matrix has a large Jordan block. In \ref{app:exmassive}, we give a general result on ``massive'' NGMs.
\section{Setup of the problem}\label{sec:formulation}
\subsection{Hamiltonian for multicomponent Gross-Pitaevskii field}\label{subsec:GP}
	\indent To make the story simple, we construct a theory in classical field theory. However, as shown in \ref{app:quantum}, the Bogoliubov equation for small oscillations of classical waves is equivalent to that for eigenstates of Bogoliubov quasiparticles in quantum field theory. So our result is also applicable to quantum many body systems up to the Bogoliubov approximation.\\
	\indent We start with the following $N$-component GP (or nonlinear Schr\"odinger) system whose Hamiltonian is given by $ \mathcal{H}=\int h\mathrm{d}x $, where
	\begin{align}
		h=\sum_{i=1}^N\nabla \psi_i^* \nabla \psi_i +F(\boldsymbol{\psi}^*,\boldsymbol{\psi}). \label{eq:GP00}
	\end{align}
	Here we write $ \boldsymbol{\psi}=(\psi_1,\dots,\psi_N)^T $, and $ F(\boldsymbol{\psi}^*,\boldsymbol{\psi}) $ is an abbreviation of $ F(\psi_1^*,\dots,\psi_N^*,\psi_1,\dots,\psi_N) $. The spatial dimension is arbitrary. Since the Hamiltonian must be real,  $ F=F^* $ holds. 
	Though we restrict our formulation to the model (\ref{eq:GP00}), the techniques constructed in this paper can be soon generalized to arbitrary models described by the classical Hamilton mechanics. 
	The multicomponent nonlinear Schr\"odinger equation, which is also called the GP equation in condensed matter theory, is given by
	\begin{align}
		\mathrm{i}\frac{\partial }{\partial t}\psi_i&=\frac{\delta \mathcal{H}}{\delta \psi_i^*}=-\nabla^2\psi_i+\frac{\partial F}{\partial \psi_i^*}, \label{eq:GP01}\\
		-\mathrm{i}\frac{\partial }{\partial t}\psi_i^*&=\frac{\delta \mathcal{H}}{\delta \psi_i}=-\nabla^2\psi_i^*+\frac{\partial F}{\partial \psi_i}. \label{eq:GP02}
	\end{align}
	The linearized waves in the neighbor of a solution of the above equation can be derived by letting $ \psi_i=\psi_i+\delta\psi_i $ and ignoring higher-order terms for $ \delta\psi_i $'s. Writing $ (u_i,v_i)=(\delta\psi_i,\delta\psi_i^*) $, the resultant equations are:
	\begin{align}
		\mathrm{i}\frac{\partial }{\partial t}u_i &= -\nabla^2 u_i+\frac{\partial^2F}{\partial\psi_i^*\partial\psi_j}u_j+\frac{\partial^2F}{\partial\psi_i^*\partial\psi_j^*}v_j, \label{eq:Bogo01}\\
		-\mathrm{i}\frac{\partial }{\partial t}v_i &= -\nabla^2 v_i+\frac{\partial^2F}{\partial\psi_i\partial\psi_j^*}v_j+\frac{\partial^2F}{\partial\psi_i\partial\psi_j}u_j, \label{eq:Bogo02}
	\end{align}
	where the repeated index implies the summation. 
	These equations are equivalent to the Bogoliubov equations appearing in the quantum field theory (\ref{app:quantum}). Note that the different convention  $ v_i=-\delta\psi_i^* $ is also widely used (e.g., \cite{Fetter197267,PethickSmith}). If $ F $ is an analytic function with respect to $ \psi_i $'s and  $ \psi^*_i $'s, we can show
	\begin{align}
		\left( \frac{\partial^2F}{\partial\psi_i^*\partial\psi_j} \right)^*=\frac{\partial^2F}{\partial\psi_i\partial\psi_j^*},\quad \left( \frac{\partial^2F}{\partial\psi_i^*\partial\psi_j^*} \right)^*=\frac{\partial^2F}{\partial\psi_i\partial\psi_j}.
	\end{align}
	Thus, if we write
	\begin{align}
		F_{ij}:=\frac{\partial^2F}{\partial\psi_i^*\partial\psi_j},\quad G_{ij}:=\frac{\partial^2F}{\partial\psi_i^*\partial\psi_j^*} \label{eq:FijGij}
	\end{align}
	then 
	\begin{align}
		F^\dagger=F,\quad G^T=G \label{eq:FijGij2}
	\end{align}
	holds. Introducing a vectorial notation $ \boldsymbol{u}=(u_1,\dots,u_N)^T, \boldsymbol{v}=(v_1,\dots,v_N)^T $, the Bogoliubov equations can be rewritten as
	\begin{align}
		\mathrm{i}\partial_t\boldsymbol{u}=-\nabla^2\boldsymbol{u}+F\boldsymbol{u}+G\boldsymbol{v} ,\quad -\mathrm{i}\partial_t\boldsymbol{v}=-\nabla^2 \boldsymbol{v}+F^*\boldsymbol{v}+G^*\boldsymbol{u}.
	\end{align}
	In particular, when $ \psi_i $'s are stationary and spatially  uniform, assuming the solution of the form $ (\boldsymbol{u},\boldsymbol{v})\propto \mathrm{e}^{\mathrm{i}(\boldsymbol{k}\cdot\boldsymbol{x}-\epsilon t)} $, we obtain
	\begin{align}
		\begin{pmatrix} k^2+F & G \\ -G^* & -k^2-F^* \end{pmatrix}\begin{pmatrix}\boldsymbol{u} \\ \boldsymbol{v} \end{pmatrix}=\epsilon\begin{pmatrix}\boldsymbol{u} \\ \boldsymbol{v} \end{pmatrix} \label{eq:uniformstatBogo}
	\end{align}
	with $ k=|\boldsymbol{k}| $. Thus the determination of dispersion relations of linear waves is reduced to the eigenvalue problem of this $ 2N\times 2N $ matrix. 
	However, what is difficult is that this matrix is \textit{not} hermitian and is not diagonalizable in general. Therefore, in Sec. \ref{sec:LAofsigmaprod}, we provide a self-contained summary for linear algebra necessary to treat the matrix of this type. 
\subsection{SSB-originated zero-mode solutions}\label{sec:symzero}
	Let us consider the case where the Hamiltonian density $ h $ has a symmetry of continuous group. Let  $ G $ be a subgroup of  $ N\times N $ invertible matrices  $ GL(N,\mathbb{C}) $ and assume that for every $ U\in G $, the following holds:
	\begin{align}
		h(U^*\boldsymbol{\psi}^*,U\boldsymbol{\psi})=h(\boldsymbol{\psi}^*,\boldsymbol{\psi}). \label{eq:symham}
	\end{align}
Since the kinetic term  $ \sum_i\nabla\psi_i^*\nabla\psi_i $ must be invariant under this operation, such $ G $ must be a subgroup of the unitary group $ U(N) $. We can immediately prove that
	\begin{align}
		\boldsymbol{\psi}  \text{ is a solution of Eqs. (\ref{eq:GP01}) and (\ref{eq:GP02}).} \quad\leftrightarrow\quad  \boldsymbol{\phi}=U\boldsymbol{\psi} \text{ is also a solution.} \label{eq:symsol}
	\end{align}
	The proof is as follows. We first note that $ F $ also has the same symmetry with $ h $:
	\begin{align}
		F(\boldsymbol{\phi}^*,\boldsymbol{\phi})=F(\boldsymbol{\psi}^*,\boldsymbol{\psi}),
	\end{align}
	where we write $ \boldsymbol{\phi}=U\boldsymbol{\psi},\ \boldsymbol{\phi}^*=U^*\boldsymbol{\psi}^* $. Differentiating both sides of this equation by  $ \psi_i $ or $ \psi_i^* $, and using the unitarity $ U_{ki}U_{ji}^*=\delta_{jk} $, we obtain
	\begin{align}
		U_{ki}\frac{\partial F(\boldsymbol{\psi}^*,\boldsymbol{\psi})}{\partial \psi_i^*}=\frac{\partial F(\boldsymbol{\phi}^*,\boldsymbol{\phi})}{\partial \phi_k^*},\quad U_{ki}^*\frac{\partial F(\boldsymbol{\psi}^*,\boldsymbol{\psi})}{\partial \psi_i}=\frac{\partial F(\boldsymbol{\phi}^*,\boldsymbol{\phi})}{\partial \phi_k}.
	\end{align}
	Multiplying both sides of Eq. (\ref{eq:GP01}) (resp. Eq. (\ref{eq:GP02})) by $ U_{ki} $ (resp. $  U_{ki}^* $), and using the above relations, we obtain what we wanted.\\ 
	\indent Now, let us derive SSB-originated zero mode solutions. Let $ U=U(\alpha) $ be an element of $ G $ parametrized by one real parameter $ \alpha $ such that $ U(0)=I_N $. 
	Then $ \boldsymbol{\phi}=U(\alpha)\boldsymbol{\psi} $ become a one-parameter family of solutions to Eqs. (\ref{eq:GP01}) and (\ref{eq:GP02}), Differentiating Eqs. (\ref{eq:GP01}) and (\ref{eq:GP02}) by $ \alpha $ with  $ \psi_i $'s replaced by $ \phi_i $'s, and setting $ \alpha=0 $ after differentiation, we obtain a particular solution of Bogoliubov equations (\ref{eq:Bogo01}) and (\ref{eq:Bogo02}): 
	\begin{align}
		\begin{pmatrix}\boldsymbol{u} \\ \boldsymbol{v} \end{pmatrix}=\begin{pmatrix}U_{\alpha}\boldsymbol{\psi} \\ U_{\alpha}^*\boldsymbol{\psi}^*\end{pmatrix},\quad U_\alpha:=\left.\frac{\partial U(\alpha)}{\partial \alpha}\right|_{\alpha=0}.
	\end{align}
	In particular, let $ Q $ be a generator of $ G $, and let $ U(\alpha)=\exp(\mathrm{i}\alpha Q) $. Then the solution becomes $ (\boldsymbol{u},\boldsymbol{v})=(\mathrm{i}Q\boldsymbol{\psi},-\mathrm{i}Q^*\boldsymbol{\psi}^*) $. This solution is nonvanishing if $ \boldsymbol{\psi} $ breaks the symmetry of $ Q $, i.e.,  $ \mathrm{e}^{\mathrm{i}\alpha Q}\boldsymbol{\psi}\ne\boldsymbol{\psi} $.   If $ \boldsymbol{\psi} $ does not depend on spacetime, this solution gives a solution of the stationary Bogoliubov equation (\ref{eq:uniformstatBogo}) with $ \epsilon=k=0 $. Henceforth we call this solution an \textit{SSB-originated zero-mode solution}. \\ 
	\indent \underline{\textit{Remark}}. The zero-mode solutions shown above exist if the solution space satisfy the property (\ref{eq:symsol}), even though the Hamiltonian density $ h $ does not have a group symmetry (\ref{eq:symham}). When such a symmetry is spontaneously broken, there appear gapless modes, which are called \textit{quasi-NGMs} \cite{Weinberg:1972fn,Georgi:1975tz}. The quasi-NGMs in spin-2 nematic phase due to $ SO(5) $ symmetry \cite{PhysRevLett.105.230406} can be explained by this kind of symmetry (See Sec.~\ref{sec:examples}).
\subsection{Block-diagonalization of the WB matrix}\label{sec:bdwbm}
	Let $ n $ be a dimension of the symmetry group $ G $ for the Hamiltonian density $ h $, and let $ Q_1,\dots,Q_n $ be a basis for the corresponding Lie algebra. Since $ G $ is a subgroup of the unitary group, $ Q_1,\dots,Q_n $ must be hermitian. Unlike the preceding works \cite{Nielsen:1975hm}, we do not assume $ n<N $. (Note, for example, that the spin-1 BEC \cite{JPSJ.67.1822,Ho:1998zz} has three components but the symmetry group $ U(1)\times SO(3) $ is four-dimensional.) 
	In the previous subsection, the SSB-originated zero-mode solution
	\begin{align}
		\begin{pmatrix}\boldsymbol{u} \\ \boldsymbol{v}\end{pmatrix}=\begin{pmatrix}Q_j\boldsymbol{\psi} \\ -Q_j^*\boldsymbol{\psi}^* \end{pmatrix} \quad (j=1,\dots,n) \label{eq:ssbzero}
	\end{align}
	is shown to be a solution of the Bogoliubov equation (\ref{eq:uniformstatBogo}) with  $ \epsilon=k=0 $. The counting rule for NGMs is described by the WB matrix  $ \rho $, whose  $ (i,j) $-components are defined by \cite{PhysRevD.84.125013,PhysRevLett.108.251602,Hidaka:2012ym}
	\begin{align}
		\rho_{ij}=\mathrm{i}\boldsymbol{\psi}^\dagger[Q_i,Q_j]\boldsymbol{\psi}. \label{eq:wbmatrix} 
	\end{align}
	In this subsection, we show that the block-diagonalized form of $ \rho $ is obtained with properly defined orthogonal relations between zero modes. We also give a few remarks on an assumption used in preceding works. In our theory, it is not indispensable to determine the basis which block-diagonalizes the WB matrix, but such basis will be convenient for the perturbation theory in Sec.~\ref{sec:perturb}, because we can skip perturbative calculations for degenerate eigenvalues. \\ 
	\indent The Lie algebra is a vector space over the real field $ \mathbb{R} $, and $ \{ Q_1,\dots, Q_n \}  $ gives one basis. Let us write this space as $ V $:
	\begin{align}
		V=\left\{ \sum_{i=1}^n r_iQ_i, r_i\in\mathbb{R} \right\} .
	\end{align}
	Note that the coefficient field is \textit{not} $\mathbb{C}$. 
	In what follows, it is important to discuss orthogonality and linear independence by specifying the field explicitly. For example,  $ (1,0) $ and $ (\mathrm{i},0) $ are linearly dependent over $ \mathbb{C} $ but independent over $ \mathbb{R} $. Let $ \boldsymbol{a}=(a_1,\dots,a_N)^T,\ \boldsymbol{b}=(b_1,\dots,b_N)^T, a_i,b_i\in \mathbb{C} $, be vectors. If we consider that the coefficient field is $ \mathbb{C} $, we use the hermitian inner product
	\begin{align}
		(\boldsymbol{a},\boldsymbol{b})_{\mathbb{C}}:=\sum_i a_i^*b_i=\boldsymbol{a}^\dagger\boldsymbol{b}.
	\end{align}
	If we regard the coefficient field as $ \mathbb{R} $, considering the mapping $ \boldsymbol{a}\rightarrow (\operatorname{Re}a_1,\dots,\operatorname{Re}a_N,\operatorname{Im}a_1,\dots,\operatorname{Im}a_N)^T \in \mathbb{R}^{2N} $, we define the inner product by
	\begin{align}
		(\boldsymbol{a},\boldsymbol{b})_{\mathbb{R}}:=\sum_i \left[(\operatorname{Re}a_i)(\operatorname{Re}b_i)+(\operatorname{Im}a_i)(\operatorname{Im}b_i)\right].
	\end{align}
	We can immediately see
	\begin{align}
		\operatorname{Re}(\boldsymbol{a},\boldsymbol{b})_{\mathbb{C}}=(\boldsymbol{a},\boldsymbol{b})_{\mathbb{R}}.
	\end{align}
	\indent Let $ W $ be a subspace of $ V $ whose element annihilates $ \boldsymbol{\psi} $: 
	\begin{align}
		W=\left\{ Q\in V \text{ such that } Q\boldsymbol{\psi}=\boldsymbol{0}  \right\}. \label{eq:nullspace}
	\end{align}
	$ W $ represents the unbroken symmetry of $ \boldsymbol{\psi} $, since $ Q\in W $ implies $ \mathrm{e}^{\mathrm{i}Q}\boldsymbol{\psi}=\boldsymbol{\psi} $. Let $ P $ be a natural map from $ V $ to the quotient space $ V/W $, and let $ t_1,\dots,t_m $ be a basis for $ V/W $, where $ m=n-\dim W $. Furthermore, let us take $ T_1,\dots, T_m \in V $ such that $ PT_i=t_i $. There remains an arbitrariness for each $ T_i $ to add an element of $ W $, but it does not affect the following argument. By definition, $ T_1\boldsymbol{\psi},\dots, T_m\boldsymbol{\psi} $ are linearly independent over $ \mathbb{R} $. 
	Therefore, we can choose an orthonormal basis $ \{ T_1\boldsymbol{\psi},\dots,T_m\boldsymbol{\psi} \} $ which satisfies
	\begin{align}
		(T_i\boldsymbol{\psi},T_j\boldsymbol{\psi})_{\mathbb{R}}=\operatorname{Re}(T_i\boldsymbol{\psi},T_j\boldsymbol{\psi})_{\mathbb{C}}=\delta_{ij}. \label{eq:orthoinR}
	\end{align}
	Once such basis is chosen, the orthonormality is invariant under the real orthogonal transformation, that is, if $ R $ is an $ m\times m $ real orthogonal matrix,  $ T'_i=R_{ij}T_j $ also satisfies orthonormality. \\ 
	\indent Let  $ \rho $ be an $ m\times m $ matrix whose $ (i,j) $-components are given by
	\begin{align}
		\rho_{ij}=\mathrm{i}\boldsymbol{\psi}^\dagger[T_i,T_j]\boldsymbol{\psi}. \label{eq:WB002}
	\end{align}
	Since $ \mathrm{i}[T_i,T_j] $ is hermitian,  $ \rho $ is real and skew-symmetric. Therefore, by an appropriate orthogonal transformation, it can be block-diagonalized as
	\begin{align}
		R\rho R^{-1}=M_1\oplus\dotsb\oplus M_s\oplus O_r,\quad M_i=\begin{pmatrix} 0 & -\mu_i \\ \mu_i & 0 \end{pmatrix}, \label{eq:WB003}
	\end{align}
	where  $ \mu_i>0 $,  $ 2s+r=m $, and  $ O_r $ is a zero matrix of size $ r $. \\
	\indent Since the real orthogonal matrix satisfies $ R_{ik}R_{jk}=\delta_{ij} $, each component of the above equation is given by
	\begin{align}
		(R\rho R^{-1})_{ij}=\mathrm{i}\boldsymbol{\psi}^\dagger[ R_{ik}T_k, R_{jl}T_l ]\boldsymbol{\psi}.
	\end{align}
	Thus, if we choose the basis as $ T_i'=R_{ik}T_k $, then $ \rho $ has the block-diagonalized form. In this basis, let us write the first $ 2s $  $ T_i' $'s as $ X_1^{(1)},X_1^{(2)},\dots, X_s^{(1)},X_s^{(2)} $ and the rest as $ Y_1,\dots, Y_r $. 
	Then we obtain $ \boldsymbol{\psi}^\dagger[Y_i, Y_j]\boldsymbol{\psi}=0 $ and $ \boldsymbol{\psi}^\dagger[Y_i, X_j^{(\alpha)}]\boldsymbol{\psi}=0 $, which are equivalent to  $ \operatorname{Im}(Y_i\boldsymbol{\psi},Y_j\boldsymbol{\psi})_{\mathbb{C}}=0 $ and $ \operatorname{Im}(Y_i\boldsymbol{\psi},X_j^{(\alpha)}\boldsymbol{\psi})_{\mathbb{C}}=0 $. Combining these relations and the orthonormal relation \textit{over} $ \mathbb{R} $ [Eq.~(\ref{eq:orthoinR})], 
	we obtain the orthonormal relations \textit{over} $ \mathbb{C} $: 
	\begin{align}
		(Y_i\boldsymbol{\psi},Y_j\boldsymbol{\psi})_{\mathbb{C}}&=\delta_{ij}, \label{eq:zerortho01} \\
		(Y_i\boldsymbol{\psi},X_j^{(\alpha)}\boldsymbol{\psi})_{\mathbb{C}}&=0.
	\end{align}
	By the same argument for $ X_i^{(\alpha)} $, we also obtain
	\begin{align}
		(X_i^{(\alpha)}\boldsymbol{\psi},X_j^{(\beta)}\boldsymbol{\psi})_{\mathbb{C}}&=\delta_{ij}\left( \delta_{\alpha\beta}+\frac{\mathrm{i}\mu_i}{2}\epsilon_{\alpha\beta} \right), \label{eq:zerortho05}
	\end{align}
	where we do not take a summation over $ i $ in the last expression, and $ \epsilon_{\alpha\beta} $ is defined by $ \epsilon_{11}=\epsilon_{22}=0,\ \epsilon_{12}=-\epsilon_{21}=1 $.  The nontrivial inner product appears only for the pair of  $ X_i^{(1)} $ and $  X_i^{(2)} $. 
	Thus we have obtained the block-diagonalized form of $ \rho $ with orthogonal relations (\ref{eq:zerortho01})-(\ref{eq:zerortho05}). As we will show in Sec.~\ref{sec:perturb}, $ X_i^{(\alpha)} $'s  correspond to type-II NGMs and  $ Y_i $'s correspond to type-I NGMs. So we obtain  $ r $ type-I and $ s $ type-II NGMs, consistent with Refs.~\cite{PhysRevD.84.125013,PhysRevLett.108.251602,Hidaka:2012ym}. \\
	\indent We give a few remarks on  preceding works. In Refs. \cite{Nielsen:1975hm,PhysRevD.84.125013}, the number of type-II modes is identified as the number of linearly dependent pair of zero modes. However, this is not always the case and two zero modes which become a ``seed'' of type-II NGM are linearly independent in general. As shown above, block-diagonalization of the matrix $ \rho $ is possible without using this assumption. 
	The counterexample can be found in spin-3 BECs \cite{PhysRevLett.96.190405,PhysRevA.84.053616}. For the spin-3 F phase $ \boldsymbol{\psi}=(0,1,0,0,0,0,0)^T $, two zero modes related to the type-II excitation are given by $ F_x\boldsymbol{\psi}\propto(\sqrt{6},0,\sqrt{10},0,0,0,0)^T $ and $ F_y\boldsymbol{\psi}=(-\mathrm{i}\sqrt{6},0,\mathrm{i}\sqrt{10},0,0,0,0)^T $, where $ F_x $ and $ F_y $ are matrices of spin-3. These two vectors are linearly independent. On the other hand, in the ferromagnetic spin-1 BEC \cite{JPSJ.67.1822,Ho:1998zz}, the linear dependence holds over $\mathbb{C}$, since $ F_x(1,0,0)^T\propto(0,1,0)^T $ and $ F_y(1,0,0)^T\propto(0,\mathrm{i},0)^T $. Roughly speaking, the linear dependence of two zero modes for type-II NGM is satisfied when some components of the weight vector for $ \boldsymbol{\psi} $ have the highest value. As we will show in Secs. \ref{sec:perturb} and \ref{sec:examples}, the coefficients of the dispersion relation in the case of linearly independent modes are different from those in the case of linearly dependent modes. 
\section{Mathematics of Bogoliubov equations and Bogoliubov transformations}\label{sec:LAofsigmaprod}
	In this section, in order to solve the eigenvalue problem of the bosonic-Bogoliubov type matrix [Eq.~(\ref{eq:uniformstatBogo})], we provide a few theorems from linear algebra. We write the transpose, complex conjugate, and hermitian conjugate of the matrix $ X $ as $ X^T,\ X^*, $ and $ X^\dagger $.\\
	\indent Let us write
	\begin{align}
		\sigma=\sigma_N:=\begin{pmatrix} I_N & \\ & -I_N \end{pmatrix},\quad \tau=\tau_N:=\begin{pmatrix}&I_N \\ I_N & \end{pmatrix}.
	\end{align}
	Then, let us call the  $ 2N\times 2N $ matrices  $ H $ and $ U $  \textit{Bogoliubov-hermitian (B-hermitian)}  and \textit{Bogoliubov-unitary (B-unitary)}, if they satisfy
	\begin{align}
		H=\sigma H^\dagger \sigma,\quad H=-\tau H^* \tau, \label{eq:Bhermitian} \\
		U^{-1}=\sigma U^\dagger \sigma,\quad U=\tau U^*\tau. \label{eq:Bunitary}
	\end{align}
	A B-hermitian matrix can be regarded as an infinitesimal B-unitary matrix, because $ \mathrm{e}^{\mathrm{i}H} $ is B-unitary. This relation is similar to that between hermitian and unitary matrices. There are not a few analogies between the theory of B-hermitian/B-unitary matrices and that of hermitian/unitary matrices (see, e.g., Subsec.~\ref{subsec:bhbu}). \\ 
	\indent The Bogoliubov equation for bosonic systems is generally described by a B-hermitian matrix. A B-unitary matrix defines a Bogoliubov transformation as follows. If $ \hat{a}_1,\dots,\hat{a}_N $ are annihilation operators satisfying the bosonic commutation relations $ [\hat{a}_i,\hat{a}_j^\dagger]=\delta_{ij} $ and $ [\hat{a}_i,\hat{a}_j]=0 $, the operators  $ \hat{b}_1,\dots,\hat{b}_N $ defined by
	\begin{align}
		\boldsymbol{\hat{b}}&=U\boldsymbol{\hat{a}}, \\
		\boldsymbol{\hat{a}}&:=(\hat{a}_1,\dots,\hat{a}_N,\hat{a}_1^\dagger,\dots,\hat{a}_N^\dagger)^T,\\
		\boldsymbol{\hat{b}}&:=(\hat{b}_1,\dots,\hat{b}_N,\hat{b}_1^\dagger,\dots,\hat{b}_N^\dagger)^T
	\end{align}
	also satisfy the same commutation relation. Thus, the diagonalization problem of the quadratic bosonic Hamiltonian in quantum field theory by Bogoliubov transformation is equivalent to the diagonalization of the B-hermitian matrix by B-unitary matrix. Note that B-hermitian matrices are not always diagonalizable, because of the existence of zero-norm eigenvectors, which include, for example, SSB-originated zero modes and unstable modes with complex eigenvalues (see, e.g., Ref.~\cite{MineOkumuraSUnagaYamanaka}).\\ 
	\indent We also note that B-hermitian and B-unitary matrices defined here are equivalent to ``hamiltonian'' and symplectic matrices in classical mechanics\footnote{We add the double quotation mark for ``hamiltonian'' matrices to emphasize that they are \textit{not} hermitian matrices. See also \ref{app:classicalhamilton}.}. Their relation is summarized in \ref{app:classicalhamilton}. Thus, the group of Bogoliubov transformations is equivalent to the symplectic group. Through this point of view, the classification of normal forms for ``hamiltonian'' matrices was already completed long time ago, and a compact summary by Galin\cite{Galin} based on Williamson's work \cite{Williamson} is available in the famous book by Arnold (Appendix 6 of Ref. \cite{Arnold}).\\ 
	\indent The normal forms shown in the above-mentioned book suggest that the ``hamiltonian'' matrices --- or B-hermitian matrices in this paper --- can have arbitrarily large Jordan blocks. However, as shown by Colpa \cite{Colpa1986,Colpa1986II}, if  $ \sigma H $ is positive-semidefinite, we can obtain powerful theorems on the final block-diagonal form. Since the positive-semidefiniteness means the stability of the system (Theorem~\ref{prp:colpa000} and the text below it), if we are only interested in the case where the background condensate is stable, this assumption covers sufficiently many physically relevant situations. 
\subsection{$ \sigma $-inner product and $ \sigma $-orthonormal basis}\label{subsec:sigmabasis}
	We first introduce a $ \sigma $-inner product and $ \sigma $-orthonormal basis. 
	For $ \sigma=\sigma_N $ and $ \boldsymbol{x},\boldsymbol{y} \in \mathbb{C}^{2N} $, we define a \textit{$ \sigma $-inner product} by
	\begin{align}
		(\boldsymbol{x},\boldsymbol{y})_{\sigma}=\boldsymbol{x}^\dagger\sigma\boldsymbol{y}.
	\end{align}
	If $ (\boldsymbol{x},\boldsymbol{y})_\sigma=0 $,  $ \boldsymbol{x} $ and $ \boldsymbol{y} $ are said to be \textit{$ \sigma $-orthogonal}. If $ (\boldsymbol{x},\boldsymbol{x})_\sigma $ is positive, negative, and zero,  $ \boldsymbol{x} $ is said to have \textit{positive}, \textit{negative}, and \textit{zero norm}, respectively. It is also called a \textit{positive-norm}, \textit{negative-norm}, and \textit{zero-norm vector}, respectively. The positive- and negative-norm vectors are called \textit{finite-norm} vectors. If a finite-norm vector satisfies $ (\boldsymbol{x},\boldsymbol{x})_\sigma=\pm1 $, it is said to be normalized.  \\
	\indent A set of linearly independent $ p+q+t=r(\le 2N) $ vectors $ \{ \boldsymbol{x}_1,\dots,\boldsymbol{x}_p,\boldsymbol{y}_1,\dots,\boldsymbol{y}_q,\boldsymbol{z}_1,\dots,\boldsymbol{z}_t \}$ with the following properties is said to be a \textit{$ \sigma $-orthonormal system}:
	\begin{align}
		(\boldsymbol{x}_i,\boldsymbol{x}_j)_\sigma&=\delta_{ij},\quad (\boldsymbol{y}_i,\boldsymbol{y}_j)_\sigma=-\delta_{ij},\quad (\boldsymbol{z}_i,\boldsymbol{z}_j)_\sigma=0, \nonumber \\
		(\boldsymbol{x}_i,\boldsymbol{y}_j)_\sigma&=(\boldsymbol{x}_i,\boldsymbol{z}_j)_\sigma=(\boldsymbol{y}_i,\boldsymbol{z}_j)_\sigma=0.
	\end{align}
	If a basis of an $ r $-dimensional subspace $ V $ of $ \mathbb{C}^{2N} $ satisfies the above relations, the basis is said to be a \textit{$ \sigma $-orthonormal basis of $ V $}. We can prove the following fundamental properties: 
	\begin{enumerate}[(i)]
		\item For any subspace $ V $, there exists a $ \sigma $-orthonormal basis.
		\item A subspace $ W $ of $ V $ spanned by zero-norm vectors $ \boldsymbol{z}_1,\dots,\boldsymbol{z}_t $ does not depend on a choice of basis.
		\item $ p $ and $ q $ are uniquely determined by $ V $. (However, subspaces spanned by positive- and negative-norm vectors depend on a choice of basis.)
		\item $ p,q,t\le N $.
		\item If $ r=2N $, (i.e., if $ V=\mathbb{C}^{2N} $),  $ p=q=N $ and $ t=0 $.
	\end{enumerate}
	\indent If a $ \sigma $-orthonormal system (basis) has the form $ \{\boldsymbol{x}_1,\dots,\boldsymbol{x}_p,\tau\boldsymbol{x}_1^*,\dots,\tau\boldsymbol{x}_p^* \} $ with all $ \boldsymbol{x}_i $'s having positive norm, we call it \textit{B-orthonormal system (basis)}.\\
	\indent The following proposition guarantees that any $ \sigma $-orthonormal (B-orthonormal) system without zero-norm vectors can be extended to a $ \sigma $-orthonormal (B-orthonormal) basis of $ \mathbb{C}^{2N} $.
	\begin{prp}\label{prpsigmabasis2} 
	Let $ \{ \boldsymbol{x}_1,\dots,\boldsymbol{x}_p,\boldsymbol{y}_1,\dots,\boldsymbol{y}_q \} \ (p,q\le N) $ be a $ \sigma $-orthonormal system with $ \boldsymbol{x}_i $'s having positive norm and $ \boldsymbol{y}_i $'s negative norm. By adding new  $ N-p $ positive-norm vectors and $ N-q $ negative-norm vectors to this system, one can construct a $ \sigma $-orthonormal basis for $ \mathbb{C}^{2N} $.
	 In particular, if $ p=q $ and $ \boldsymbol{y}_i=\tau\boldsymbol{x}_i^* $, i.e., the system is B-orthonormal, it can be extended to a B-orthonormal basis  $ \{\boldsymbol{x}_1,\dots,\boldsymbol{x}_N,\tau\boldsymbol{x}_1^*,\dots,\tau\boldsymbol{x}_N^* \} $  for $ \mathbb{C}^{2N} $.  
	\end{prp}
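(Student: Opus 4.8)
The plan is to treat the index space $\mathbb{C}^{2N}$ as an indefinite inner-product space and complete the given system by passing to its $\sigma$-orthogonal complement, using the listed properties (i)--(v) to pin down the counts. Write $S$ for the span of the given system. Because no zero-norm vectors are present, the Gram matrix of the $\sigma$-product on $S$ is $\mathrm{diag}(I_p,-I_q)$, which is nondegenerate. The first key step is therefore to establish the $\sigma$-orthogonal direct-sum decomposition $\mathbb{C}^{2N}=S\oplus S^{\perp_{\sigma}}$. This does not follow from positive-definiteness (the metric is indefinite) but from nondegeneracy of the form restricted to $S$: the linear map $\boldsymbol{v}\mapsto(\cdot,\boldsymbol{v})_{\sigma}\big|_{S}$ from $\mathbb{C}^{2N}$ into the dual of $S$ is surjective with kernel $S^{\perp_{\sigma}}$, so $\dim S^{\perp_{\sigma}}=2N-(p+q)$, while $S\cap S^{\perp_{\sigma}}=\{\boldsymbol{0}\}$ because the form is nondegenerate on $S$. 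Applying property (i) to $S^{\perp_{\sigma}}$ yields a $\sigma$-orthonormal basis of that complement with, say, $p'$ positive-, $q'$ negative-, and $t'$ zero-norm vectors; adjoining it to the given system produces a $\sigma$-orthonormal basis of the whole $\mathbb{C}^{2N}$. Property (v) forces any $\sigma$-orthonormal basis of $\mathbb{C}^{2N}$ to have exactly $N$ positive-, $N$ negative-, and no zero-norm vectors, so $t'=0$, $p'=N-p$, and $q'=N-q$. This proves the first assertion.

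For the B-orthonormal case I would exploit the antilinear involution $J\boldsymbol{w}:=\tau\boldsymbol{w}^{*}$. Using $\tau^{\dagger}=\tau$ and the identity $\tau\sigma\tau=-\sigma$ one checks $J^{2}=\mathrm{id}$ and the pseudo-conjugation rule $(J\boldsymbol{a},J\boldsymbol{b})_{\sigma}=-\overline{(\boldsymbol{a},\boldsymbol{b})_{\sigma}}$; in particular $J$ sends a positive-norm vector to a negative-norm one of opposite norm and preserves $\sigma$-orthogonality. A short block computation, writing $\boldsymbol{x}=(\boldsymbol{a},\boldsymbol{b})^{T}$, gives the automatic relation $(\boldsymbol{x},J\boldsymbol{x})_{\sigma}=\boldsymbol{a}^{\dagger}\boldsymbol{b}^{*}-\boldsymbol{b}^{\dagger}\boldsymbol{a}^{*}=0$, so every vector is $\sigma$-orthogonal to its own $J$-partner. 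Since the B-orthonormal system has the form $\{\boldsymbol{x}_i,J\boldsymbol{x}_i\}$, its span $S$ is $J$-invariant; I would then verify that $S^{\perp_{\sigma}}$ is $J$-invariant as well, which follows from $J^{2}=\mathrm{id}$ together with the orthogonality-preservation rule.

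With these tools the extension is an induction on $N-p$ (here $q=p$). If $p=N$ the $2N$ given vectors are already a basis, so assume $p<N$. At each stage the first part shows that $S^{\perp_{\sigma}}$ has signature $(N-p,N-p)$, hence contains a positive-norm vector; normalize it to $\boldsymbol{x}_{p+1}$. Then $J\boldsymbol{x}_{p+1}=\tau\boldsymbol{x}_{p+1}^{*}$ again lies in $S^{\perp_{\sigma}}$, has norm $-1$, is automatically $\sigma$-orthogonal to $\boldsymbol{x}_{p+1}$, and is linearly independent of it because their norms have opposite signs. Adjoining this $J$-pair keeps the enlarged system B-orthonormal and its span $J$-invariant and nondegenerate, so the induction continues; after $N-p$ steps one has $2N$ mutually $\sigma$-orthonormal, hence linearly independent, vectors of the required form $\{\boldsymbol{x}_1,\dots,\boldsymbol{x}_N,\tau\boldsymbol{x}_1^{*},\dots,\tau\boldsymbol{x}_N^{*}\}$, i.e., a B-orthonormal basis of $\mathbb{C}^{2N}$.

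The step I expect to be the main obstacle is the very first one: securing the $\sigma$-orthogonal direct-sum decomposition in an \emph{indefinite} metric, where the usual projection and Gram--Schmidt reflex can fail along self-orthogonal directions. Everything downstream --- the signature bookkeeping through (v) and the $J$-pairing in the B-case --- is then routine, with the pleasant simplification that $(\boldsymbol{x},J\boldsymbol{x})_{\sigma}=0$ holds identically and so spares any extra orthogonalization between a vector and its conjugate partner.
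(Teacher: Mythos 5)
Your proof is correct, but it follows a genuinely different route from the paper's own. The paper's proof (in \ref{app:proof}) is constructive: writing $\boldsymbol{x}_i=(\boldsymbol{u}_i,\boldsymbol{v}_i)^T$, it invokes Proposition C.\ref{prp:sigmabasis3}(a) to conclude that $\boldsymbol{u}_1,\dots,\boldsymbol{u}_p$ are linearly independent in $\mathbb{C}^N$, picks (for $p<N$) a vector $\boldsymbol{u}_{p+1}$ orthogonal to all of them, sets $\boldsymbol{w}=(\boldsymbol{u}_{p+1},\boldsymbol{0})^T$ --- which is automatically positive-norm and $\sigma$-orthogonal to every $\boldsymbol{x}_i$ --- and then corrects it against the negative-norm vectors via $\boldsymbol{w}'=\boldsymbol{w}+\sum_j(\boldsymbol{y}_j,\boldsymbol{w})_\sigma\boldsymbol{y}_j$; the indefinite signature works in one's favor here, since this correction \emph{increases} the norm, $(\boldsymbol{w}',\boldsymbol{w}')_\sigma=(\boldsymbol{w},\boldsymbol{w})_\sigma+\sum_j|(\boldsymbol{y}_j,\boldsymbol{w})_\sigma|^2>0$, so the self-orthogonal-direction danger you flag as the main obstacle never arises in that construction. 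The B-orthonormal case then comes for free: one adjoins $\tau(\boldsymbol{w}')^*$ together with $\boldsymbol{w}'$. Your argument instead runs through the structure theory of nondegenerate subspaces of an indefinite form: the duality/rank--nullity argument giving $\mathbb{C}^{2N}=S\oplus S^{\perp_\sigma}$, signature bookkeeping via properties (i), (iii), (v) of Subsec.~\ref{subsec:sigmabasis}, and, for the B-case, the antilinear involution $J\boldsymbol{w}=\tau\boldsymbol{w}^*$ with the identities $(J\boldsymbol{a},J\boldsymbol{b})_\sigma=-\overline{(\boldsymbol{a},\boldsymbol{b})_\sigma}$ and $(\boldsymbol{x},J\boldsymbol{x})_\sigma=0$. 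These steps all check out: the surjectivity of $\boldsymbol{v}\mapsto(\cdot,\boldsymbol{v})_\sigma\big|_S$ follows from invertibility of $\sigma$, the intersection $S\cap S^{\perp_\sigma}=\{\boldsymbol{0}\}$ from nondegeneracy of the Gram matrix on $S$, and $J$-invariance of $S^{\perp_\sigma}$ from that of $S$ exactly as you argue. What your approach buys is generality: it is the standard argument for any nondegenerate subspace of an indefinite inner-product space and isolates the correct hypothesis (nondegeneracy of the restricted form), so it would survive verbatim in settings without the specific $(\boldsymbol{u},\boldsymbol{v})$ block structure. What the paper's approach buys is an explicit formula for the added vector and no appeal to duality; note, however, that the two proofs are not logically independent at the root, since the paper derives property (v) --- which your signature count relies on --- from Proposition C.\ref{prp:sigmabasis3} as well.
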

	Proofs for the properties (i)-(v) and Proposition~\ref{prpsigmabasis2} are given in \ref{app:proof}.
\subsection{Properties of Bogoliubov-hermitian and Bogoliubov-unitary matrices}\label{subsec:bhbu}
	Here we list easy-to-prove properties (i)-(xii) for B-hermitian and B-unitary matrices. Let $ U,V $ be B-unitary and $ H, H' $ be B-hermitian. 
	\begin{enumerate}[(i)]
		\item  $ \sigma H $ is hermitian: $ (\sigma H)^\dagger=\sigma H $. 
		\item $ U $ preserves $ \sigma $-inner products: $ (U\boldsymbol{x},U\boldsymbol{y})_\sigma=(\boldsymbol{x},\boldsymbol{y})_\sigma $.  $ H $ is ``$ \sigma $-self-adjoint'': $ (H\boldsymbol{x},\boldsymbol{y})_\sigma=(\boldsymbol{x},H\boldsymbol{y})_\sigma $.
		\item $ U^{-1}H U $ is B-hermitian.  $  U^{-1}VU $ is B-unitary.
		\item $ UV $ is B-unitary. Thus the whole set of B-unitary matrices is a group. (As shown in \ref{app:classicalhamilton}, it is equivalent to the symplectic group.)
		\item The commutator $ \mathrm{i}[H,H'] $ is B-hermitian. $ \mathrm{e}^{\mathrm{i}H} $ is B-unitary. These mean that the whole set of B-hermitian matrices is a Lie algebra of the group of B-unitary matrices. 
	\end{enumerate}
	Let $ \boldsymbol{w} $ and $ \boldsymbol{z} $ be  right eigenvectors of $ H $ with eigenvalues $ \lambda $ and $ \mu $, respectively. Then,
	\begin{enumerate}[(i)]
	\setcounter{enumi}{5}
		\item  $ \boldsymbol{w}^\dagger\sigma $ is a left eigenvector of $ H $ with an eigenvalue $ \lambda^* $.
		\item From (vi), if $ \lambda $ is an eigenvalue of $ H $, $ \lambda^* $ is also an eigenvalue. So there also exists a right eigenvector with an eigenvalue $ \lambda^* $. (However, we cannot express it in a closed form by using $ \boldsymbol{w} $.)
		\item If $ \boldsymbol{w} $ has finite norm $ (\boldsymbol{w},\boldsymbol{w})_\sigma\ne0 $,  $ \lambda $ is real.
		\item If $ \lambda^*\ne \mu $,  $ \boldsymbol{w} $ and $ \boldsymbol{z} $ are $ \sigma $-orthogonal to each other:  $ (\boldsymbol{w},\boldsymbol{z})_\sigma=0 $.
		\item $ \tau\boldsymbol{w}^* $ is a right eigenvector of $ H $ with an eigenvalue $ -\lambda^* $. 
	\end{enumerate}
	Let us write the B-unitary matrix $ U $ as an array of column vectors: $ U=(\boldsymbol{x}_1,\dots,\boldsymbol{x}_N,\tau\boldsymbol{x}_1^*,\dots,\tau\boldsymbol{x}_N^*) $. Then,
	\begin{enumerate}[(i)]
	\setcounter{enumi}{10}
		\item  $ \{ \boldsymbol{x}_1,\dots,\boldsymbol{x}_N,\tau\boldsymbol{x}_1^*,\dots,\tau\boldsymbol{x}_N^* \} $ is a B-orthonormal basis of $ \mathbb{C}^{2N} $. The first $N$ vectors $ \boldsymbol{x}_i $'s have positive norm and the rest $ \tau\boldsymbol{x}_i^* $'s have negative norm.
		\item $ I_{2N}=\sum_{i=1}^N\boldsymbol{x}_i\boldsymbol{x}_i^\dagger\sigma-\sum_{i=1}^N\tau\boldsymbol{x}_i^*\boldsymbol{x}_i^T\tau\sigma $. (An analog of the completeness relation.)
	\end{enumerate}
	Note that (xi) is proved by $ U^{-1}U=I_{2N} $ and (xii) is proved by $ UU^{-1}=I_{2N} $. 
\subsection{Block-diagonalization of Bogoliubov-hermitian matrix for positive-semidefinite case}\label{sec:spsigma}
	In this subsection we block-diagonalize a B-hermitian matrix $ H $ when $ \sigma H $ is positive-semidefinite, following Colpa \cite{Colpa1986} with a few refinements of proofs.\\
	\indent We first define a singular B-hermitian matrix:
	\begin{df}
	If $ K $ is a B-hermitian matrix whose all eigenvectors have zero norm, we call $ K $ \textit{a singular B-hermitian matrix}.
	\end{df} 
	Note that if $ \boldsymbol{z}_1,\boldsymbol{z}_2 $ are the eigenvectors of $ K $ with the same eigenvalue, they are  $ \sigma $-orthogonal to each other:  $ (\boldsymbol{z}_1,\boldsymbol{z}_2)_\sigma=0. $ We can show it by noting that $ \alpha\boldsymbol{z}_1+\beta\boldsymbol{z}_2 $ is also an eigenvector for any $ \alpha,\beta\in \mathbb{C} $. It means that if we construct a  $ \sigma $-orthonormal basis for an eigenspace of some eigenvalue of $ K $, it consists only of zero-norm vectors. 

	\indent The following theorem shows that a B-hermitian matrix can be uniquely decomposed into a real diagonal part and a singular B-hermitian part:
	\begin{thm}\label{prplambdaKdecomp} Let $ H $ be a B-hermitian matrix of size $ 2N $. There exists a B-unitary matrix $ U $ such that
	\begin{align}
		U^{-1}HU=\begin{pmatrix} \Lambda &&& \\ & K_{11} && K_{12} \\ && -\Lambda & \\  & K_{21} && K_{22}  \end{pmatrix},\label{eq:prplambdaKdecomp00}
	\end{align}
	where  $ \Lambda $ is a real diagonal matrix of size $ r \ (0 \le r\le N)$ and 
	\begin{align}
		K=\begin{pmatrix} K_{11} & K_{12} \\ K_{21} & K_{22} \end{pmatrix}\label{eq:prplambdaKdecomp}
	\end{align}
	is a singular B-hermitian matrix of size $ 2(N-r) $. The block $ K_{ij} $'s are of size $ N-r $. If $ r=0 $, there is no diagonal part $ \Lambda $. If $ r=N  $, there is no singular part $ K $.  $ \Lambda $ is unique up to rearrangement of eigenvalues and  $ K $ is unique up to transformation $ K\rightarrow V^{-1}KV $, where $ V $ is a B-unitary matrix. 
	\end{thm}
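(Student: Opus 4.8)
The plan is to build the B-unitary matrix $U$ out of a \emph{maximal} system of finite-norm eigenvectors of $H$, exploiting the fact that $H$ is $\sigma$-self-adjoint (property (ii) of Subsec.~\ref{subsec:bhbu}) so that the $\sigma$-orthogonal complement of any such partial system is again an invariant B-space on which the construction can be iterated. I would argue by induction on $N$. If $H$ is already singular there is nothing to prove: set $r=0$, drop $\Lambda$, and take $K=H$. Otherwise $H$ has an eigenvector $\boldsymbol{w}$ of finite norm; after rescaling, and after replacing $\boldsymbol{w}$ by $\tau\boldsymbol{w}^*$ if necessary (which by property (x) is again an eigenvector and which, since $\tau\sigma\tau=-\sigma$ gives $(\tau\boldsymbol{w}^*,\tau\boldsymbol{w}^*)_\sigma=-(\boldsymbol{w},\boldsymbol{w})_\sigma$, carries the opposite-sign norm), I obtain a positive-norm eigenvector $\boldsymbol{x}_1$ with eigenvalue $\lambda_1$, which is real by property (viii).

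The key structural step is to form the pair $\{\boldsymbol{x}_1,\tau\boldsymbol{x}_1^*\}$ and check that it is B-orthonormal. When $\lambda_1\neq 0$ this is immediate: the two vectors carry the distinct eigenvalues $\lambda_1$ and $-\lambda_1$, so property (ix) forces $(\boldsymbol{x}_1,\tau\boldsymbol{x}_1^*)_\sigma=0$, and the norms are $+1$ and $-1$. The delicate case is $\lambda_1=0$, where both vectors lie in $\ker H$ and need not be $\sigma$-orthogonal a priori. Here I would instead work directly inside the eigenspace $\ker H$, which is $\tau$-conjugation closed (if $H\boldsymbol{v}=\boldsymbol{0}$ then $H\tau\boldsymbol{v}^*=-\tau H^*\boldsymbol{v}^*=\boldsymbol{0}$) and on whose finite-norm part the $\sigma$-form has balanced signature because $\tau$-conjugation interchanges positive- and negative-norm directions; from this one extracts a genuine B-orthonormal family of zero-eigenvectors. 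This \textbf{degenerate/zero-eigenvalue bookkeeping is the main obstacle}, since it is the one place where the naive ``eigenvector together with its $\tau$-partner'' recipe can fail and one must instead invoke the existence of $\sigma$-orthonormal bases of subspaces (property (i) of Subsec.~\ref{subsec:sigmabasis}) together with Proposition~\ref{prpsigmabasis2}.

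With a B-orthonormal eigenpair $\{\boldsymbol{x}_1,\tau\boldsymbol{x}_1^*\}$ in hand I would pass to its $\sigma$-orthogonal complement $W$. Two facts drive the induction: (a) $W$ is $H$-invariant, which follows from $\sigma$-self-adjointness together with $H\boldsymbol{x}_1=\lambda_1\boldsymbol{x}_1$ and $H\tau\boldsymbol{x}_1^*=-\lambda_1\tau\boldsymbol{x}_1^*$; and (b) $W$ is closed under $\boldsymbol{y}\mapsto\tau\boldsymbol{y}^*$, using $\sigma\tau=-\tau\sigma$. By Proposition~\ref{prpsigmabasis2} the pair extends to a B-orthonormal basis of $\mathbb{C}^{2N}$, whose remaining $2(N-1)$ members form a B-orthonormal basis of $W$; hence $H|_W$ is B-hermitian of size $2(N-1)$ and the inductive hypothesis applies. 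Collecting the eigenvectors peeled off along the recursion as the first $r$ positive slots $\boldsymbol{x}_1,\dots,\boldsymbol{x}_r$ (eigenvalues $\lambda_1,\dots,\lambda_r$) and their $\tau$-partners as the matching negative slots produces exactly the block form (\ref{eq:prplambdaKdecomp00}) with $\Lambda=\mathrm{diag}(\lambda_1,\dots,\lambda_r)$. The recursion halts precisely when the surviving invariant block admits no finite-norm eigenvector, i.e.\ is singular, and that terminal block is $K$. Note that no positive-semidefiniteness of $\sigma H$ is needed for this decomposition; it is the general first step, and that hypothesis enters only in the finer analysis of $K$ carried out afterwards.

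For uniqueness I would identify the invariants intrinsically. The multiset $\{\lambda_1,\dots,\lambda_r\}$ is determined by $H$ alone as the collection of eigenvalues admitting finite-norm eigenvectors, with multiplicities read off from the signature of the $\sigma$-form on the corresponding eigenspaces; hence $\Lambda$ is fixed up to a rearrangement of its entries. The singular part is the action of $H$ on a complementary B-space, and any two B-orthonormal complements realizing the same $\Lambda$ are related by a B-unitary map, which conjugates one representative into the other, giving $K\mapsto V^{-1}KV$ with $V$ B-unitary.
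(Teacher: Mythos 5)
Your overall strategy is the same as the paper's: peel off finite-norm eigenvectors one pair at a time, complete each pair $\{\boldsymbol{x}_1,\tau\boldsymbol{x}_1^*\}$ to a B-orthonormal basis via Proposition~\ref{prpsigmabasis2}, restrict $H$ to the $\sigma$-orthogonal complement (which is again B-hermitian), iterate until only a singular block survives, and read off uniqueness from the signature invariants of Subsec.~\ref{subsec:sigmabasis}. That skeleton, including your invariant-subspace phrasing of the inductive step and your uniqueness argument, is sound and coincides with the paper's proof.

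However, the step you single out as ``the main obstacle'' rests on a misconception, and your substitute argument for it contains a genuine gap. For \emph{every} vector $\boldsymbol{x}=(\boldsymbol{u},\boldsymbol{v})^T\in\mathbb{C}^{2N}$ one has
\begin{align}
	(\boldsymbol{x},\tau\boldsymbol{x}^*)_\sigma=\boldsymbol{x}^\dagger\sigma\tau\boldsymbol{x}^*
	=\boldsymbol{u}^\dagger\boldsymbol{v}^*-\boldsymbol{v}^\dagger\boldsymbol{u}^*
	=\sum_i\left(u_i^*v_i^*-v_i^*u_i^*\right)=0,
\end{align}
since $\sigma\tau$ is antisymmetric and the expression is a quadratic form in $\boldsymbol{x}^*$. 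So any vector is automatically $\sigma$-orthogonal to its own $\tau$-partner: the pair $\{\boldsymbol{x}_1,\tau\boldsymbol{x}_1^*\}$ is B-orthonormal for a normalized positive-norm eigenvector of \emph{any} eigenvalue, zero included, and the ``naive recipe'' never fails. (This identity is exactly what the paper's proof uses implicitly when it applies Proposition~\ref{prpsigmabasis2} to $\{\boldsymbol{w}_1,\tau\boldsymbol{w}_1^*\}$ without distinguishing $\lambda_1=0$ from $\lambda_1\neq0$.) Your detour through $\ker H$ is therefore unnecessary, and as written it does not close on its own terms: property (i) of Subsec.~\ref{subsec:sigmabasis} only provides a $\sigma$-orthonormal basis of $\ker H$ whose negative-norm members bear no relation to the $\tau$-partners of the positive-norm ones, and Proposition~\ref{prpsigmabasis2} \emph{extends} B-orthonormal systems but does not manufacture them. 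Producing a genuine B-orthonormal family requires the cross-conditions $(\boldsymbol{x}_i,\tau\boldsymbol{x}_j^*)_\sigma=0$ for $i\neq j$, which, unlike the diagonal case above, are not automatic and are not supplied by the facts you cite (the paper proves a refinement of this type only in \ref{app:basischoice}, and only under positive-semidefiniteness of $\sigma H_0$). Fortunately your induction consumes only one pair per step, so the displayed identity repairs the argument completely --- but the repaired argument is then precisely the paper's.
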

	The proof is given in \ref{app:proof}. The proof of this theorem is very similar to that of diagonalizability of hermitian matrices by unitary matrices. Therefore, the most crucial difference between mathematics of B-hermitian/B-unitary matrices and that of hermitian/unitary matrices lies in the theory of \textit{zero-norm eigenvectors}. 

	If we consider all kinds of B-hermitian matrices, the singular part $ K $ can generally have a complicated Jordan-block structure. However, when  $ \sigma H $ is positive-semidefinite, the problem becomes very simple. 
	The positive-definite case, which is much easier than the positive-semidefinite one, was first considered by Thouless \cite{Thouless} and revisited by Colpa \cite{Colpa1978}:
	\begin{thm}\label{prp:colpa009} Let $ H $ be a B-hermitian matrix such that $ \sigma H $ is positive-definite. The following (i)-(iii) hold.
	\begin{enumerate}[(i)]
		\item All eigenvalues of  $ H $ are real and nonzero.
		\item Every eigenvector of $ H $ with a positive (negative) eigenvalue has positive (negative) norm.
		\item The singular part $ K $ determined by Theorem~\ref{prplambdaKdecomp} does not exist, i.e., $ H $ is diagonalizable.
	\end{enumerate}
	\end{thm}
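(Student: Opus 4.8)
The entire statement rests on the single hypothesis that $\sigma H$ is positive-definite, Hermiticity of $\sigma H$ being guaranteed by property~(i) of Subsec.~\ref{subsec:bhbu} so that positive-definiteness is unambiguous. My plan is to read off (i) and (ii) by pairing an eigenvector against this positive form, and then to establish (iii) by showing that $H$ admits no zero-norm eigenvector, which collapses the singular part of the decomposition in Theorem~\ref{prplambdaKdecomp}.

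For (i) and (ii), suppose $H\boldsymbol{w}=\lambda\boldsymbol{w}$ with $\boldsymbol{w}\ne\boldsymbol{0}$. Testing against $\sigma H$ gives
\[
0<\boldsymbol{w}^\dagger(\sigma H)\boldsymbol{w}=\lambda\,\boldsymbol{w}^\dagger\sigma\boldsymbol{w}=\lambda\,(\boldsymbol{w},\boldsymbol{w})_\sigma.
\]
Since $(\boldsymbol{w},\boldsymbol{w})_\sigma$ is real, the right-hand side being a strictly positive real number forces $(\boldsymbol{w},\boldsymbol{w})_\sigma\ne0$; hence $\boldsymbol{w}$ has finite norm, property~(viii) makes $\lambda$ real, and strict positivity gives $\lambda\ne0$, which is (i). Because $\lambda$ is a nonzero real, the sign of $(\boldsymbol{w},\boldsymbol{w})_\sigma$ must coincide with that of $\lambda$, which is exactly (ii).

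For (iii) I would give a self-contained argument that bypasses Theorem~\ref{prplambdaKdecomp}. Set $M=\sigma H>0$ and let $M^{1/2}$ be its positive-definite Hermitian square root. Since $H=\sigma M$, the similarity
\[
M^{1/2}HM^{-1/2}=M^{1/2}\sigma M^{1/2}
\]
produces a Hermitian matrix (both $M^{1/2}$ and $\sigma$ are Hermitian), so $H$ is similar to a Hermitian matrix and hence diagonalizable with real eigenvalues; invertibility of $\sigma M$ then forces these eigenvalues to be nonzero. Alternatively, staying inside the paper's framework: the computation above shows $H$ has no zero-norm eigenvectors; a B-unitary $U$ preserves the $\sigma$-product (property~(ii)) and keeps $\sigma(U^{-1}HU)=U^\dagger(\sigma H)U$ positive-definite by congruence, so a nonzero singular block $K$ in Theorem~\ref{prplambdaKdecomp} would contribute a zero-norm eigenvector after lifting by $U$---a contradiction---forcing $r=N$ and full diagonality.

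The only delicate bookkeeping lies in this second route: one must check that an eigenvector of the sub-block $K$, padded by zeros in the $\Lambda\oplus(-\Lambda)$ slots, is genuinely an eigenvector of $U^{-1}HU$ and still has zero norm, which relies on the block-diagonal decoupling of the diagonal part from $K$ together with norm preservation under $U$. The square-root route sidesteps this entirely, so I would present it as the primary proof and relegate the decomposition-based version to a remark.
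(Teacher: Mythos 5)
Your argument for (i) and (ii) is exactly the paper's: pairing the eigenvector against the positive-definite form gives $0<(\boldsymbol{w},H\boldsymbol{w})_\sigma=\lambda(\boldsymbol{w},\boldsymbol{w})_\sigma$, from which nonvanishing norm, realness, $\lambda\neq0$, and the sign coincidence all follow. Where you genuinely diverge is (iii). The paper disposes of it in one line: by (i) and (ii) every eigenvector of $H$ has finite norm, while the singular part $K$ of Theorem~\ref{prplambdaKdecomp} is by definition built entirely from zero-norm eigenvectors, so it cannot occur. That is precisely your ``second route,'' and your bookkeeping there is correct: $\sigma(U^{-1}HU)=U^\dagger(\sigma H)U$ follows from $U^{-1}=\sigma U^\dagger\sigma$, the zero-padded $K$-eigenvector is an eigenvector of the block form with $\sigma$-norm equal to its $\sigma_{N-r}$-norm, and B-unitarity of $U$ preserves that norm. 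Your primary route---conjugating $H=\sigma M$ by $M^{1/2}$ to get the Hermitian matrix $M^{1/2}\sigma M^{1/2}$---is a valid and genuinely different argument, in the same spirit as the square-root trick the paper itself borrows from Colpa in the positive-\emph{semi}definite case (Lemma~C.\ref{lem:colpa}); it buys a self-contained proof of diagonalizability and real nonzero spectrum without ever invoking Theorem~\ref{prplambdaKdecomp}. One caveat, though: what (iii) literally asserts is the absence of the singular block $K$, and matrix diagonalizability alone does not imply this for a general B-hermitian matrix---e.g.\ $\left(\begin{smallmatrix}0&1\\-1&0\end{smallmatrix}\right)$ is singular B-hermitian (all eigenvectors have zero norm) yet diagonalizable, with spectrum $\pm\mathrm{i}$. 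So the square-root similarity, taken alone, proves the ``i.e.'' reformulation but leaves a small step to the literal statement, which needs either realness of the spectrum combined with a Gram-matrix argument or, more simply, the observation that $H$ has no zero-norm eigenvectors. Since your second route supplies exactly that observation, the proposal as a whole is complete; I would just present the norm argument as the step that actually kills $K$, and the similarity argument as the bonus statement about the spectrum.
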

	\begin{proof} (i) and (ii): Since $ \sigma H $ is positive-definite,  if $ \boldsymbol{w} $ is an eigenvector of $ H $ with an eigenvalue $ \lambda $,  $ (\boldsymbol{w},\sigma H\boldsymbol{w})_{\mathbb{C}}=(\boldsymbol{w},H\boldsymbol{w})_\sigma=\lambda(\boldsymbol{w},\boldsymbol{w})_\sigma>0 $. Thus $ \lambda\ne 0 $ follows, and since $ (\boldsymbol{w},\boldsymbol{w})_\sigma $ is real,  $ \lambda $ is also real and the sign of $ \lambda $ and $ (\boldsymbol{w},\boldsymbol{w})_\sigma $ are the same. (iii): By (i) and (ii), all eigenvectors have finite norm.
	\end{proof}
	However, Theorem~\ref{prp:colpa009} is not enough for practical use, since the SSB-originated zero-mode solution derived in Subsec.~\ref{sec:symzero} is just a zero-norm eigenvector with zero eigenvalue! A desired theorem suitable for the current purpose can be obtained, if we weaken the assumption of Theorem~\ref{prp:colpa009} and only assume the positive-\textit{semi}definiteness:
	\begin{thm}[Colpa \cite{Colpa1986}]\label{prp:colpa000} Let $ H $ be a B-hermitian matrix such that $ \sigma H $ is positive-semidefinite. The following (i)-(iii) hold.
	\begin{enumerate}[(i)]
		\item All eigenvalues of  $ H $ are real.
		\item Every eigenvector of $ H $ with a positive (negative) eigenvalue has positive (negative) norm.
		\item The singular part $ K $ determined by Theorem~\ref{prplambdaKdecomp} has only zero eigenvalue.
	\end{enumerate}
	\end{thm}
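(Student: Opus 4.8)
The plan is to reduce every part of the theorem to a single scalar identity obtained by pairing the eigenvalue equation against $\sigma H$, and then to exploit the defining feature of a positive-semidefinite matrix: a vector giving it zero expectation value lies in its kernel. By property (i) of Subsection~\ref{subsec:bhbu}, $\sigma H$ is hermitian, so the hypothesis that $\sigma H$ is positive-semidefinite is meaningful; I would write $\sigma H=S^2$ with $S=(\sigma H)^{1/2}\ge 0$ hermitian, whence $H=\sigma S^2$ and $\ker S=\ker S^2=\ker(\sigma H)=\ker H$ (the last equality because $\sigma$ is invertible). The basic computation is that if $H\boldsymbol{w}=\lambda\boldsymbol{w}$, then, using $S=S^\dagger$,
\begin{align}
\|S\boldsymbol{w}\|^2=\boldsymbol{w}^\dagger\sigma H\boldsymbol{w}=\lambda\,\boldsymbol{w}^\dagger\sigma\boldsymbol{w}=\lambda\,(\boldsymbol{w},\boldsymbol{w})_\sigma .
\end{align}
All three claims fall out of this one relation.

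For (i) I would split on the norm of $\boldsymbol{w}$. If $\boldsymbol{w}$ has finite norm, $\lambda$ is real by property (viii) of Subsection~\ref{subsec:bhbu} (equivalently, the identity forces $\lambda=\|S\boldsymbol{w}\|^2/(\boldsymbol{w},\boldsymbol{w})_\sigma\in\mathbb{R}$). The delicate case is a zero-norm eigenvector, for which property (viii) says nothing: here the right-hand side of the identity vanishes, so $\|S\boldsymbol{w}\|^2=0$, hence $S\boldsymbol{w}=0$ and therefore $H\boldsymbol{w}=\sigma S(S\boldsymbol{w})=0$, which forces $\lambda=0$. Thus every eigenvalue, finite- or zero-norm, is real. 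For (ii), take $\lambda>0$: the identity gives $(\boldsymbol{w},\boldsymbol{w})_\sigma=\|S\boldsymbol{w}\|^2/\lambda\ge 0$, and equality would again yield $S\boldsymbol{w}=0$, hence $H\boldsymbol{w}=0$ and $\lambda=0$, a contradiction; so $(\boldsymbol{w},\boldsymbol{w})_\sigma>0$, and the case $\lambda<0$ is symmetric.

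For (iii) I would feed the decomposition of Theorem~\ref{prplambdaKdecomp} into the fact just established. Writing $U^{-1}HU$ in the block form of Eq.~(\ref{eq:prplambdaKdecomp00}) with $U$ B-unitary, property (ii) of Subsection~\ref{subsec:bhbu} shows $U$ preserves the $\sigma$-inner product, so any eigenvector of the singular block $K$ lifts to a genuine eigenvector of $H$ with the same eigenvalue and the same $\sigma$-norm. By the definition of a singular B-hermitian matrix all eigenvectors of $K$ have zero norm, and the argument of part (i) shows that a zero-norm eigenvector of $H$ can only have eigenvalue $0$. Hence every eigenvalue of $K$ equals $0$, as claimed.

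The main obstacle I anticipate is exactly the zero-norm sector. Positive-\emph{definiteness} would make $S$ invertible and render $H$ similar to a hermitian matrix, forcing diagonalizability outright (this is Theorem~\ref{prp:colpa009}); under mere positive-semidefiniteness, $\ker S$ is nontrivial and could a priori carry complex eigenvalues or nontrivial Jordan structure, precisely where property (viii) is silent. The displayed identity is what tames this sector, by converting ``$\boldsymbol{w}$ has zero $\sigma$-norm'' into ``$\boldsymbol{w}\in\ker S=\ker H$'' and thereby pinning its eigenvalue to $0$. I would stress that this does \emph{not} assert diagonalizability of $K$: it only confines the spectrum of $K$ to $\{0\}$, leaving intact the Jordan blocks that the singular part can genuinely possess.
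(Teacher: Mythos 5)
Your proposal is correct and takes essentially the same route as the paper: both arguments rest on the scalar identity $\boldsymbol{w}^\dagger\sigma H\boldsymbol{w}=\lambda(\boldsymbol{w},\boldsymbol{w})_\sigma$ combined with the fact that a positive-semidefinite hermitian matrix has vanishing expectation value on a vector only if that vector lies in its kernel --- your square root $S=(\sigma H)^{1/2}$ is merely a device for establishing that standard fact, which the paper invokes directly, and your kernel identification $\ker S=\ker(\sigma H)=\ker H$ matches the paper's opening remark that $H$ and $\sigma H$ share their zero-eigenvalue eigenvectors. The only cosmetic difference is the case organization (you split on whether $(\boldsymbol{w},\boldsymbol{w})_\sigma$ vanishes, the paper on whether $\lambda\ne 0$), and part (iii) is handled identically in both via the $\sigma$-norm-preserving B-unitary lift of $K$-eigenvectors to eigenvectors of $H$.
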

	Note that an eigenvector with zero eigenvalue can have either zero or finite norm, and therefore, the diagonal part $ \Lambda $ in Eq.~(\ref{eq:prplambdaKdecomp00}) can contain zero. \\
	\indent In condensed matter physics, the absence of complex eigenvalue implies the absence of the dynamical instability, and the coincidence of signs between the eigenvalues and the norms means the absence of the Landau instability.
	\begin{proof}[Proof of Theorem \ref{prp:colpa000}] (i) and (ii): We first note that $ H $ and $ \sigma H $ share the same eigenvectors with zero eigenvalue. Since $ \sigma H $ is a positive-semidefinite hermitian matrix,  $ (\boldsymbol{w},\sigma H\boldsymbol{w})_{\mathbb{C}}=(\boldsymbol{w},H\boldsymbol{w})_\sigma=0  $ holds if and only if $ \boldsymbol{w} $ is an eigenvector with zero eigenvalue.  Therefore, if  $ \boldsymbol{w} $ is an eigenvector of $ H $ with an eigenvalue $ \lambda\ne0 $,  $ (\boldsymbol{w},\sigma H\boldsymbol{w})_{\mathbb{C}}=(\boldsymbol{w},H\boldsymbol{w})_\sigma=\lambda(\boldsymbol{w},\boldsymbol{w})_\sigma>0 $. Since $ (\boldsymbol{w},\boldsymbol{w})_\sigma $ is real,  $ \lambda $ is also real and the sign of $ \lambda $ and $ (\boldsymbol{w},\boldsymbol{w})_\sigma $ are the same. (iii) From (ii), any eigenvector with nonzero eigenvalue has finite norm, so it cannot be an eigenvector of a singular B-hermitian matrix.
	\end{proof}
	By Theorem~\ref{prp:colpa000}, the remaining work is to obtain a ``good'' standard form for the singular part $ K $ for the positive-semidefinite case. After a few mathematical constructions, we arrive at the following theorem: 
	\begin{thm}[Colpa \cite{Colpa1986}]\label{prp:colpa001} Let $ H $ be a B-hermitian matrix such that $ \sigma H $ is positive-semidefinite. There exists a B-unitary matrix $ U $ such that
	\begin{align}
		U^{-1}HU=\begin{pmatrix} \Lambda &&& \\ & \tilde{K} && \tilde{K} \\ && -\Lambda & \\  & -\tilde{K} && -\tilde{K} \end{pmatrix},\label{eq:colpathm002}
	\end{align}
	where $ \Lambda $ is a real and non-negative diagonal matrix and $ \tilde{K} $ is a real and positive diagonal matrix.
	\end{thm}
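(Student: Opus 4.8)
The plan is to take the decomposition of Theorem~\ref{prplambdaKdecomp} as the starting point and to squeeze all the remaining information out of the positive-semidefiniteness of $\sigma H$, thereby reducing the entire statement to a structural analysis of the singular block $K$. First I would fix a B-unitary $U_0$ bringing $H$ into the block form \eqref{eq:prplambdaKdecomp00}. The crucial observation is that positive-semidefiniteness is transported by B-unitary congruence: since $U_0^{-1}=\sigma U_0^\dagger\sigma$, one has $\sigma\,(U_0^{-1}HU_0)=U_0^\dagger(\sigma H)U_0\ge 0$. Reading off the diagonal blocks of this manifestly positive-semidefinite matrix shows at once that $\Lambda\ge 0$ (so $\Lambda$ may be taken real and non-negative, in accordance with Theorem~\ref{prp:colpa000}(ii)) and that the singular block obeys $\sigma_M K\ge 0$ with $M:=N-r$. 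By Theorem~\ref{prp:colpa000}(iii) the matrix $K$ has only the eigenvalue $0$ and is therefore nilpotent, so it remains to classify a nilpotent singular B-hermitian $K$ with $\sigma_M K\ge 0$.

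The next step is to establish $K^2=0$, i.e.\ that no Jordan chain of length three can occur. I would introduce the positive-semidefinite hermitian form $B(\boldsymbol{x},\boldsymbol{y}):=\boldsymbol{x}^\dagger\sigma_M K\boldsymbol{y}=(\boldsymbol{x},K\boldsymbol{y})_\sigma$, whose radical is exactly $\ker(\sigma_M K)=\ker K$. Assume a chain $K\boldsymbol{v}_1=\boldsymbol{0}$, $K\boldsymbol{v}_2=\boldsymbol{v}_1$, $K\boldsymbol{v}_3=\boldsymbol{v}_2$ existed. The $\sigma$-self-adjointness of $K$ [property (ii) of Subsec.~\ref{subsec:bhbu}] gives $(\boldsymbol{v}_1,\boldsymbol{v}_2)_\sigma=(K\boldsymbol{v}_1,\boldsymbol{v}_3)_\sigma=0$, whence $B(\boldsymbol{v}_2,\boldsymbol{v}_2)=(\boldsymbol{v}_2,\boldsymbol{v}_1)_\sigma=\overline{(\boldsymbol{v}_1,\boldsymbol{v}_2)_\sigma}=0$; positive-semidefiniteness of $B$ then forces $\boldsymbol{v}_2\in\ker K$, contradicting $K\boldsymbol{v}_2=\boldsymbol{v}_1\ne\boldsymbol{0}$. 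Hence $K^2=0$ and $\operatorname{range}K\subseteq\ker K$.

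The conceptual heart is then to rule out ``lone'' size-one zero modes and to count the blocks. Because $K$ is \emph{singular}, its whole eigenspace $\ker K$ consists of zero-norm vectors, i.e.\ $\ker K$ is totally $\sigma$-isotropic, so $\dim\ker K\le M$ [the bound $t\le N$ of Subsec.~\ref{subsec:sigmabasis}, applied on $\mathbb{C}^{2M}$]. Combining this with $\operatorname{range}K\subseteq\ker K$ and rank--nullity, $\dim\ker K=2M-\operatorname{rank}K$, pins down $\operatorname{rank}K=M=\dim\ker K$. Consequently every Jordan block of $K$ has size exactly two, there are precisely $M$ of them, and no size-one block survives. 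This is exactly where singularity of $K$ is used: without it $K$ could acquire finite-norm zero modes, which must instead be absorbed into $\Lambda$.

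It remains to assemble the canonical basis, which I expect to be the main obstacle. On the quotient $\mathbb{C}^{2M}/\ker K$ the form induced by $B$ is positive-definite, so I would choose representatives $\boldsymbol{v}_2^{(i)}$ that are $B$-orthogonal and set $\boldsymbol{v}_1^{(i)}:=K\boldsymbol{v}_2^{(i)}\in\ker K$. Each pair spans a hyperbolic plane, since $(\boldsymbol{v}_1^{(i)},\boldsymbol{v}_1^{(i)})_\sigma=0$ whereas $(\boldsymbol{v}_1^{(i)},\boldsymbol{v}_2^{(i)})_\sigma=B(\boldsymbol{v}_2^{(i)},\boldsymbol{v}_2^{(i)})>0$. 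Exploiting the freedom to shift each $\boldsymbol{v}_2^{(i)}$ by an element of $\ker K$ (which leaves $\boldsymbol{v}_1^{(i)}$ unchanged), I would $\sigma$-orthogonalize the distinct planes, and then inside each plane select a normalized positive-/negative-norm pair $\{\boldsymbol{x}_i,\tau\boldsymbol{x}_i^*\}$ in which $K$ acts as $\begin{pmatrix}\kappa_i&\kappa_i\\-\kappa_i&-\kappa_i\end{pmatrix}$ with $\kappa_i>0$. Collecting these columns together with the eigenvectors of the $\Lambda$-block yields a B-unitary $U$ realizing \eqref{eq:colpathm002} with $\tilde K=\operatorname{diag}(\kappa_1,\dots,\kappa_M)$. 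The delicate point is precisely this last construction: forcing the size-two blocks to be simultaneously mutually $\sigma$-orthogonal and individually of the exact off-diagonal shape requires a careful Gram--Schmidt procedure in the indefinite $\sigma$-inner product, in contrast to the comparatively routine reductions above.
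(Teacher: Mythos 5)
Your preliminary reductions are correct, and two of them take a genuinely different (and arguably more elementary) route than the paper: the paper obtains $K^2=0$ from Lemma~C.\ref{lem:colpa} (the $(\sigma K)^{1/2}\sigma(\sigma K)^{1/2}=0$ determinant trick), whereas you get it from a Jordan-chain argument plus Cauchy--Schwarz for the positive-semidefinite form $B(\boldsymbol{x},\boldsymbol{y})=(\boldsymbol{x},K\boldsymbol{y})_\sigma$; and your block count $\operatorname{rank}K=\dim\ker K=M$ via the totally $\sigma$-isotropic kernel and rank--nullity replaces the Gram-matrix invertibility argument used in the proof of Theorem~C.\ref{prp:K2}(iii). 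The transport of positive-semidefiniteness by B-unitary congruence, $\sigma(U_0^{-1}HU_0)=U_0^{\dagger}(\sigma H)U_0$, is also correct and consistent with Theorems~\ref{prplambdaKdecomp} and \ref{prp:colpa000}.

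The gap is in the final assembly, and it is not a routine Gram--Schmidt issue: for the columns to form a B-unitary matrix you need the negative-norm partner of $\boldsymbol{x}_i$ to be \emph{exactly} $\tau\boldsymbol{x}_i^*$, i.e.\ each plane $P_i=\operatorname{span}\{\boldsymbol{v}_1^{(i)},\boldsymbol{v}_2^{(i)}\}$ must be stable under the anti-linear involution $\theta(\boldsymbol{w}):=\tau\boldsymbol{w}^*$, which anti-commutes with $K$ (from $K=-\tau K^*\tau$) and flips the sign of the $\sigma$-norm. Arbitrary $B$-orthogonal representatives do not deliver this. Concretely, take $M=2$, $K=\left(\begin{smallmatrix}I_2&I_2\\-I_2&-I_2\end{smallmatrix}\right)$, and the representatives $\boldsymbol{v}^{(\pm)}=(1,\pm\mathrm{i},0,0)^T/\sqrt{2}$: these are $B$-orthonormal, and the two planes $P_{\pm}=\operatorname{span}\{\boldsymbol{v}^{(\pm)},K\boldsymbol{v}^{(\pm)}\}$ are even mutually $\sigma$-orthogonal as they stand, so your steps up to the assembly are fully satisfied; yet $\theta$ maps $P_+$ into $P_-$ instead of preserving it, and consequently no positive-norm $\boldsymbol{x}\in P_+$ has $\tau\boldsymbol{x}^*\in P_+$ (if $\tau\boldsymbol{x}^*\propto\boldsymbol{x}$ then $(\boldsymbol{x},\boldsymbol{x})_\sigma=-|\lambda|^2(\boldsymbol{x},\boldsymbol{x})_\sigma$ forces zero norm; if they were independent they would span $P_+$, contradicting non-stability). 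So the claimed pair $\{\boldsymbol{x}_i,\tau\boldsymbol{x}_i^*\}$ simply does not exist inside such a plane. The missing idea is a reality condition: $\theta$ descends to $\mathbb{C}^{2M}/\ker K$ as an anti-linear involution with $B(\theta\boldsymbol{x},\theta\boldsymbol{y})=\overline{B(\boldsymbol{x},\boldsymbol{y})}$, so the $B$-orthogonal representatives must in addition be chosen $\theta$-real (fixed by $\theta$ modulo $\ker K$), after which your construction closes. This is precisely what the paper builds in from the outset in the proof of Theorem~C.\ref{prp:K2}(iii): it diagonalizes the hermitian matrix $\sigma K$ and uses the freedom to choose its positive-eigenvalue eigenvectors with $\boldsymbol{w}_i=\tau\boldsymbol{w}_i^*$, which makes the conjugation compatibility automatic.
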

	This theorem is a goal of this section. This theorem also becomes a starting point of perturbation theory in the next section. Since the proof is a little long and technical, we prove it in \ref{app:proof} with detailed mathematical techniques.
\section{Derivation of dispersion relation for type-I and type-II Nambu-Goldstone modes}\label{sec:perturb}
	Now, let us go back to the Bogoliubov equation (\ref{eq:uniformstatBogo}). We write
	\begin{align}
		H=H_0+\sigma k^2,\quad H_0=\begin{pmatrix} F & G \\ -G^* & -F^* \end{pmatrix}.
	\end{align}
	We solve the eigenvalue problem of $ H $ perturbatively, regarding $ H_0 $ as an unperturbed part and  $ \sigma k^2 $ as a perturbation term. We assume  $ \sigma H_0 $ is positive-semidefinite and hence the standard form of Theorem~\ref{prp:colpa001} can be used. Then,  $ \sigma H=\sigma H_0+k^2 I_{2N} $ is positive-definite if $ k>0 $. Thus, by Theorems~\ref{prp:colpa009} and \ref{prp:colpa000}, the system has no Landau and dynamical instability.
\subsection{Block-diagonalization for $ k=0 $}\label{subsec:H0k0}
	\indent We have derived the SSB-originated zero-mode solutions in Subsec.~\ref{sec:bdwbm}. Let us write them as follows:
	\begin{align}
		\boldsymbol{x}_j^{(\alpha)}&:=\begin{pmatrix}X_j^{(\alpha)}\boldsymbol{\psi} \\ -X_j^{(\alpha)*}\boldsymbol{\psi}^* \end{pmatrix},\quad j=1,\dots,s,\quad \alpha=1,2. \label{eq:xjalphas} \\
		\boldsymbol{y}_j&:=\begin{pmatrix}Y_j\boldsymbol{\psi} \\ -Y_j^*\boldsymbol{\psi}^* \end{pmatrix},\quad j=1,\dots.r. \label{eq:yjs}
	\end{align}
	All of them are zero-norm vectors and have the symmetry $ \boldsymbol{x}_j^{(\alpha)}=-\tau\boldsymbol{x}_j^{(\alpha)*} $ and $ \boldsymbol{y}_j=-\tau\boldsymbol{y}_j^* $. Since $ (\boldsymbol{x}_j^{(1)},\boldsymbol{x}_j^{(2)})_\sigma\ne0 $ by Eq.~(\ref{eq:zerortho05}), we can construct a \textit{finite-norm} eigenvector of $ H_0 $ from their linear combination:
	\begin{align}
		\boldsymbol{x}_i=\frac{\boldsymbol{x}_i^{(1)}-\mathrm{i}\boldsymbol{x}_i^{(2)}}{\sqrt{2\mu_i}},\ \tau\boldsymbol{x}_i^*&=-\frac{\boldsymbol{x}_i^{(1)}+\mathrm{i}\boldsymbol{x}_i^{(2)}}{\sqrt{2\mu_i}}. \label{eq:defxi}
	\end{align}
	We can check that $ \boldsymbol{x}_i $ and $ \tau\boldsymbol{x}_i^* $ have positive and negative norm, respectively. 
	As shown in Subsec.~\ref{sec:symzero}, they are eigenvectors of $ H_0 $ with zero eigenvalue:
	\begin{align}
		H_0\boldsymbol{x}_j=\boldsymbol{0},\quad H_0\tau\boldsymbol{x}_j^*=\boldsymbol{0},\quad H_0\boldsymbol{y}_j=\boldsymbol{0}.
	\end{align}
	From Eqs. (\ref{eq:zerortho01})-(\ref{eq:zerortho05}), they satisfy the $ \sigma $-orthogonal relations
	\begin{align}
		&(\boldsymbol{x}_i,\boldsymbol{x}_j)_\sigma=-(\tau\boldsymbol{x}_i^*,\tau\boldsymbol{x}_j^*)_\sigma=\delta_{ij}, \label{eq:orthozero11} \\
		&(\boldsymbol{y}_i,\boldsymbol{y}_j)_\sigma=(\boldsymbol{y}_i,\boldsymbol{x}_j)_\sigma=(\boldsymbol{y}_i,\tau\boldsymbol{x}_j^*)_\sigma=(\boldsymbol{x}_i,\tau\boldsymbol{x}_j^*)_\sigma=0, \label{eq:orthozero111}
	\end{align}
	and the orthogonal relations for the ordinary inner product  $ (\boldsymbol{a},\boldsymbol{b})_{\mathbb{C}}:=(\boldsymbol{a},\sigma\boldsymbol{b})_\sigma=(\sigma\boldsymbol{a},\boldsymbol{b})_\sigma=\boldsymbol{a}^\dagger\boldsymbol{b} $: 
	\begin{align}
		&(\boldsymbol{x}_i,\sigma\boldsymbol{x}_j)_\sigma=(\tau\boldsymbol{x}_i^*,\sigma\tau\boldsymbol{x}_j^*)_\sigma=\frac{1}{\mu_i}\delta_{ij},\quad (\boldsymbol{y}_i,\sigma\boldsymbol{y}_j)_\sigma=2\delta_{ij}, \label{eq:orthozero119} \\
		&(\boldsymbol{x}_i,\sigma\tau\boldsymbol{x}_j^*)_\sigma=(\boldsymbol{y}_i,\sigma\boldsymbol{x}_j)_\sigma=(\boldsymbol{y}_i,\sigma\tau\boldsymbol{x}_j^*)_\sigma=0. \label{eq:orthozero12} 
	\end{align}
	In view of the application to the perturbation theory, it is useful and favorable to write down all orthogonal relations only using $ (\cdot,\cdot)_\sigma $ and without using $ (\cdot,\cdot)_{\mathbb{C}} $. 
	Equation (\ref{eq:orthozero111}) implies that  $ \boldsymbol{y}_i $'s are $ \sigma $-orthogonal to all other zero-mode solutions. On the other hand, the pair $ \boldsymbol{x}_i^{(1)} $ and $ \boldsymbol{x}_i^{(2)} $ has a nonzero $ \sigma $-inner product, and hence, we can construct the finite-norm eigenvectors $ \boldsymbol{x}_i $ and $ \tau\boldsymbol{x}_i^* $. As already mentioned in the Introduction (Sec.~\ref{sec:intro}), and as we will see in Subsecs. \ref{subsec:gram} and \ref{subsec:finitek}, these $ \sigma $-orthogonal relations are directly related to the dispersion relations of NGMs, and the number of type-II modes is in fact a half of the number of finite-norm zero-energy eigenvectors. \\
	\indent Let us assume that $ H_0 $ has no other eigenvectors with zero eigenvalue. Then, $ \boldsymbol{y}_i $'s are $ \sigma $-orthogonal to all other eigenvectors and become a constituent of the non-diagonalizable singular part in Theorem~\ref{prplambdaKdecomp}.  By Theorem~\ref{prp:colpa001}, $ H_0 $ can be block-diagonalized as
	\begin{align}
		U^{-1}H_0U&=\begin{pmatrix} K &&&K&& \\ & O_s &&&& \\ &&\Lambda&&& \\ -K&&& -K && \\  &&&&O_s & \\ &&&&& -\Lambda \end{pmatrix}, \label{eq:UiHU0U} \\
		K&=\operatorname{diag}(\kappa_1,\dots,\kappa_r),\quad \kappa_i>0, \\
		\Lambda&=\operatorname{diag}(\lambda_1,\dots,\lambda_m),\quad \lambda_i>0,
	\end{align}
	where $ O_s $ is a zero matrix of size $ s $, $ m+s+r=N $, and $ U $ has the form of 
	\begin{align}
		U=&\left(\tfrac{\boldsymbol{y}_1+\boldsymbol{z}_1}{2},\dots,\tfrac{\boldsymbol{y}_r+\boldsymbol{z}_r}{2}, \boldsymbol{x}_1,\dots,\boldsymbol{x}_s,\boldsymbol{w}_1,\dots,\boldsymbol{w}_m,\tfrac{-\boldsymbol{y}_1+\boldsymbol{z}_1}{2},\dots,\tfrac{-\boldsymbol{y}_r+\boldsymbol{z}_r}{2},\tau\boldsymbol{x}_1^*,\dots,\tau\boldsymbol{x}_s^*,\tau\boldsymbol{w}_1^*,\dots,\tau\boldsymbol{w}_m^* \right). \label{eq:Ucolumns}
	\end{align}
	Here,  $ \boldsymbol{w}_i $ is a positive-norm eigenvector with a positive eigenvalue $ \lambda_i $:
	\begin{align}
		&H_0\boldsymbol{w}_i=\lambda_i\boldsymbol{w}_i,\quad H_0\tau\boldsymbol{w}_i^*=-\lambda_i\tau\boldsymbol{w}_i^*, \\
		&(\boldsymbol{w}_i,\boldsymbol{w}_j)_\sigma=-(\tau\boldsymbol{w}_i^*,\tau\boldsymbol{w}_j^*)_\sigma=\delta_{ij},
	\end{align}
	and $ \boldsymbol{z}_i $ is a generalized eigenvector satisfying
	\begin{align}
		&H_0\boldsymbol{z}_i=2\kappa_i\boldsymbol{y}_i,\quad \boldsymbol{z}_i=\tau\boldsymbol{z}_i^*, \\
		&(\boldsymbol{z}_i,\boldsymbol{z}_j)_\sigma=0,\quad (\boldsymbol{y}_i,\boldsymbol{z}_j)_\sigma=2\delta_{ij}.
	\end{align}
	By Theorem~ C.\ref{prp:K2}(ii), if $ H_0 $ has only zero eigenvalue, 
	\begin{align}
		\boldsymbol{z}_i&=\sigma\boldsymbol{y}_i,\\
		\sigma H_0\boldsymbol{z}_i&=2\kappa_i\boldsymbol{z}_i
	\end{align}
	hold, and practically we often encounter such case (see Sec.~\ref{sec:examples}). However, at a general level,  $ \boldsymbol{z}_j $ does not have a closed-form expression. Note that the values of $ \mu_i $ and $ \kappa_i $ are uniquely fixed by normalization conditions $ (\boldsymbol{x}_i,\boldsymbol{x}_i)_\sigma=1 $ and $ (\boldsymbol{y}_i,\sigma\boldsymbol{y}_i)_\sigma=(\boldsymbol{y}_i,\boldsymbol{z}_i)_\sigma=2 $.  All other  $ \sigma $-inner products not shown vanish because of B-unitarity of $ U $.\\ 
	\indent Using the notations defined so far,  $ H_0 $ can be written as
	\begin{align}
		H_0=\sum_{i=1}^m \lambda_i\boldsymbol{w}_i\boldsymbol{w}_i^\dagger\sigma+\sum_{i=1}^m \lambda_i\tau\boldsymbol{w}_i^*\boldsymbol{w}_i^T\tau\sigma+\sum_{i=1}^r\kappa_i\boldsymbol{y}_i\boldsymbol{y}_i^\dagger\sigma.
	\end{align}
	It is obtained by multiplying Eq.~(\ref{eq:UiHU0U}) by $U$ from left and $U^{-1}$ from right. An analog of completeness relation [Subsec.~\ref{subsec:bhbu}, (xii)] is given by
	\begin{align}
		I_{2N}=\sum_{i=1}^m\boldsymbol{w}_i\boldsymbol{w}_i^\dagger\sigma-\sum_{i=1}^m\tau\boldsymbol{w}_i^*\boldsymbol{w}_i^T\tau\sigma+\sum_{i=1}^s\boldsymbol{x}_i\boldsymbol{x}_i^\dagger\sigma-\sum_{i=1}^s\tau\boldsymbol{x}_i^*\boldsymbol{x}_i^T\tau\sigma+\sum_{i=1}^r\frac{\boldsymbol{y}_i\boldsymbol{z}_i^\dagger+\boldsymbol{z}_i\boldsymbol{y}_i^\dagger}{2}\sigma. \label{eq:closure1}
	\end{align}
	Since $ U $ [Eq.~(\ref{eq:Ucolumns})] is B-unitary, the set of column vectors
	\begin{align}
		\left\{\tfrac{\boldsymbol{y}_1+\boldsymbol{z}_1}{2},\dots,\tfrac{\boldsymbol{y}_r+\boldsymbol{z}_r}{2}, \boldsymbol{x}_1,\dots,\boldsymbol{x}_s,\boldsymbol{w}_1,\dots,\boldsymbol{w}_m,\tfrac{-\boldsymbol{y}_1+\boldsymbol{z}_1}{2},\dots,\tfrac{-\boldsymbol{y}_r+\boldsymbol{z}_r}{2},\tau\boldsymbol{x}_1^*,\dots,\tau\boldsymbol{x}_s^*,\tau\boldsymbol{w}_1^*,\dots,\tau\boldsymbol{w}_m^* \right\} \label{H0basis}
	\end{align}
	is a B-orthonormal basis, i.e., they are  $ \sigma $-orthogonal to each other, and every vector is normalized and the first $N$ vectors have positive norm and the rest have negative norm. \\ 
	\indent \underline{\textit{Remark}}: Here we have derived the $ \sigma $-orthogonal relations (\ref{eq:orthozero11})-(\ref{eq:orthozero12}) from the properties of SSB-originated zero modes Eqs. (\ref{eq:zerortho01})-(\ref{eq:zerortho05}). However, we can prove that for any B-hermitian matrix $ H_0 $ such that $ \sigma H_0 $ is positive-semidefinite, we can always take a $ \sigma $-orthonormal basis for an eigenspace of zero eigenvalue such that Eqs.~(\ref{eq:orthozero11})-(\ref{eq:orthozero12}) hold. (See \ref{app:basischoice}.) Thus, our theory shown here is applicable even for cases where there are accidental zero-energy eigenvectors which do not have an origin in SSB. \\ 
	\indent For example, if  $ \kappa_1=0 $ occurs by a fine-tuning of a system parameter,  $ \boldsymbol{z}_1 $ also becomes an eigenvector with zero eigenvalue, and $ \boldsymbol{x}_{s+1}=\frac{\boldsymbol{y}_1+\boldsymbol{z}_1}{2} $ and $ \tau\boldsymbol{x}_{s+1}^*=\frac{-\boldsymbol{y}_1+\boldsymbol{z}_1}{2} $ become new finite-norm eigenvectors with zero eigenvalue. Also, if $ \lambda_1=0 $ occurs,  $ \boldsymbol{x}_{s+1}=\boldsymbol{w}_1 $ becomes a new positive-norm eigenvector with zero eigenvalue. These eigenvectors are not originated from an SSB, but yield a gapless mode with type-II dispersion. 
\subsection{Gram matrix}\label{subsec:gram}
	Let us consider the Gram matrix with respect to  $ \sigma $-inner products for zero-mode solutions Eqs. (\ref{eq:xjalphas}) and (\ref{eq:yjs}). We define $ (2N)\times(2s+r) $ matrix by an array of zero-energy eigenvectors (\ref{eq:xjalphas}) and (\ref{eq:yjs}):
	\begin{align}
		A=(\boldsymbol{x}_1^{(1)},\boldsymbol{x}_1^{(2)},\dots,\boldsymbol{x}_s^{(1)},\boldsymbol{x}_s^{(2)},\boldsymbol{y}_1,\dots,\boldsymbol{y}_r). \label{eq:gramA}
	\end{align}
	Then, the Gram matrix of size $ (2s+r)\times(2s+r) $ for these zero-mode solutions can be defined as
	\begin{align}
		P=A^\dagger\sigma A, \label{eq:graminternal}
	\end{align}
	whose components provide the list of values of $ \sigma $-inner products between SSB-originated zero-mode solutions. By definition, it is equal to the WB matrix (\ref{eq:WB002}) up to a constant factor:
	\begin{align}
		P=-\mathrm{i}\rho.
	\end{align}
	From $ \sigma $-orthogonal relations (\ref{eq:orthozero11}) and (\ref{eq:orthozero111}), we immediately find
	\begin{align}
		P=\tilde{M}_1\oplus\dotsb\oplus \tilde{M}_s\oplus O_r,\quad \tilde{M}_i=\begin{pmatrix} 0 & \mathrm{i}\mu_i \\ -\mathrm{i}\mu_i & 0 \end{pmatrix},
	\end{align}
	which is the same with Eq.~(\ref{eq:WB003}). Then, $ s=\frac{1}{2}\operatorname{rank}P $ gives a number of pairs of zero-mode solutions having nonvanishing  $ \sigma $-inner products. As shown later, they give type-II modes. \\ 
	\indent It is obvious that the rank is independent of a choice of the basis, because the rank of $ P $ and $ Q^\dagger P Q $ are the same, where $ Q $ is an invertible matrix. For example, instead of $ \boldsymbol{x}_j^{(\alpha)} $'s, we can use finite-norm vectors, i.e., $ \boldsymbol{x}_j $'s and $ \tau\boldsymbol{x}_j^* $'s:
	\begin{align}
		\tilde{A}&=(\boldsymbol{x}_1,\dots,\boldsymbol{x}_s,\tau\boldsymbol{x}_1^*,\dots,\tau\boldsymbol{x}_s^*,\boldsymbol{y}_1,\dots,\boldsymbol{y}_r), \\
		\tilde{P}&=\tilde{A}^\dagger\sigma \tilde{A}=I_s\oplus(-I_s)\oplus O_r
	\end{align}
	So, we can say that $ 2s $ is the number of finite-norm eigenvectors with zero eigenvalue. \\ 
	\indent Because of the existence of the basis shown in \ref{app:basischoice}, this counting method also remains valid even when there exist accidental zero-energy solutions as stated in the preceding remark. We also mention that the Gram matrix plays a fundamental role in proving fundamental theorems. See its usage in \ref{app:proof} and \ref{app:basischoice}. 
\subsection{Perturbation theory for finite $ k $}\label{subsec:finitek}
	\indent In what follows, we calculate approximate eigenvalues and eigenvectors of $ H=H_0+k^2\sigma  $ by perturbation theory. 
	As we will see,
	\begin{itemize}
		\item The block of $ K $ in Eq.~(\ref{eq:UiHU0U}) gives type-I modes.
		\item The block of $ O_s $ in Eq.~(\ref{eq:UiHU0U}) gives type-II modes.
		\item The block of $ \Lambda $ in Eq.~(\ref{eq:UiHU0U}) gives gapful modes.
	\end{itemize}
	We emphasize that once we have arrived at the standard form of $ H_0 $ [Eq.~(\ref{eq:UiHU0U})], we can completely ``forget'' the physical origin of each of zero modes in the following perturbative calculation. Namely, whether a given zero-energy eigenvector of $ H_0 $ is originated from SSB or is only an accidental solution due to some fine-tuning of system parameters does not have an influence on the following calculation. \\ 
	\indent Let us write the perturbation expansion of eigenvector $ \boldsymbol{\xi} $  by parameter $ k $ as  $ \boldsymbol{\xi}=\boldsymbol{\xi}_0+k\boldsymbol{\xi}_1+k^2\boldsymbol{\xi}_2+\dotsb $, and the expansion of eigenvalue as $ \epsilon=\epsilon_0+\epsilon_1 k+\epsilon_2k^2+\dotsb $. The equations up to second order are given by
	\begin{align}
		H_0\boldsymbol{\xi}_0&=\epsilon_0\boldsymbol{\xi}_0, \\
		H_0\boldsymbol{\xi}_1&=\epsilon_1\boldsymbol{\xi}_0+\epsilon_0\boldsymbol{\xi}_1, \\
		\sigma\boldsymbol{\xi}_0+H_0\boldsymbol{\xi}_2&=\epsilon_2\boldsymbol{\xi}_0+\epsilon_1\boldsymbol{\xi}_1+\epsilon_0\boldsymbol{\xi}_2.
	\end{align}
	If we are interested in the case where  $ \boldsymbol{\xi}_0 $ is an eigenvector of $ H_0 $ with zero eigenvalue, we can set $ H_0\boldsymbol{\xi}_0=0 $ and $ \epsilon_0=0 $. The zeroth order then becomes an identity, and the first and the second order equations become
	\begin{align}
		H_0\boldsymbol{\xi}_1&=\epsilon_1\boldsymbol{\xi}_0, \\
		\sigma\boldsymbol{\xi}_0+H_0\boldsymbol{\xi}_2&=\epsilon_2\boldsymbol{\xi}_0+\epsilon_1\boldsymbol{\xi}_1. \label{eq:prtbX2}
	\end{align}
	In the well-known perturbation theory of hermitian matrices, the expansion is made by the power of a perturbation parameter, which is $ k^2 $ in the present case. However, since we now take a non-diagonalizable and non-hermitian matrix as $ H_0 $, we need to modify the theory. In the present case, the perturbative expansion works well if we expand eigenvectors and eigenvalues by the square root of the perturbation parameter, i.e., $ k=\sqrt{k^2} $. As we will see below, if we do not consider the term of $ O(k^1) $, the coefficient for a zeroth order solution $ \boldsymbol{y}_i $ vanishes.\\ 
	\indent As a zeroth order solution, $ \boldsymbol{\xi}_0 $ can take eigenvectors of zero eigenvalue, i.e., $ \boldsymbol{x}_i,\ \tau\boldsymbol{x}_i^*, $ and $ \boldsymbol{y}_i $. So let us write
	\begin{align}
		\boldsymbol{\xi}_0&=\sum_{i=1}^sa_i^{(0)}\boldsymbol{x}_i+\sum_{i=1}^sb_i^{(0)}\tau\boldsymbol{x}_i^*+\sum_{i=1}^rc_i^{(0)}\boldsymbol{y}_i
	\end{align}
	The higher order terms $ \boldsymbol{\xi}_1,\boldsymbol{\xi}_2,\dots $ can contain all kinds of vectors in the basis (\ref{H0basis}), but we can always eliminate the component of zeroth order solutions by using the arbitrariness such that we can add them to the higher order terms. So, we set
	\begin{align}
		\boldsymbol{\xi}_j&=\sum_{i=1}^rd_i^{(j)}\boldsymbol{z}_i+\sum_{i=1}^{m}\alpha_i^{(j)}\boldsymbol{w}_i+\sum_{i=1}^{m}\beta_i^{(j)}\tau\boldsymbol{w}_i^*,\quad (j\ge 1). \label{eq:prtrbxij}
	\end{align}
	Let us begin to solve the perturbation equation. In general $ H_0\boldsymbol{\xi}_j $ is given by
	\begin{align}
		H_0\boldsymbol{\xi}_j=\sum_{i=1}^r2d_i^{(j)}\kappa_i\boldsymbol{y}_i+\sum_{i=1}^{m}\alpha_i^{(j)}\lambda_i\boldsymbol{w}_i-\sum_{i=1}^{m}\beta_i^{(j)}\lambda_i\tau\boldsymbol{w}_i^*. \label{eq:prtrbxij2}
	\end{align}
	Using this, the first order equation $ H_0\boldsymbol{\xi}_1-\epsilon_1\boldsymbol{\xi}_0=0 $ can be written as
	\begin{align}
		\sum_{i=1}^r(2d_i^{(1)}\kappa_i-\epsilon_1c_i^{(0)})\boldsymbol{y}_i+\sum_{i=1}^{m}\lambda_i(\alpha_i^{(1)}\boldsymbol{w}_i-\beta_i^{(1)}\tau\boldsymbol{w}_i^*)-\epsilon_1\sum_{i=1}^s(a_i^{(0)}\boldsymbol{x}_i+b_i^{(0)}\tau\boldsymbol{x}_i^*)=0.
	\end{align}
	Since the vectors in Eq.~(\ref{H0basis}) are linearly independent, all coefficients of this equation vanish. Thus,
	\begin{align}
		2\kappa_id_i^{(1)}-\epsilon_1c_i^{(0)}&=0, \label{eq:prtrb1storder} \\
		\epsilon_1a_i^{(0)}= \epsilon_1b_i^{(0)}&=0, \label{eq:prtrb1storder2} \\
		\alpha_i^{(1)}=\beta_i^{(1)}&=0. \label{eq:prtrb1storder3}
	\end{align}
	This means $ \boldsymbol{\xi}_1 $ is proportional to $ \epsilon_1 $ and only contains $ \boldsymbol{z}_i $'s:
	\begin{align}
		\boldsymbol{\xi}_1=\epsilon_1\sum_{i=1}^r\frac{c_i^{(0)}}{2\kappa_i}\boldsymbol{z}_i. \label{eq:prtrb1storder4}
	\end{align}
	Henceforth, we consider two cases: $ \epsilon_1\ne0 $ and $ \epsilon_1=0 $. \\ 
	\indent First, let us consider the case $ \epsilon_1\ne0 $. Then $ a_i^{(0)}=b_i^{(0)}=0 $ from Eq.~(\ref{eq:prtrb1storder2}). To determine $ \epsilon_1 $ and $  d_i^{(1)} $, we need one more relation. To derive this, let us take the $ \sigma $-inner product between $ \boldsymbol{y}_i $ and the second order equation (\ref{eq:prtbX2}):
	\begin{align}
		(\boldsymbol{y}_i,\sigma\boldsymbol{\xi}_0)_\sigma=\epsilon_2(\boldsymbol{y}_i,\boldsymbol{\xi}_0)_\sigma+\epsilon_1(\boldsymbol{y}_i,\boldsymbol{\xi}_1)_\sigma
	\end{align}
	Using the $ \sigma $-orthogonal relations of the basis vectors in Eq.~(\ref{H0basis}) and Eqs.~(\ref{eq:orthozero119}) and (\ref{eq:orthozero12}), it reduces to
	\begin{align}
		c_i^{(0)}=\epsilon_1d_i^{(1)}. \label{eq:prtrb1storder02}
	\end{align}
	Therefore, if there is no degeneracy in $ \kappa_i $'s, Eqs. (\ref{eq:prtrb1storder}) and (\ref{eq:prtrb1storder02}) have a solution only when one $ c_j^{(0)} $ is nonzero and all other $ c_i^{(0)} $'s $(i\ne j)$ are zero, and the solution is given by
	\begin{align}
		\epsilon_1=\pm\sqrt{2\kappa_j},\ d_j^{(1)}=\pm\frac{a_j^{(0)}}{\sqrt{2\kappa_j}}.
	\end{align}
	Therefore, if we use $ \boldsymbol{\xi}_0=\boldsymbol{y}_j $ as a zeroth order ``seed'' solution, the eigenvalue and the eigenvector up to first order become
	\begin{align}
		\epsilon&=\pm\sqrt{2\kappa_j}k+O(k^2), \label{eq:typeIvalue} \\
		\boldsymbol{\xi}&=\boldsymbol{y}_j\pm\frac{k}{\sqrt{2\kappa_j}}\boldsymbol{z}_j+O(k^2).\label{eq:typeIvector}
	\end{align}
	We thus obtain the linear dispersion of the NGM originated from $ \boldsymbol{y}_j $. The detailed calculation in \ref{app:typeI2nd} shows that the second order energy vanishes $ (\epsilon_2=0) $ and the expression for  $ \boldsymbol{\xi}_2 $ is given by Eq.~(\ref{eq:appthrid0I}). Thus, 
	\begin{align}
		\epsilon&=\pm\sqrt{2\kappa_j}k+O(k^3), \label{eq:prtrb2storder01} \\
		\boldsymbol{\xi}&=\boldsymbol{y}_j\pm\frac{k}{\sqrt{2\kappa_j}}\boldsymbol{z}_j-k^2\left[ \sum_{i=1}^r\frac{\boldsymbol{z}_i\boldsymbol{z}_i^\dagger}{4\kappa_i}+\sum_{i=1}^{m}\frac{\boldsymbol{w}_i\boldsymbol{w}_i^\dagger}{\lambda_i}+\sum_{i=1}^{m}\frac{\tau\boldsymbol{w}_i^*\boldsymbol{w}_i^T\tau}{\lambda_i} \right]\boldsymbol{y}_j+O(k^3). \label{eq:prtrb2storder02}
	\end{align}
	\indent Next, let us consider the case $ \epsilon_1=0 $. In this case $ \boldsymbol{\xi}_1=\boldsymbol{0} $ by Eq.~(\ref{eq:prtrb1storder4}). 
	Thus the perturbation expansion is given by the power of $ k^2 $, as similar to the conventional perturbation theory. In this case the second order equation (\ref{eq:prtbX2}) becomes 
	\begin{align}
		\sigma \boldsymbol{\xi}_0+H_0\boldsymbol{\xi}_2=\epsilon_2\boldsymbol{\xi}_0. \label{eq:prtbX2-2}
	\end{align}
	Taking the $ \sigma $-inner product between  $ \boldsymbol{y}_i $ and Eq.~(\ref{eq:prtbX2-2}) and using Eqs.~(\ref{eq:orthozero119}) and (\ref{eq:orthozero12}), we first obtain $ c_i^{(0)}=0 $. This result means that if we do not consider the term of $ O(k^1) $, the coefficient of $ \boldsymbol{y}_i $ in zeroth order vanishes. The $ \sigma $-inner product between $ \boldsymbol{x}_i $ and Eq.~(\ref{eq:prtbX2-2}) with using Eqs.~(\ref{eq:orthozero119}) and (\ref{eq:orthozero12}) yields
	\begin{align}
		\epsilon_2=\frac{(\boldsymbol{x}_i,\sigma\boldsymbol{x}_i)_\sigma}{(\boldsymbol{x}_i,\boldsymbol{x}_i)_\sigma}=\frac{1}{\mu_i}
	\end{align}
	Similarly, if we take the $ \sigma $-inner product between $ \tau\boldsymbol{x}_i^* $ and Eq.~(\ref{eq:prtbX2-2}), we obtain
	\begin{align}
		\epsilon_2=\frac{(\tau\boldsymbol{x}_i^*,\sigma\tau\boldsymbol{x}_i^*)_\sigma}{(\tau\boldsymbol{x}_i^*,\tau\boldsymbol{x}_i^*)_\sigma}=-\frac{1}{\mu_i}
	\end{align}
	So we obtain $ \epsilon_2=\pm \mu_i^{-1} $. Thus, if we begin with $ \boldsymbol{\xi}_0=\boldsymbol{x}_i $ or  $ \tau\boldsymbol{x}_i^* $, the eigenvalue up to second order becomes
	\begin{align}
		\epsilon=\pm \frac{1}{\mu_i}k^2+O(k^4). \label{eq:typeiidsprsn}
	\end{align}
	We thus obtain the quadratic dispersion relation for type-II modes. Here we have used $ \epsilon_3=0 $, which is shown in \ref{app:typeI2nd}. We also obtain the eigenvector by a conventional procedure:
	\begin{align}
		\boldsymbol{\xi}=\boldsymbol{\xi}_0-\left[ \sum_{i=1}^r\frac{\boldsymbol{z}_i\boldsymbol{z}_i^\dagger}{4\kappa_i}+\sum_{i=1}^{m}\frac{\boldsymbol{w}_i\boldsymbol{w}_i^\dagger}{\lambda_i}+\sum_{i=1}^{m}\frac{\tau\boldsymbol{w}_i^*\boldsymbol{w}_i^T\tau}{\lambda_i} \right]\boldsymbol{\xi}_0k^2+O(k^4), \label{eq:typeiidsprsn2}
	\end{align}
	where  $ \boldsymbol{\xi}_0=\boldsymbol{x}_i  $ or $  \tau\boldsymbol{x}_i^* $. Here $ \boldsymbol{\xi}_3=\boldsymbol{0} $ is also shown in \ref{app:typeI2nd}. 
	We can indeed check that these eigenvectors and eigenvalues become a solution of $ H\boldsymbol{\xi}=\epsilon\boldsymbol{\xi} $ up to second order by direct substitution and using the completeness relation (\ref{eq:closure1}). \\
	\indent Before closing this section, let us consider the lower bound of the coefficient $ \epsilon_2 $ in detail. Let us write $ \boldsymbol{x}_i $ as $ \boldsymbol{x}_i=(\boldsymbol{u},\boldsymbol{v})^T,\ \boldsymbol{u},\boldsymbol{v}\in \mathbb{C}^N $. Since $ \boldsymbol{x}_i $ is a normalized positive-norm eigenvector,  $ (\boldsymbol{x}_i,\boldsymbol{x}_i)_\sigma=\boldsymbol{u}^\dagger\boldsymbol{u}-\boldsymbol{v}^\dagger\boldsymbol{v}=1 $. Using this, we obtain the inequality
	\begin{gather}
		\epsilon_2^2=\mu_i^{-2}=(\boldsymbol{u}^\dagger\boldsymbol{u}+\boldsymbol{v}^\dagger\boldsymbol{v})^2=1+4(\boldsymbol{u}^\dagger\boldsymbol{u})(\boldsymbol{v}^\dagger\boldsymbol{v})\ge 1. \label{eq:typeiidpndnt0}
	\end{gather}
	The equality holds if and only if $ \boldsymbol{v}=\boldsymbol{0} $. From the definition of $ \boldsymbol{x}_i $ [Eq.~(\ref{eq:defxi})],
	\begin{align}
		\boldsymbol{v}=\boldsymbol{0} \quad\leftrightarrow \quad X_i^{(1)}\boldsymbol{\psi}+\mathrm{i}X_i^{(2)}\boldsymbol{\psi}=\boldsymbol{0}. \label{eq:typeiidpndnt}
	\end{align}
	Namely, \textit{$ |\epsilon_2|=1 $ holds only when two zero modes $ X_i^{(1)}\boldsymbol{\psi} $ and $ X_i^{(2)}\boldsymbol{\psi} $ are linearly dependent.} We can indeed find an example of $ \epsilon_2>1 $ as follows. In the spin-3 BEC F phase, $ \boldsymbol{\psi}=(0,1,0,0,0,0,0)^T $, and two zero modes are given by $ F_x\boldsymbol{\psi}=\frac{1}{2}(\sqrt{6},0,\sqrt{10},0,0,0,0)^T $ and $  F_y\boldsymbol{\psi}=\frac{1}{2}(-i\sqrt{6},0,i\sqrt{10},0,0,0,0)^T $. The normalized eigenvector  $ \boldsymbol{x} $ is then constructed as $ \boldsymbol{x}=(\boldsymbol{u},\boldsymbol{v})^T $ with $ \boldsymbol{u}=\frac{1}{2}(0,0,\sqrt{10},0,0,0,0)^T,\ \boldsymbol{v}=\frac{1}{2}(-\sqrt{6},0,0,0,0,0,0)^T $. We then obtain $ \epsilon_2=\frac{(\boldsymbol{x},\sigma\boldsymbol{x})_\sigma}{(\boldsymbol{x},\boldsymbol{x})_\sigma}=4 $ and the dispersion relation becomes $ \epsilon=4k^2+O(k^4). $ This result is consistent with the exact solution in Subsec.~\ref{subsec:spin3}. Note that this $ \epsilon_2 $ is determined only by the form of $ \boldsymbol{\psi} $ and does not depend on the system parameters, e.g., the coupling constants. 
\section{Examples in spinor Bose-Einstein condensates}\label{sec:examples}
	\indent In this section, we illustrate the general results shown in Sec.~\ref{sec:perturb} by examples of spinor BECs. For spin-$F$ BECs ($F\le 2$), we treat all phases appearing in the phase diagram with zero magnetic field. Probably the spin-1 ferromagnetic phase is a helpful example to understand the standard form of $ H_0 $ [Eq.~(\ref{eq:UiHU0U})], because it has one type-I, one type-II, and one gapful excitations. The spin-2 nematic phase is an interesting example since it has quasi-NGMs. We also consider a few phases of spin-3 BECs, since they show a few new behaviors which are absent in spin-$F$ BECs with $ F\le2 $. See the beginning of Subsec.~\ref{subsec:spin3} for more detail. \\ 
	\indent When $ H_0 $ or some block of $ H_0 $ has only type-I modes, we can use Theorem~C.\ref{prp:K2} to determine the coefficient of the type-I dispersion relation. This is demonstrated in the spin-2 nematic and spin-3 H phases. The spin-0 and the spin-1 polar BEC are also the case, but we do not need to use this technique because the equation is simple.
\subsection{Scalar (spin-0) BEC}
	This is the simplest example such that $ H_0 $ becomes a non-diagonalizable matrix and the type-I NGM appears. The Hamiltonian density with a chemical potential term is given by
	\begin{align}
		h=|\nabla\psi|^2-\mu|\psi|^2+c_0|\psi|^4,
	\end{align}
	where $c_0(>0)$ is a two-body interaction parameter and assumed to be positive in order to stabilize a spatially uniform condensate. The nonlinear Schr\"odinger or the GP equation is given by $ \mathrm{i}\partial_t\psi=-\nabla^2\psi-\mu\psi+2c_0|\psi|^2\psi $, and a uniform solution is given by $ \psi=\sqrt{\rho_0} $ with $ \mu=2c_0\rho_0 $. The Bogoliubov equation is given by
	\begin{align}
		&H\begin{pmatrix}u \\ v \end{pmatrix}=\epsilon\begin{pmatrix}u \\ v \end{pmatrix}, \label{eq:scalarbogo} \\
		&H=H_0+\sigma k^2,\quad H_0=2c_0\rho_0\begin{pmatrix} 1 & 1 \\ -1 & -1 \end{pmatrix}.
	\end{align}
	Thus, $ H_0 $ has $ 1\times 1 \ K $-part and no $ O_s $- and $ \Lambda $-part in the standard form (\ref{eq:UiHU0U}). The system has a $U(1)$-gauge symmetry $ \psi \rightarrow \mathrm{e}^{\mathrm{i}\alpha}\psi $, and the SSB-originated zero mode solution from this symmetry and the corresponding generalized eigenvector are given by
	\begin{align}
		\boldsymbol{y}_1=\begin{pmatrix}1 \\ -1 \end{pmatrix},\quad \boldsymbol{z}_1=\begin{pmatrix}1 \\ 1\end{pmatrix}.
	\end{align}
	They satisfy
	\begin{align}
		&H_0\boldsymbol{y}_1=\boldsymbol{0},\quad H_0\boldsymbol{z}_1=2\kappa_1\boldsymbol{y}_1,\quad \kappa_1=2c_0\rho_0, \\
		&(\boldsymbol{y}_1,\sigma\boldsymbol{y}_1)_\sigma=2,\quad (\boldsymbol{y}_1,\boldsymbol{z}_1)_\sigma=2.
	\end{align}
	The perturbative expansions of them [Eqs. (\ref{eq:typeIvalue}) and (\ref{eq:typeIvector})] are
	\begin{align}
		\epsilon&=\pm 2\sqrt{c_0\rho_0}k+\dotsb, \label{eq:scalarphonon} \\
		\boldsymbol{\xi}&=\boldsymbol{y}_1\pm\frac{k}{2\sqrt{c_0\rho_0}}\boldsymbol{z}_1+\dotsb,
	\end{align}
	which are consistent with the dispersion relation $ \epsilon=\pm\sqrt{4c_0\rho_0k^2+k^4} $ obtained by directly solving Eq.~(\ref{eq:scalarbogo}).
\subsection{Spinor BECs: general}\label{sec:spinorgen}
	Before going to a variety of phases in spin-1, 2, and 3 BECs, we summarize a common aspect of spinor BECs. \\
	\indent The order parameter of the spin-$ F $ BEC consists of $ (2F+1) $-components: $ \boldsymbol{\psi}=(\psi_F,\dots,\psi_{-F})^T $. Let $ F_x,\ F_y, $ and $ F_z $ be  $ (2F+1)\times(2F+1) $ spin-$ F $ matrices. We also use the notation $ F_\pm=F_x\pm\mathrm{i}F_y $. Then, a particle density, magnetization vector, and quadrupole (or nematic) tensor are defined by
	\begin{align}
		\rho=\boldsymbol{\psi}^\dagger\boldsymbol{\psi},\quad M_i=\boldsymbol{\psi}^\dagger F_i\boldsymbol{\psi},\quad N_{ij}=\boldsymbol{\psi}^\dagger \frac{F_iF_j+F_jF_i}{2}\boldsymbol{\psi}, \label{eq:rhoMNdef}
	\end{align}
	respectively, where the indices can take either $ i,j=x,y,z $ or $ i,j=z,+,- $. They behave as rank 0, 1, and 2 tensors under $ SO(3) $-rotation and are invariant under $ U(1) $-gauge transformation, $ \boldsymbol{\psi}'=\mathrm{e}^{\mathrm{i}\varphi}\boldsymbol{\psi} $. We can similarly define octupole and more general $ 2^n $-pole tensors as  $ O_{ijk}=\boldsymbol{\psi}^\dagger\mathcal{S}(F_iF_jF_k)\boldsymbol{\psi} $ and $ O_{i_1i_2,\dotsc,i_n}=\boldsymbol{\psi}^\dagger\mathcal{S}(F_{i_1}F_{i_2}\dotsm F_{i_n})\boldsymbol{\psi} $, where $ \mathcal{S} $ is a symmetrization operator. They behave as rank 3 and $ n $ tensors, respectively\footnote{In view of irreducibility, we should define them as a totally-symmetric traceless tensor, but we use this definition according to the convention.}. 
	By a well-known expression for $ F_i $'s, the components of the above are obtained as
	\begin{align}
		M_z&=\sum_{j=-F}^Fj|\psi_j|^2, \label{eq:FMz} \\
		M_{\pm}&=\sum_{j=-F}^F\sqrt{(F\pm j)(F\mp j+1)}\psi_j^*\psi^{}_{j\mp 1}, \\
		N_{zz}&=\sum_{j=-F}^Fj^2|\psi_j|^2, \label{eq:Nzz}\\
		N_{+-}&=\sum_{j=-F}^F(F(F+1)-j^2)|\psi_j|^2=F(F+1)\rho-N_{zz}, \label{eq:Npm}\\
		N_{z\pm}&=\sum_{j=-F}^F\frac{2j\mp 1}{2}\sqrt{(F\pm j)(F\mp j+1)}\psi_j^*\psi^{}_{j\mp 1}, \\
		N_{\pm\pm}&= \sum_{j=-F}^F\sqrt{(F\pm j)(F\mp j+1)(F\pm j-1)(F\mp j+2)}\psi_j^*\psi_{j\mp 2}^{}. \label{eq:Npmpm}
	\end{align}
	Here and hereafter,  $ \psi_j $'s with $ |j|>F $ are all ignored. 
	Using these quantities, the magnitude of the magnetization vector and the nematic tensor are given by
	\begin{align}
		\boldsymbol{M}^2=M_z^2+M_+M_-,\quad \operatorname{tr}\mathcal{N}^2=N_{zz}^2+\frac{1}{2}N_{+-}^2+2N_{z+}N_{z-}+\frac{1}{2}N_{++}N_{--}.
	\end{align}
	In addition to $ \rho $, we can consider another scalar, i.e., an inner product between $ \boldsymbol{\psi} $ and its time-reversed state:
	\begin{align}
		\Theta=\sum_{j=-F}^F(-1)^j\psi_j\psi_{-j}, \label{eq:FTheta}
	\end{align}
	which is called a singlet pair amplitude. Note that the time-reversed state is given by replacement $ \psi_j \rightarrow (-1)^j\psi_{-j}^* $. This $ \Theta $ is invariant under $ SO(3) $-spin rotation but not invariant under  $ U(1) $-gauge transformation. $ |\Theta|^2 $ is invariant under both operations. \\
	\indent The Hamiltonian density of the spin-$ F $ BEC without magnetic field generally allows scalars which are invariant under $ U(1)\times SO(3) $ transformation, i.e., the overall phase multiplication and the spin rotation: $ \boldsymbol{\psi}'=\mathrm{e}^{\mathrm{i}(\theta+\alpha F_x+\beta F_y+\gamma F_z)}\boldsymbol{\psi} $. As a one-body operator, only the density $ \rho $ is allowed, giving a term of chemical potential $ -\mu N $. We again emphasize that $ \Theta $ is not invariant under the $ U(1) $-gauge transformation. As a two-body interaction term, the candidates of invariants are $ \rho^2, |\Theta|^2, \boldsymbol{M}^2, \operatorname{tr}\mathcal{N}^2, $ and the magnitudes of higher-rank tensors, e.g., $ \sum_{i,j,k}O_{ijk}O_{ijk} $. However, we can generally check that there are only $ F+1 $ linearly independent two-body operators in the spin-$ F $ BEC. For example, we can indeed show the following relations among the invariants:
	\begin{alignat}{2}
		\boldsymbol{M}^2&=\rho^2-|\Theta|^2,\quad \operatorname{tr}\mathcal{N}^2=\frac{3}{2}\rho^2+\frac{1}{2}|\Theta|^2, &\qquad \text{(valid only for spin-1)}, \label{eq:spinor2bdyid01} \\
		\operatorname{tr}\mathcal{N}^2&=12\rho^2+6|\Theta|^2+\frac{3}{2}\boldsymbol{M}^2, &\qquad \text{(valid only for spin-2)}.
	\end{alignat}
	Thus, the Hamiltonian density of spin-$F$ BECs with $ F=1,2, $ and 3 up to two-body interaction terms are given by
	\begin{align}
		h=\sum_{j=-F}^F|\nabla\psi_j|^2-\mu\rho+h_{\text{int}} \label{eq:spinorgen00}
	\end{align}
	with
	\begin{alignat}{2}
		h_{\text{int}}&=c_0\rho^2+c_1|\Theta|^2 &\qquad (\text{spin-1}), \label{eq:spinorgen01} \\
		h_{\text{int}}&=c_0\rho^2+c_1\boldsymbol{M}^2+c_2|\Theta|^2 &\qquad (\text{spin-2}), \label{eq:spinorgen02} \\
		h_{\text{int}}&=\tilde{c}_0\rho^2+\tilde{c}_1\boldsymbol{M}^2+\frac{\tilde{c}_2}{7}|\Theta|^2+\tilde{c}_3\operatorname{tr}\mathcal{N}^2  &\qquad (\text{spin-3}). \label{eq:spinorgen03}
	\end{alignat}
	Here, the coefficients $ \tilde{c}_1,\tilde{c}_2, $ and $ \tilde{c}_3 $ in the spin-3 are taken to be the same with Fig.~8 of Ref.~\cite{PhysRevA.84.053616} for convenience of comparison.\\
	\indent The GP equation is given by
	\begin{align}
		\mathrm{i}\partial_t\psi_j=-\nabla^2\psi_j-\mu\psi_j+\frac{\partial h_{\text{int}}}{\partial\psi_j^*}. \label{eq:spinorgenGP00}
	\end{align}
	To write down the last term explicitly, we need a derivative of two-body interaction terms. From Eqs.~(\ref{eq:FMz})-(\ref{eq:FTheta}), we obtain
	\begin{align}
		\frac{\partial \rho^2}{\partial\psi_j^*}&=2\psi_j\rho, \\
		\frac{\partial |\Theta|^2}{\partial\psi_j^*}&=2(-1)^j\psi^*_{-j}\Theta, \label{eq:spinorgenGP01} \\
		\frac{\partial \boldsymbol{M}^2}{\partial \psi_j^*}&=2j\psi_jM_z+\sum_{s=\pm}\sqrt{(F-sj)(F+sj+1)}\psi_{j+s1}M_s, \label{eq:spinorgenGP02} \\
		\frac{\partial\operatorname{tr}\mathcal{N}^2}{\partial \psi_j^*}&=\left[ 2j^2N_{zz}+(F^2+F-j^2)N_{+-} \right]\psi_j+\sum_{s=\pm}(2j+s1)\sqrt{(F-sj)(F+sj+1)}\psi_{j+s1}N_{zs} \nonumber\\
		&\quad+\sum_{s=\pm}\frac{1}{2}\sqrt{(F-sj)(F+sj+1)(F-sj-1)(F+sj+2)}\psi_{j+s2}N_{ss}, \label{eq:spinorgenGP03}
	\end{align}
	where $ \psi_i $'s with $ |i|>F $ should be all ignored. We can thus write down the GP equation explicitly. Assuming a spatially uniform ground state, the GP equation reduces to $ \mu\psi_j=\frac{\partial h_{\text{int}}}{\partial \psi_j^*} $. The chemical potential is determined if we fix the density. Multiplying the GP equation by $ \psi_j^* $ and taking a summation over $ j $, we obtain 
	\begin{align}
		\mu=\frac{2h_{\text{int}}}{\rho}. \label{eq:spinormu}
	\end{align}
	Here, Euler's theorem for homogeneous functions is used\footnote{When the Hamiltonian contains $n$-body interactions with $ n\ge 3 $, the above result is modified as follows. If the interaction term in the Hamiltonian density is written as $ h_{\text{int}}=\sum_n h_{\text{int}}^{(n)} $, where $h_{\text{int}}^{(n)}$ represents the  $ n $-body interaction such as $ h_{\text{int}}^{(n)}=\sum c_{i_1\dotsm i_nj_1\dotsm j_n}\psi_{i_1}^*\dotsm\psi_{i_n}^*\psi^{}_{j_1}\dotsm\psi^{}_{j_n} $, we obtain $ \mu=\left[\sum_n nh_{\text{int}}^{(n)} \right]/\rho $ by Euler's theorem.}. \\
	\indent In order to derive the Bogoliubov equation, it is convenient to introduce a notation for linearizations of $ \rho, \Theta, M_i, $ and $ N_{ij} $. In the same way as the derivation of the Bogoliubov equation, we set $ (\delta\psi_j,\delta\psi_j^*)=(u_j,v_j) $ after linearization. Then we obtain
	\begin{align}
		\delta\rho&=\sum_{j=-F}^F(\psi_j^*u_j+\psi_jv_j), \label{eq:spinorgenlin01} \\
		\delta M_z&=\sum_{j=-F}^Fj(\psi_j^*u_j+\psi_jv_j), \label{eq:spinorgenlin03} \\
		\delta M_{\pm}&=\sum_{j=-F}^F\sqrt{(F\pm j)(F\mp j+1)}(\psi_{j\mp 1}v_j+\psi_j^*u_{j\mp 1}),  \label{eq:spinorgenlin05} \\
		\delta\Theta&=2\sum_{j=-F}^F(-1)^j\psi_{-j}u_j,\quad \delta\Theta^*=2\sum_{j=-F}^F(-1)^j\psi^*_{-j}v_j. \label{eq:spinorgenlin02}
	\end{align}
	Here, $ \psi_j, u_j, $ and $ v_j $ with $ |j|>F $ should be considered to be zero. We also define $ \delta M_x=(\delta M_++\delta M_-)/2 $ and $ \delta M_y=(M_+-M_-)/(2\mathrm{i}) $. The linearized nematic tensor $ \delta N_{ij} $ can be also written down in the same way, but we omit it. These linearized quantities can be used to \textit{characterize} each mode by what kind of physical quantity is excited. For example, the Bogoliubov phonon originated from the $ U(1) $-gauge symmetry breaking has a finite $ \delta \rho $ and vanishing $ \delta M_i $'s. The spin wave originated from $ SO(3) $-rotation symmetry breaking has no $ \delta \rho $ but finite $ \delta M_i $'s. The gapful modes generally have only fluctuations of higher-rank tensors, e.g.,  $ \delta N_{ij},\ \delta O_{ijk}, $ and so on. These features will be illustrated by the examples in the rest of this section. \\
	\indent Finally, we provide the SSB-originated zero-mode solutions and the WB matrix, which are discussed in Sec.~\ref{sec:formulation} in detail. As mentioned above, the symmetry of the Hamiltonian density of spinor BECs without external magnetic field is $ U(1)\times SO(3) $; if $ \boldsymbol{\psi} $ is a solution of the GP equation,  $ \boldsymbol{\psi}'=\mathrm{e}^{\mathrm{i}(\theta+\alpha F_x+\beta F_y+\gamma F_z)}\boldsymbol{\psi} $ also becomes a solution. (Recall the property (\ref{eq:symsol}).) Differentiating the GP equation by $ \theta,\ \alpha,\ \beta, $ and $ \gamma $, we obtain the following four SSB-originated zero-mode solutions for the Bogoliubov equation:
	\begin{align}
		\begin{pmatrix}\boldsymbol{u} \\ \boldsymbol{v} \end{pmatrix}=\begin{pmatrix} \boldsymbol{\psi} \\ -\boldsymbol{\psi}^* \end{pmatrix},\ \begin{pmatrix} F_z\boldsymbol{\psi} \\ -F_z^*\boldsymbol{\psi}^* \end{pmatrix},\ \begin{pmatrix} F_x\boldsymbol{\psi} \\ -F_x^*\boldsymbol{\psi}^* \end{pmatrix},\ \begin{pmatrix} F_y\boldsymbol{\psi} \\ -F_y^*\boldsymbol{\psi}^* \end{pmatrix}. \label{eq:spinorzeromode}
	\end{align}
	The following choice of the basis is also convenient for discussing a type-II NGM:
	\begin{align}
		\begin{pmatrix}\boldsymbol{u} \\ \boldsymbol{v} \end{pmatrix}=\begin{pmatrix} \boldsymbol{\psi} \\ -\boldsymbol{\psi}^* \end{pmatrix},\ \begin{pmatrix} F_z\boldsymbol{\psi} \\ -F_z^*\boldsymbol{\psi}^* \end{pmatrix},\ \begin{pmatrix} F_+\boldsymbol{\psi} \\ -F_-^*\boldsymbol{\psi}^* \end{pmatrix},\ \begin{pmatrix} F_-\boldsymbol{\psi} \\ -F_+^*\boldsymbol{\psi}^* \end{pmatrix}. \label{eq:spinorzeromode02}
	\end{align}
	Though the above expressions become simpler since $ F_x,\ \mathrm{i}F_y,\ F_z, $ and $ F_\pm $ are real matrices, we keep them as they are, because these expressions remind us of the general formulae [Eq.~(\ref{eq:ssbzero})].
	The WB matrix [Eq.~(\ref{eq:wbmatrix})], or equivalently, the Gram matrix [Eq.~(\ref{eq:introgram})] in the present system is given by
	\begin{align}
		\rho&=\mathrm{i}\boldsymbol{\psi}^\dagger\begin{pmatrix} [F_x,F_x] & [F_x,F_y] & [F_x,F_z] & [F_x,I] \\ [F_y,F_x] & [F_y,F_y] & [F_y,F_z] & [F_y,I] \\ [F_z,F_x] & [F_z,F_y] & [F_z,F_z] & [F_z,I] \\ [I,F_x] & [I,F_y] & [I,F_z] & [I,I] \end{pmatrix}\boldsymbol{\psi} \ = \ \begin{pmatrix} 0&-M_z&M_y& 0 \\ M_z & 0 & -M_x & 0 \\ -M_y & M_x & 0&0 \\ 0&0&0&0  \end{pmatrix}. \label{eq:wbmatspinor}
	\end{align}
	So, the rank of $ \rho $ is given by
	\begin{align}
		n_{\text{II}}=\frac{1}{2}\operatorname{rank}\rho=\begin{cases}1 & (\boldsymbol{M}^2\ne0) \\ 0 & (\boldsymbol{M}^2=0). \end{cases} \label{eq:spinorWB}
	\end{align}
	Thus, in the current case, the criterion for the emergence of the type-II mode is very simple; if the order parameter has a finite magnetization, there is one type-II mode. If it has no magnetization, there is no type-II mode. 
\subsection{Spin-1 BECs}
	The spin-1 BEC model is first investigated by Ref.~\cite{JPSJ.67.1822,Ho:1998zz}. As already mentioned in the previous subsection, the Hamiltonian density for a three-component condensate $ \boldsymbol{\psi}=(\psi_1,\psi_0,\psi_{-1})^T $ is given by Eqs.~(\ref{eq:spinorgen00}) and (\ref{eq:spinorgen01}):
	\begin{align}
		h=\sum_{j=-1}^1 |\nabla\psi_j|^2-\mu\rho+c_0\rho^2+c_1|\Theta|^2.
	\end{align}
	Since the identity (\ref{eq:spinor2bdyid01}) holds, the interaction part can be rewritten as 
	\begin{align}
		c_0\rho^2+c_1|\Theta|^2=\tilde{c}_0\rho^2+\tilde{c}_1\boldsymbol{M}^2,\qquad (\tilde{c}_0,\tilde{c}_1)=(c_0+c_1,-c_1),
	\end{align}
	which may be more familiar. 
	The GP equation is given by [c.f.: Eqs. ~(\ref{eq:spinorgenGP00})-(\ref{eq:spinorgenGP01})]
	\begin{align}
		\mathrm{i}\partial_t\psi_j=-\nabla^2\psi_j-\mu\psi_j+2c_0\rho\psi_j+2c_1\Theta(-1)^j\psi_{-j}^*. \label{eq:spin1gp}
	\end{align}
	The Bogoliubov equation is obtained by linearization of Eq.~(\ref{eq:spin1gp}) and its complex conjugate:
	\begin{align}
		\mathrm{i}\partial_tu_j&=-\nabla^2u_j-\mu u_j+2c_0(\delta\rho \psi_j+\rho u_j)+2c_1(-1)^j(\delta\Theta\psi_{-j}^*+\Theta v_{-j}), \\ 
		\mathrm{i}\partial_tv_j&=\nabla^2v_j+\mu v_j-2c_0(\delta\rho\psi_j^*+\rho v_j)-2c_1(-1)^j(\delta\Theta^*\psi_{-j}+\Theta^*u_{-j}), 
	\end{align}
	where $ \delta\rho,\ \delta\Theta $ and $ \delta\Theta^* $ are defined by Eqs.~(\ref{eq:spinorgenlin01}) and (\ref{eq:spinorgenlin02}). The equation has the SSB-originated zero-mode solutions, Eq.~(\ref{eq:spinorzeromode}) or Eq.~(\ref{eq:spinorzeromode02}). \\ 
	\indent Two ground states appear in this system, depending on the sign of $ c_1 $. The one is the ferromagnetic state for $ c_1>0 $
	\begin{align}
		\boldsymbol{\psi}=(\sqrt{\rho_0},0,0)^T,\quad \boldsymbol{M}^2=\rho_0^2,\quad \mu=2c_0\rho_0, \label{eq:spin1ferro}
	\end{align}
	and the other is the polar state for $ c_1<0 $ 
	\begin{align}
		\boldsymbol{\psi}=(0,\sqrt{\rho_0},0)^T,\quad \boldsymbol{M}^2=0,\quad \mu=2(c_0+c_1)\rho_0. \label{eq:spin1polar}
	\end{align}
\subsubsection{Spin-1 Ferromagnetic Phase}
	\indent Let us first example the excitations in the ferromagnetic phase (\ref{eq:spin1ferro}). Since this phase has a magnetization, we expect from Eq.~(\ref{eq:spinorWB}) that one type-II NGM appears. The phase has a residual continuous $ SO(2) $-symmetry $ \mathrm{e}^{\mathrm{i}\alpha(I-F_z)}\boldsymbol{\psi}=\boldsymbol{\psi} $, so the number of broken symmetry is three and  $ (\boldsymbol{\psi},-\boldsymbol{\psi}^*)^T $ and $ (F_z\boldsymbol{\psi},-F_z^*\boldsymbol{\psi}^*)^T $ in the zero-mode solutions (\ref{eq:spinorzeromode02}) are degenerate. The modes $ (F^+\boldsymbol{\psi},-F_-^*\boldsymbol{\psi}^*)^T $ and $ (F_-\boldsymbol{\psi},-F_+^*\boldsymbol{\psi}^*)^T $ become finite-norm eigenvectors, which become a seed of type-II NGM.\\ 
	\indent The stationary Bogoliubov equation for this phase becomes
	\begin{gather}
		H\begin{pmatrix}\boldsymbol{u} \\ \boldsymbol{v} \end{pmatrix}=\epsilon\begin{pmatrix}\boldsymbol{u} \\ \boldsymbol{v} \end{pmatrix},\quad H=H_0+\sigma k^2, \\
		H_0=2\rho_0\begin{pmatrix} c_0 &&& c_0 && \\ &0&&&& \\ &&2c_1 &&& \\ -c_0 &&& -c_0 && \\ &&&&0 & \\ &&&&&-2c_1  \end{pmatrix}. \label{eq:spin1FH0}
	\end{gather}
	This $ H_0 $ already has the standard form (\ref{eq:UiHU0U}), so we have one type-I, one type-II and one gapful mode. 
	Let us give a notation for the corresponding eigenvectors in accordance with Subsec.~\ref{subsec:H0k0}:
	\begin{align}
		\boldsymbol{y}_1&:=\frac{1}{\sqrt{\rho_0}}\begin{pmatrix}\boldsymbol{\psi} \\ -\boldsymbol{\psi}^* \end{pmatrix}=(1,0,0,-1,0,0)^T, \\ 
		\boldsymbol{z}_1&:=\sigma\boldsymbol{y}_1=(1,0,0,1,0,0)^T,\\
		\boldsymbol{x}_1&:=\frac{1}{\sqrt{2\rho_0}}\begin{pmatrix} F_-\boldsymbol{\psi} \\ -F_+^*\boldsymbol{\psi}^* \end{pmatrix}=(0,1,0,0,0,0)^T, \\ 
		\boldsymbol{w}_1&:=(0,0,1,0,0,0)^T.
	\end{align}
	They satisfy
	\begin{align}
		&H_0\boldsymbol{y}_1=\boldsymbol{0},\ H_0\boldsymbol{z}_1=4c_0\rho_0\boldsymbol{y}_1,\\
		&H_0\boldsymbol{x}_1=H_0\tau\boldsymbol{x}_1^*=0,\\
		&H_0\boldsymbol{w}_1=4c_1\rho_0\boldsymbol{w}_1,\ H_0\tau\boldsymbol{w}_1^*=-4c_1\rho_0\tau\boldsymbol{w}_1^*
	\end{align}
	The block-diagonalizing matrix $ U $ is just an identity matrix: $ U=I_6=\left( \tfrac{\boldsymbol{y}_1+\boldsymbol{z}_1}{2},\boldsymbol{x}_1,\boldsymbol{w}_1,\tfrac{-\boldsymbol{y}_1+\boldsymbol{z}_1}{2},\tau\boldsymbol{x}_1^*,\tau\boldsymbol{w}_1^* \right) $, and $ U^{-1}H_0U $ is given by Eq.~(\ref{eq:spin1FH0}).
	The perturbative expansions of NGMs for finite $ k $ are given by
	\begin{align}
		&\epsilon=\pm2\sqrt{c_0\rho_0}k+\dotsb,\ \boldsymbol{\xi}=\boldsymbol{y}_1\pm\frac{k}{2\sqrt{c_0\rho_0}}\boldsymbol{z}_1+\dotsb, \label{eq:spin1ferro21} \\
		&\epsilon=\pm k^2,\ \boldsymbol{\xi}=\boldsymbol{x}_1 \text{ or } \tau\boldsymbol{x}_1^*, \label{eq:spin1ferro22}
	\end{align}
	These are consistent with the exact eigenvalues of $ H $:
	\begin{align}
		\epsilon=\pm\sqrt{4c_0\rho_0k^2+k^4},\ \pm k^2,\ \pm(4c_1\rho_0+k^2).
	\end{align}
	We note that in this phase  $ F_x\boldsymbol{\psi} $ and $ F_y\boldsymbol{\psi} $ are linearly dependent over $ \mathbb{C} $ (recall the discussion in Subsec.~\ref{sec:bdwbm}). This is the case of Eq.~(\ref{eq:typeiidpndnt}), and the equality of Eq. (\ref{eq:typeiidpndnt0}) holds, i.e., the coefficient of quadratic dispersion becomes $ \epsilon_2=1 $. \\ 
	\indent Let us check what kind of physical quantity is excited in each mode. Using $ \boldsymbol{\psi}=(\sqrt{\rho_0},0,0)^T $, the fluctuations defined in Eqs. (\ref{eq:spinorgenlin01})-(\ref{eq:spinorgenlin05}) are given by
	\begin{align}
		\delta\rho=\delta M_z=\sqrt{\rho_0}(u_1+v_1),\quad \delta M_+=\sqrt{2\rho_0}u_0,\quad \delta M_-=\sqrt{2\rho_0}v_0.
	\end{align}
	The type-I NGM (\ref{eq:spin1ferro21}) has finite $ \delta \rho=\delta M_z $ and no $ \delta M_\pm $. By this excitation, the magnetization per density does not change, because $ \delta(\frac{M_z}{\rho})=\frac{\rho\delta M_z-M_z\delta\rho}{\rho^2}=0 $. Thus it can be interpreted as a sound wave. The type-II NGM (\ref{eq:spin1ferro21}) has finite $ \delta M_x $ and $ \delta M_y $. We can check, however, that the fluctuation of the magnitude of the magnetization, defined by $ \delta\boldsymbol{M}^2=2 \sum_i M_i\delta M_i $, is zero. Since this type-II mode is physically interpreted as a spin precession, the mode only changes the angle of the spin from $ z $-axis, and does not change the total magnitude. The mode $ \boldsymbol{w}_1 $ only has a quadrupolar fluctuation $ \delta N_{ij} $. 
\subsubsection{Spin-1 Polar Phase}
	\indent The stationary Bogoliubov equation for the polar phase (\ref{eq:spin1polar}) is given by
	\begin{gather}
		(H_0+\sigma k^2)\begin{pmatrix}\boldsymbol{u} \\ \boldsymbol{v} \end{pmatrix}=\epsilon\begin{pmatrix}\boldsymbol{u} \\ \boldsymbol{v} \end{pmatrix},\quad H_0=\begin{pmatrix} F & G \\ -G^* & -F^* \end{pmatrix}, \\
		F=2\rho_0\operatorname{diag}(-c_1,c_0+c_1,-c_1),\quad G=2\rho_0\begin{pmatrix} &&-c_1 \\ &c_0+c_1& \\ -c_1&& \end{pmatrix}.
	\end{gather}
	Since the polar phase has no magnetization, the WB matrix (\ref{eq:wbmatspinor}) vanishes and only type-I NGMs appear. Since the polar phase has a $ SO(2) $-symmetry $ \mathrm{e}^{\mathrm{i}\alpha F_z}\boldsymbol{\psi}=\boldsymbol{\psi} $, we obtain three SSB-originated zero modes solutions corresponding to the three broken symmetries: in Eq.~(\ref{eq:spinorzeromode}), the mode $ (F_z\boldsymbol{\psi},-F_z^*\boldsymbol{\psi}^*)^T $ is zero because of this $SO(2)$-symmetry. Since the nonvanishing three modes are $\sigma$-orthogonal to each other, we cannot make a finite-norm eigenvector by their linear combination. Thus, all NGMs are of type-I. Let us introduce the notation for eigenvectors in the same manner with Subsec.~\ref{subsec:H0k0}:
	\begin{align}
		\boldsymbol{y}_1&=\frac{1}{\sqrt{\rho_0}}\begin{pmatrix}\boldsymbol{\psi} \\ -\boldsymbol{\psi}^*\end{pmatrix}=(0,1,0,0,-1,0)^T, \\
		\boldsymbol{y}_2&=\frac{1}{\sqrt{\rho_0}}\begin{pmatrix}F_x\boldsymbol{\psi} \\ -F_x^*\boldsymbol{\psi}^*\end{pmatrix}=\frac{1}{\sqrt{2}}(1,0,1,-1,0,-1)^T,\\
		\boldsymbol{y}_3&=\frac{1}{\sqrt{\rho_0}}\begin{pmatrix}F_y\boldsymbol{\psi} \\ -F_y^*\boldsymbol{\psi}^*\end{pmatrix}=\frac{1}{\sqrt{2}}(-\mathrm{i},0,\mathrm{i},-\mathrm{i},0,\mathrm{i})^T,\\ 
		\boldsymbol{z}_i&=\sigma\boldsymbol{y}_i \quad (i=1,2,3).
	\end{align}
	They satisfy
	\begin{align}
		&H_0\boldsymbol{y}_j=\boldsymbol{0},\quad H_0\boldsymbol{z}_j=2\kappa_j\boldsymbol{y}_j,\\
		&\kappa_1=2(c_0+c_1)\rho_0,\quad \kappa_2=\kappa_3=-2c_1\rho_0.
	\end{align}
	If we define the diagonalizing matrix $ U=\left( \tfrac{\boldsymbol{z}_1+\boldsymbol{y}_1}{2},\tfrac{\boldsymbol{z}_2+\boldsymbol{y}_2}{2},\tfrac{\boldsymbol{z}_3+\boldsymbol{y}_3}{2},\tfrac{\boldsymbol{z}_1-\boldsymbol{y}_1}{2},\tfrac{\boldsymbol{z}_2-\boldsymbol{y}_2}{2},\tfrac{\boldsymbol{z}_3-\boldsymbol{y}_3}{2} \right) $, the standard form of $ H_0 $ is given by
	\begin{align}
		U^{-1}H_0U=2\rho_0\begin{pmatrix} K & K \\ -K & -K  \end{pmatrix},\quad K=2\rho_0\operatorname{diag}(c_0+c_1,-c_1,-c_1).
	\end{align}
	The perturbative expansions of eigenvectors are given by [Eqs.~(\ref{eq:typeIvalue}) and (\ref{eq:typeIvector})]: 
	\begin{align}
		\epsilon = \pm \sqrt{2\kappa_j}k+O(k^2),\quad \boldsymbol{\xi}=\boldsymbol{y}_j\pm\frac{k}{\sqrt{2\kappa_j}}\boldsymbol{z}_j+O(k^2),\quad j=1,2,3,
	\end{align}
	which is consistent with the exact eigenvalues of $ H_0+\sigma k^2 $:
	\begin{align}
		\epsilon=&\pm\sqrt{4(c_0+c_1)\rho_0k^2+k^4}, \\
		& \pm\sqrt{-4c_1\rho_0k^2+k^4} \text{ (doubly degenerate)}.
	\end{align}
	\indent Let us check the fluctuation of physical quantities for each mode. Using $ \boldsymbol{\psi}=(0,\sqrt{\rho_0},0)^T $, the fluctuations defined in Eqs. (\ref{eq:spinorgenlin01})-(\ref{eq:spinorgenlin05}) become
	\begin{align}
		\delta\rho=\sqrt{\rho_0}(u_0+v_0),\quad \delta M_z=0,\quad \delta M_+=\sqrt{2\rho_0}(u_{-1}+v_1),\quad \delta M_-=\sqrt{2\rho_0}(v_{-1}+u_1).
	\end{align}
	The mode  $ \boldsymbol{y}_1+\frac{k}{\sqrt{2\kappa_1}}\boldsymbol{z}_1 $ has only $ \delta\rho $, so it is interpreted as a sound wave. The mode $ \boldsymbol{y}_2+\frac{k}{\sqrt{2\kappa_2}}\boldsymbol{z}_2 $ has finite $ \delta M_x $ and others are zero: $ \delta\rho=\delta M_y=0 $. Thus, it represents a spin wave in the $x$-direction. Similarly, the mode $ \boldsymbol{y}_3+\frac{k}{\sqrt{2\kappa_3}}\boldsymbol{z}_3 $ is a spin wave in the $ y $-direction. As mentioned in the Introduction, these fluctuations of magnetizations are ignored if we describe all phenomena in a coset space, because the phase is fixed to be polar.
\subsection{Spin-2 BECs}\label{subsec:spin2}
	The model of the spin-2 BEC was first introduced in Ref.~\cite{PhysRevA.61.033607} and the phase diagram was given. The Hamiltonian density for a five-component condensate  $ \boldsymbol{\psi}=(\psi_2,\dots,\psi_{-2})^T $  is given by Eqs.~(\ref{eq:spinorgen00}) and (\ref{eq:spinorgen02}):
	\begin{align}
		h=\sum_{j=-2}^2 |\nabla\psi_j|^2-\mu\rho+c_0\rho^2+c_1\boldsymbol{M}^2+c_2|\Theta|^2,
	\end{align}
	and the GP equation is given by [c.f.: Eqs. ~(\ref{eq:spinorgenGP00})-(\ref{eq:spinorgenGP01})]
	\begin{align}
		\mathrm{i}\partial_t\psi_j&=-\nabla^2\psi_j-\mu\psi_j+2c_0\rho\psi_j+2c_2\Theta(-1)^j\psi_{-j}^* \nonumber \\
		&\quad +c_1\left[ 2j\psi_jM_z+\sqrt{(2+j)(3-j)}\psi_{j-1}M_-+\sqrt{(2-j)(3+j)}\psi_{j+1}M_+ \right], \label{eq:spin2GP}
	\end{align}
	where $ \psi_j $ with $ |j|>2 $ are interpreted as zero. The Bogoliubov equation is obtained by linearization of the GP equation and its complex conjugate: 
	\begin{align}
		\mathrm{i}\partial_tu_j&=-\nabla^2\delta u_j-\mu u_j+2c_0(\delta\rho\psi_j+\rho u_j)+2c_2(-1)^j(\delta\Theta\psi_{-j}^*+\Theta v_{-j}) \nonumber \\
		&\quad +c_1\left[ 2j(u_jM_z+\psi_j\delta M_z)+\sqrt{(2+j)(3-j)}(u_{j-1}M_-+\psi_{j-1}\delta M_-)+\sqrt{(2-j)(3+j)}(u_{j+1}M_++\psi_{j+1}\delta M_+) \right], \label{eq:spin-2bogo} \\
		\mathrm{i}\partial_tv_j&=\nabla^2v_j+\mu v_j-2c_0(\delta\rho\psi_j^*+\rho v_j)-2c_2(-1)^j(\delta\Theta^*\psi_{-j}+\Theta^*u_{-j}) \nonumber \\
		&\quad -c_1\left[ 2j(v_jM_z+\psi_j^*\delta M_z)+\sqrt{(2+j)(3-j)}(v_{j-1}M_++\psi_{j-1}^*\delta M_+)+\sqrt{(2-j)(3+j)}(v_{j+1}M_-+\psi_{j+1}^*\delta M_-) \right], 
	\end{align}
	where $ \delta\rho,\ \delta\Theta,\ \delta\Theta^* $, and $ \delta M_i $'s are defined by Eqs.~(\ref{eq:spinorgenlin01})-(\ref{eq:spinorgenlin02}) with setting $ F=2 $.  	This Bogoliubov equation has the SSB-originated zero-mode solutions Eq.~(\ref{eq:spinorzeromode}) or equivalently Eq.~(\ref{eq:spinorzeromode02}). \\ 
	\indent The spin-2 BEC admits three kinds of ground states \cite{PhysRevA.61.033607}. When $ c_1<0  $ and $  4c_1<c_2 $, the ground state is ferromagnetic:
	\begin{align}
		&\boldsymbol{\psi}=(\sqrt{\rho_0},0,0,0,0)^T,\quad \boldsymbol{M}^2=4\rho_0^2,\quad |\Theta|^2=0,\quad \mu=2(c_0+4c_1)\rho_0. \label{eq:spin2ferro}
	\end{align}
	When $ c_1>0  $ and $  c_2>0 $, the ground state is cyclic:
	\begin{align}
		&\boldsymbol{\psi}=\sqrt{\rho_0}(\tfrac{\mathrm{i}}{2},0,\tfrac{1}{\sqrt{2}},0,\tfrac{\mathrm{i}}{2}),\quad \boldsymbol{M}^2=|\Theta|^2=0,\quad \mu=2c_0\rho_0. \label{eq:spin2cyclic}
	\end{align}
	When $ c_2<0  $ and $  4c_1>c_2 $, the ground state is nematic:
	\begin{align}
		&\boldsymbol{\psi}=\sqrt{\rho_0}(\tfrac{\sin\eta}{\sqrt{2}},0,\cos\eta,0,\tfrac{\sin\eta}{\sqrt{2}}),\quad \boldsymbol{M}^2=0,\quad |\Theta|^2=\rho_0^2,\quad \mu=2(c_0+c_2)\rho_0. \label{eq:spin2nematic}
	\end{align}
	Here we note that the value of $ \mu $ is easily determined by Eq.~(\ref{eq:spinormu}). The parameter $ \eta $ in the nematic phase is real and it shows a large continuous degeneracy of this phase. The states with different $ \eta $ are not equivalent under $ U(1)\times SO(3) $ transformation, and therefore this phase contains distinct spinor states characterized by one parameter $ \eta $. It is known that this degeneracy is resolved when the quantum correction is added, and either $ \eta=0 $ (uniaxial nematic) or $ \frac{\pi}{2} $ (biaxial nematic) is favored \cite{PhysRevLett.98.160408,PhysRevLett.98.190404,PhysRevA.81.063632}.  However, at a classical-field level, these states are completely degenerate. 
\subsubsection{Spin-2 Ferromagnetic Phase}
	\indent Let us first consider the NGMs in ferromagnetic phase (\ref{eq:spin2ferro}). The stationary Bogoliubov equation for $ (\boldsymbol{u},\boldsymbol{v})^T=(u_2,\dots,u_{-2},v_2,\dots,v_{-2})^T $  becomes
	\begin{align}
		&(H_0+\sigma k^2)\begin{pmatrix}\boldsymbol{u} \\ \boldsymbol{v}\end{pmatrix}=\epsilon\begin{pmatrix}\boldsymbol{u} \\ \boldsymbol{v}\end{pmatrix},\quad H_0=\begin{pmatrix}F & G \\ -G^* & -F^*\end{pmatrix}, \label{eq:spin2bogo01} \\
		F&=2\rho_0 \operatorname{diag}(c_0+4c_1,0,-4c_1,-6c_1,-8c_1+2c_2), \\
		G&=2\rho_0 \operatorname{diag}(c_0+4c_1,0,0,0,0).
	\end{align}
	As with the spin-1 ferromagnetic case, this $ H_0 $ already has the standard form (\ref{eq:UiHU0U}), and we obtain one type-I, one type-II, and three gapful modes. We do not discuss this phase any more in detail, because it is almost the same with the spin-1 ferromagnetic phase. Since the ferromagnetic state has a residual $ SO(2) $-symmetry $ \boldsymbol{\psi}\rightarrow\mathrm{e}^{\mathrm{i}\alpha(F_z-2I)}\boldsymbol{\psi} $, there are three independent SSB-originated zero-mode solutions in Eq.~(\ref{eq:spinorzeromode02}). The exact eigenvalues of $ H=H_0+\sigma k^2 $ are given by
	\begin{align}
		\epsilon=&\pm\sqrt{4(c_0+4c_1)\rho_0k^2+k^4},\ \pm k^2, \nonumber \\
		& \pm(k^2-8c_1\rho_0),\ \pm(k^2-12c_1\rho_0),\ \pm(k^2-16c_1\rho_0+4c_2\rho_0).
	\end{align}
	The first two modes are gapless NGMs of type-I and type-II. The latter three are gapful. 
\subsubsection{Spin-2 Cyclic Phase}
	\indent Next let us see the cyclic phase (\ref{eq:spin2cyclic}). The Bogoliubov equation is given by
	\begin{align}
		(H_0+\sigma k^2)\begin{pmatrix}\boldsymbol{u} \\ \boldsymbol{v}\end{pmatrix}=\epsilon\begin{pmatrix}\boldsymbol{u} \\ \boldsymbol{v}\end{pmatrix},\quad H_0=\begin{pmatrix}F & G \\ -G^* & -F^*\end{pmatrix}
	\end{align}
	with
	\begin{align}
		F&=2\rho_0\begin{pmatrix}\frac{c_0+4c_1+2c_2}{4} & 0 & \frac{\mathrm{i}(c_0-2c_2)}{2\sqrt{2}}&0&\frac{c_0-4c_1+2c_2}{4} \\ 0&2c_1&0&0&0 \\ \frac{-\mathrm{i}(c_0-2c_2)}{2\sqrt{2}} &0&\frac{c_0+2c_2}{2}&0&\frac{-\mathrm{i}(c_0-2c_2)}{2\sqrt{2}}\\ 0&0&0&2c_1&0 \\ \frac{c_0-4c_1+2c_2}{4} & 0 & \frac{\mathrm{i}(c_0-2c_2)}{2\sqrt{2}}&0&\frac{c_0+4c_1+2c_2}{4} \end{pmatrix}, \\
		G&=2\rho_0\begin{pmatrix} \frac{-c_0-4c_1}{4}&0&\frac{\mathrm{i}c_0}{2\sqrt{2}}&0&\frac{-c_0+4c_1}{4} \\ 0&\mathrm{i}\sqrt{3}c_1 & 0 & c_1 & 0 \\ \frac{\mathrm{i}c_0}{2\sqrt{2}} &0& \frac{c_0}{2} & 0 & \frac{\mathrm{i}c_0}{2\sqrt{2}} \\  0 & c_1 & 0 & \mathrm{i}\sqrt{3}c_1 & 0 \\ \frac{-c_0+4c_1}{4}&0&\frac{\mathrm{i}c_0}{2\sqrt{2}}&0&\frac{-c_0-4c_1}{4} \end{pmatrix}.
	\end{align}
	The equation is decoupled for $ (u_2,u_0,u_{-2},v_2,v_0,v_{-2}) $ and $ (u_1,u_{-1},v_1,v_{-1}) $. \\ 
	\indent Since the cyclic phase breaks all four symmetries of $ U(1)\times SO(3) $, we have four linearly independent SSB-originated zero mode solutions (\ref{eq:spinorzeromode}). Furthermore, since the cyclic phase has no magnetization, the WB matrix (\ref{eq:wbmatspinor}) vanishes. Thus, we obtain four type-I NGMs. Let us introduce a notation for eigenvectors in the same way with Subsec.~\ref{subsec:H0k0}: 
	\begin{align}
		\boldsymbol{y}_1&:=\frac{1}{\sqrt{\rho_0}}\begin{pmatrix}\boldsymbol{\psi} \\ -\boldsymbol{\psi}^* \end{pmatrix}=(\tfrac{\mathrm{i}}{2},0,\tfrac{1}{\sqrt{2}},0,\tfrac{\mathrm{i}}{2},\tfrac{\mathrm{i}}{2},0,\tfrac{-1}{\sqrt{2}},0,\tfrac{\mathrm{i}}{2})^T,\\
		\boldsymbol{y}_2&:=\frac{1}{\sqrt{2\rho_0}}\begin{pmatrix}F_z\boldsymbol{\psi} \\ -F_z^*\boldsymbol{\psi}^* \end{pmatrix}=\frac{1}{\sqrt{2}}(\mathrm{i},0,0,0,-\mathrm{i},\mathrm{i},0,0,0,-\mathrm{i})^T, \\
		\boldsymbol{y}_3&:=\frac{1}{\sqrt{2\rho_0}}\begin{pmatrix}F_x\boldsymbol{\psi} \\ -F_x^*\boldsymbol{\psi}^* \end{pmatrix}=\frac{1}{\sqrt{2}}(0,\mathrm{e}^{\mathrm{i}\pi/6},0,\mathrm{e}^{\mathrm{i}\pi/6},0,0,-\mathrm{e}^{-\mathrm{i}\pi/6},0,-\mathrm{e}^{-\mathrm{i}\pi/6},0)^T, \\
		\boldsymbol{y}_4&:=\frac{1}{\sqrt{2\rho_0}}\begin{pmatrix}F_y\boldsymbol{\psi} \\ -F_y^*\boldsymbol{\psi}^* \end{pmatrix}=\frac{1}{\sqrt{2}}(0,-\mathrm{e}^{\mathrm{i}\pi/3},0,\mathrm{e}^{\mathrm{i}\pi/3},0,0,\mathrm{e}^{-\mathrm{i}\pi/3},0,-\mathrm{e}^{-\mathrm{i}\pi/3},0)^T, \\
		\boldsymbol{z}_i&:=\sigma \boldsymbol{y}_i \quad (i=1,2,3,4).
	\end{align}
	The last one mode describes a gapful mode, given by
	\begin{align}
		\boldsymbol{w}_1&:=\begin{pmatrix}\boldsymbol{\psi}^* \\ \boldsymbol{0} \end{pmatrix}=(\tfrac{-\mathrm{i}}{2},0,\tfrac{1}{\sqrt{2}},0,\tfrac{-\mathrm{i}}{2},0,0,0,0,0)^T.
	\end{align}
	They satisfy
	\begin{align}
		&H_0\boldsymbol{y}_i=0,\quad H_0\boldsymbol{z}_i=2\kappa_i\boldsymbol{y}_i, \quad (i=1,2,3,4) \\
		&\kappa_1=2c_0\rho_0,\ \kappa_2=\kappa_3=\kappa_4=4c_1\rho_0, \\
		&H_0\boldsymbol{w}_1=4c_2\rho_0\boldsymbol{w}_1,\quad H_0\tau\boldsymbol{w}_1^*=-4c_2\rho_0\tau\boldsymbol{w}_1^*
	\end{align}
	If we define the block-diagonalizing matrix
	\begin{align}
		U=(\tfrac{\boldsymbol{y}_1+\boldsymbol{z}_1}{2},\dots,\tfrac{\boldsymbol{y}_4+\boldsymbol{z}_4}{2},\boldsymbol{w}_1,\tfrac{-\boldsymbol{y}_1+\boldsymbol{z}_1}{2},\dots,\tfrac{-\boldsymbol{y}_4+\boldsymbol{z}_4}{2},\tau\boldsymbol{w}_1^*),
	\end{align}
	then we obtain the standard form  of $ H_0 $ (\ref{eq:UiHU0U}):
	\begin{align}
		&U^{-1}H_0U=\begin{pmatrix} \tilde{F} & \tilde{G} \\ -\tilde{G}^* & -\tilde{F}^* \end{pmatrix},\\
		&\tilde{F}=2\rho_0\operatorname{diag}(c_0,2c_1,2c_1,2c_1,2c_2),\quad \tilde{G}=2\rho_0\operatorname{diag}(c_0,2c_1,2c_1,2c_1,0).
	\end{align}
	This standard form clearly shows that there are four type-I and one gapful modes. The perturbative expansions for $ \boldsymbol{y}_i $'s are given by the general formulae (\ref{eq:typeIvalue}) and (\ref{eq:typeIvector}). The result is consistent with the exact eigenvalues of $ H=H_0+\sigma k^2 $:
	\begin{align}
		\epsilon=&\pm\sqrt{4c_0\rho_0k^2+k^4},\\
		&\pm\sqrt{8c_1\rho_0k^2+k^4} \quad(\text{triply degenerate}), \\
		& \pm(k^2+4c_2\rho_0).
	\end{align}
	The fluctuation of physical quantities (\ref{eq:spinorgenlin01})-(\ref{eq:spinorgenlin05}) are
	\begin{align}
		&\delta\rho=\sqrt{\rho_0}\left[ \frac{\mathrm{i}(v_2+v_{-2}-u_2-u_{-2})}{2}+\frac{u_0+v_0}{\sqrt{2}} \right],\quad \delta M_z=\mathrm{i}\sqrt{\rho_0}(v_2-v_{-2}-u_2+u_{-2}), \\
		&\delta M_+=\sqrt{\rho_0}\left[ \mathrm{i}(v_{-1}-u_1)+\sqrt{3}(v_1+u_{-1}) \right],\quad \delta M_-=\sqrt{\rho_0}\left[ \mathrm{i}(v_1-u_{-1})+\sqrt{3}(u_1+v_{-1}) \right].
	\end{align}
	We can check that the type-I mode  $ \boldsymbol{y}_1+\frac{k}{\sqrt{2\kappa_1}}\boldsymbol{z}_1 $ has finite $ \delta\rho $, so it represents a sound wave. The mode $ \boldsymbol{y}_i+\frac{k}{\sqrt{2\kappa_i}}\boldsymbol{z}_i $ with $ i=2,3, $ and $ 4 $ has finite $ \delta M_z,\ \delta M_x, $ and $ \delta M_y $, respectively. Thus, they represent a spin wave in the $z$-, $x$-, and $y$-direction, respectively. The mode $ \boldsymbol{w}_1 $ has neither $ \delta\rho $ nor $ \delta M_i $'s, so it is a fluctuation of a higher-rank tensor. 
\subsubsection{Spin-2 Nematic Phase}\label{sec:spin2nematic}
	\indent Finally, let us consider the nematic phase (\ref{eq:spin2nematic}). This phase is well-known for possessing quasi-NGMs \cite{PhysRevLett.105.230406,PhysRevA.81.063632}. When $ \boldsymbol{\psi} $ is given by Eq.~(\ref{eq:spin2nematic}), the Bogoliubov equation becomes
	\begin{align}
		(H_0+\sigma k^2)\begin{pmatrix}\boldsymbol{u} \\ \boldsymbol{v}\end{pmatrix}=\epsilon\begin{pmatrix}\boldsymbol{u} \\ \boldsymbol{v}\end{pmatrix},\quad  H_0=\begin{pmatrix}F & G \\ -G^* & -F^*\end{pmatrix},
	\end{align}
	where $ F $ and $ G $ are given by 
	\begin{align}
		F&=2\rho_0\begin{pmatrix} \frac{(c_0+4c_1)\tilde{s}^2-2c_2\tilde{c}^2}{2} & 0 & \frac{(c_0+2c_2)\tilde{c}\tilde{s}}{\sqrt{2}} & 0 & \frac{(c_0-4c_1+2c_2)\tilde{s}^2}{2} \\ 0&c_1(1+2\tilde{c}^2)-c_2 & 0 & 2\sqrt{3}c_1\tilde{c}\tilde{s} & 0 \\ \frac{(c_0+2c_2)\tilde{c}\tilde{s}}{\sqrt{2}} & 0 & (c_0+2c_2)\tilde{c}^2-c_2 & 0 & \frac{(c_0+2c_2)\tilde{c}\tilde{s}}{\sqrt{2}} \\ 0 & 2\sqrt{3}c_1\tilde{c}\tilde{s} & 0 & c_1(1+2\tilde{c}^2)-c_2 & 0 & \\ \frac{(c_0-4c_1+2c_2)\tilde{s}^2}{2} & 0 & \frac{(c_0+2c_2)\tilde{c}\tilde{s}}{\sqrt{2}} & 0 & \frac{(c_0+4c_1)\tilde{s}^2-2c_2\tilde{c}^2}{2} \end{pmatrix}, \\
		G&=2\rho_0\begin{pmatrix} \frac{(c_0+4c_1)\tilde{s}^2}{2} & 0 & \frac{c_0\tilde{c}\tilde{s}}{\sqrt{2}} & 0 & \frac{(c_0-4c_1)\tilde{s}^2+2c_2}{2} \\ 0 & 2\sqrt{3}c_1\tilde{c}\tilde{s} & 0 & c_1(2\tilde{c}^2+1)-c_2 & 0 \\ \frac{c_0\tilde{c}\tilde{s}}{\sqrt{2}} & 0 & c_0\tilde{c}^2+c_2 & 0 & \frac{c_0\tilde{c}\tilde{s}}{\sqrt{2}} \\ 0 & c_1(2\tilde{c}^2+1)-c_2 & 0 & 2\sqrt{3}c_1\tilde{c}\tilde{s} & 0 \\ \frac{(c_0-4c_1)\tilde{s}^2+2c_2}{2} & 0 & \frac{c_0\tilde{c}\tilde{s}}{\sqrt{2}} & 0 & \frac{(c_0+4c_1)\tilde{s}^2}{2} \end{pmatrix},
	\end{align}
	with $ \tilde{c}=\cos\eta,\ \tilde{s}=\sin\eta $. The equation is decoupled for $ (u_2,u_0,u_{-2},v_2,v_0,v_{-2}) $ and $ (u_1,u_{-1},v_1,v_{-1}) $. We can check that $ H_0^2=0 $, and therefore $ H_0 $ has only zero eigenvalue, and Theorem~C.\ref{prp:K2} can be applied. \\ 
	\indent Before solving the above Bogoliubov equation directly, we first clarify the $ U(1)\times SO(5) $-symmetric nature of the set of nematic states. The nematic phase with an arbitrary angle can be generally written as
	\begin{align}
		\boldsymbol{\psi}=\mathrm{e}^{\mathrm{i}\theta}(\psi_2,\psi_1,r_0,-\psi_1^*,\psi_2^*)^T,\quad \theta,r_0\in\mathbb{R},\ \psi_1,\psi_2\in\mathbb{C}. \label{eq:spin2nematic0}
	\end{align}
	In other words, the above state can be always transformed into the form of Eq.~(\ref{eq:spin2nematic}) by phase multiplication and rotation. To see the  $ SO(5) $-symmetry, let us consider the parametrization
	\begin{align}
		\psi_0=a_1,\quad \psi_{\pm 2}=\frac{a_2\pm\mathrm{i}a_3}{\sqrt{2}},\quad \psi_{\pm1}=\frac{\pm a_4+\mathrm{i}a_5}{\sqrt{2}},
	\end{align}
	or equivalently, 
	\begin{align}
		\boldsymbol{\psi}=U_0\boldsymbol{a},\quad U_0:=\begin{pmatrix} 0& \frac{1}{\sqrt{2}} & \frac{\mathrm{i}}{\sqrt{2}} &0&0 \\ 0&0&0&\frac{1}{\sqrt{2}}&\frac{\mathrm{i}}{\sqrt{2}} \\ 1 &0&0&0&0 \\ 0&0&0&\frac{-1}{\sqrt{2}}&\frac{\mathrm{i}}{\sqrt{2}} \\ 0&\frac{1}{\sqrt{2}}&\frac{-\mathrm{i}}{\sqrt{2}}&0&0 \end{pmatrix}
	\end{align}
	with writing $ \boldsymbol{a}=(a_1,\dots,a_5)^T $. Then, the density and singlet pair amplitude are written as
	\begin{align}
		\rho=\boldsymbol{a}^*\cdot\boldsymbol{a},\quad \Theta=\boldsymbol{a}\cdot\boldsymbol{a}.
	\end{align}
	Clearly these two scalars are invariant under real orthogonal transformation $ \boldsymbol{a}'=R\boldsymbol{a} $ with a $ 5\times 5 $ real orthogonal matrix $ R $. On the other hand, the magnetization vector
	\begin{align}
		M_z&=2\mathrm{i}(a_2^*a_3-a_3^*a_2)+\mathrm{i}(a_4^*a_5-a_5^*a_4),\\
		M_+=M_-^*&=\mathrm{i}(a_2^*a_5-a_5^*a_2+a_4^*a_3-a_3^*a_4)+(a_2^*a_4-a_4^*a_2+a_3^*a_5-a_5^*a_3)+\mathrm{i}\sqrt{3}(a_1^*a_5-a_5^*a_1)+\sqrt{3}(a_4^*a_1-a_1^*a_4)
	\end{align}
	does not have such invariance for general $ \boldsymbol{a} $. However, as we see below, if the state is nematic, the magnetization vanishes and it also becomes invariant. \\
	\indent In terms of $ \boldsymbol{a} $, the nematic state (\ref{eq:spin2nematic0}) can be represented as a real vector up to overall factor:
	\begin{align}
		\boldsymbol{a}=\mathrm{e}^{\mathrm{i}\theta}(r_1,r_2,r_3,r_4,r_5)^T,\quad r_i,\theta\in\mathbb{R}. \label{eq:realvectora}
	\end{align}
	In particular, the state Eq.~(\ref{eq:spin2nematic}) can be written as
	\begin{align}
		\boldsymbol{a}=\sqrt{\rho_0}(\cos\eta,\sin\eta,0,0,0)^T. \label{eq:realvectorb}
	\end{align}
	We can easily check that if $ \boldsymbol{a} $ is given by Eq.~(\ref{eq:realvectora}) or (\ref{eq:realvectorb}) $, M_z,M_\pm $ vanishes. Moreover, this vanishing property is preserved under the $ 5\times 5 $ real orthogonal transformation $ \boldsymbol{a}'=R\boldsymbol{a} $. Thus, if $ \boldsymbol{a} $ has the form (\ref{eq:realvectora}), the following holds:
	\begin{align}
		&\boldsymbol{a} \text{ is a solution of the GP equation.} \nonumber \\
		\leftrightarrow \quad & \boldsymbol{a}'=\mathrm{e}^{\mathrm{i}\varphi}R\boldsymbol{a} \text{ is also a solution.}
	\end{align}
	Or, in terms of $ \boldsymbol{\psi} $, if $ \boldsymbol{\psi} $ is a nematic state [Eq.~(\ref{eq:spin2nematic0})],
	\begin{align}
		&\boldsymbol{\psi} \text{ is a solution of the GP equation [Eq.~(\ref{eq:spin2GP})].} \nonumber \\
		\leftrightarrow\quad & \boldsymbol{\psi}'=\mathrm{e}^{\mathrm{i}\varphi}U_0RU_0^{-1}\boldsymbol{\psi} \text{ is also a solution.} \label{eq:nematicqsym}
	\end{align}
	Now let us recall the discussion in Subsec.~\ref{sec:symzero}. Even if the Hamiltonian density does not have a group symmetry Eq.~(\ref{eq:symham}), as far as a set of solutions satisfy the property Eq.~(\ref{eq:symsol}), we can derive the corresponding zero-mode solutions  $ (\boldsymbol{u},\boldsymbol{v})=(U_{\varphi}\boldsymbol{\psi},U_{\varphi}^*\boldsymbol{\psi}^*) $. In the present case, Eq.~(\ref{eq:nematicqsym}) suggests that the set of solutions has a  $ U(1)\times SO(5) $-symmetry, though the symmetry of Hamiltonian is $ U(1)\times SO(3) $. 
	Since $ SO(5) $ is generated by ten operators $ T_{ab} $, where $ T_{ab} \ (1\le a<b\le 5) $ is a matrix such that  $ (a,b) $-component is $ -\mathrm{i} $ and $ (b,a) $-component is $ \mathrm{i} $ and all other components are zero, we obtain eleven SSB-originated zero-mode solutions corresponding to the infinitesimal transformations of $ U(1)\times SO(5) $:
	\begin{align}
		\begin{pmatrix} \boldsymbol{u} \\ \boldsymbol{v} \end{pmatrix}=\begin{pmatrix} \boldsymbol{\psi} \\ -\boldsymbol{\psi}^* \end{pmatrix},\quad \begin{pmatrix} U_0T_{ab}U_0^{-1}\boldsymbol{\psi} \\ -U_0^*T_{ab}^*(U_0^{-1})^*\boldsymbol{\psi}^* \end{pmatrix} \quad(1\le a<b\le 5). \label{eq:elevenSO5}
	\end{align}
	Note, however, that we only obtain five linearly independent solutions from Eq.~(\ref{eq:elevenSO5}), because any real vector $ \boldsymbol{a} $ has a six-dimensional $ SO(4) $-symmetry.  In particular, if we use  $ \boldsymbol{a} $ of Eq.~(\ref{eq:realvectorb}), the null space $ W $ defined by Eq. (\ref{eq:nullspace}) is given by 
	\begin{align}
		W=&\operatorname{span}\{ \sin\eta T_{13}-\cos\eta T_{23},\, \sin\eta T_{14}-\cos\eta T_{24},\,\sin\eta T_{15}-\cos\eta T_{25},\, T_{34},\,T_{35},\,T_{45} \},
	\end{align} 
	which describes the unbroken $ SO(4) $ algebra. Note that the above eleven solutions (\ref{eq:elevenSO5}) also include the conventional SSB-originated zero mode solutions originated from the ordinary $ U(1)\times SO(3) $-symmetry. The relation between generators of $ SO(3) $ and those of $ SO(5) $ is as follows:
	\begin{align}
		F_z&=-U_0(2T_{23}+T_{45})U_0^{-1}, \label{eq:FztoT}\\
		F_x&=U_0(-\sqrt{3}T_{15}-T_{25}+T_{34})U_0^{-1},\\
		F_y&=U_0(-\sqrt{3}T_{14}+T_{24}+T_{35})U_0^{-1}. \label{eq:FytoT}
	\end{align}
	Now let us introduce the notation for five zero-mode solutions $ \boldsymbol{y}_1,\dots,\boldsymbol{y}_5 $ in the same way with the standard form of Subsec.~\ref{subsec:H0k0}. Let $ \boldsymbol{\psi} $ be Eq.~(\ref{eq:spin2nematic}), and
	\begin{align}
		\boldsymbol{y}_1&=\frac{1}{\sqrt{\rho_0}}\begin{pmatrix}\boldsymbol{\psi} \\ -\boldsymbol{\psi}^* \end{pmatrix}=(\tfrac{\sin\eta}{\sqrt{2}},0,\cos\eta,0,\tfrac{\sin\eta}{\sqrt{2}},-\tfrac{\sin\eta}{\sqrt{2}},0,-\cos\eta,0,-\tfrac{\sin\eta}{\sqrt{2}})^T, \\
		\boldsymbol{y}_2&=\frac{1}{\sqrt{\rho_0}}\begin{pmatrix}U_0T_{12}U_0^{-1}\boldsymbol{\psi} \\ -U_0^*T_{12}^*(U_0^{-1})^*\boldsymbol{\psi}^* \end{pmatrix}=(\tfrac{\mathrm{i}\cos\eta}{\sqrt{2}},0,-\mathrm{i}\sin\eta,0,\tfrac{\mathrm{i}\cos\eta}{\sqrt{2}},\tfrac{\mathrm{i}\cos\eta}{\sqrt{2}},0,-\mathrm{i}\sin\eta,0,\tfrac{\mathrm{i}\cos\eta}{\sqrt{2}})^T,\\
		\boldsymbol{y}_3&=\frac{1}{\sqrt{\rho_0}}\begin{pmatrix}U_0(T_{13}\cos\eta+T_{23}\sin\eta)U_0^{-1}\boldsymbol{\psi} \\ -U_0^*(T_{13}^*\cos\eta+T_{23}^*\sin\eta)(U_0^{-1})^*\boldsymbol{\psi}^* \end{pmatrix}=(\tfrac{-1}{\sqrt{2}},0,0,0,\tfrac{1}{\sqrt{2}},\tfrac{1}{\sqrt{2}},0,0,0,\tfrac{-1}{\sqrt{2}})^T, \\
		\boldsymbol{y}_4&=\frac{1}{\sqrt{\rho_0}}\begin{pmatrix}U_0(T_{14}\cos\eta+T_{24}\sin\eta)U_0^{-1}\boldsymbol{\psi} \\ -U_0^*(T_{14}^*\cos\eta+T_{24}^*\sin\eta)(U_0^{-1})^*\boldsymbol{\psi}^* \end{pmatrix}=(0,\tfrac{\mathrm{i}}{\sqrt{2}},0,\tfrac{-\mathrm{i}}{\sqrt{2}},0,0,\tfrac{\mathrm{i}}{\sqrt{2}},0,\tfrac{-\mathrm{i}}{\sqrt{2}},0)^T, \\
		\boldsymbol{y}_5&=\frac{1}{\sqrt{\rho_0}}\begin{pmatrix}U_0(T_{15}\cos\eta+T_{25}\sin\eta)U_0^{-1}\boldsymbol{\psi} \\ -U_0^*(T_{15}^*\cos\eta+T_{25}^*\sin\eta)(U_0^{-1})^*\boldsymbol{\psi}^* \end{pmatrix}=(0,\tfrac{-1}{\sqrt{2}},0,\tfrac{-1}{\sqrt{2}},0,0,\tfrac{1}{\sqrt{2}},0,\tfrac{1}{\sqrt{2}},0)^T.
	\end{align}
	We also define $ \boldsymbol{z}_i=\sigma\boldsymbol{y}_i $ for $ i=1,\dots,5 $. Then, they satisfy
	\begin{align}
		&H_0\boldsymbol{y}_i=\boldsymbol{0},\quad H_0\boldsymbol{z}_i=2\kappa_i\boldsymbol{y}_i,\\
		&\kappa_1=2(c_0+c_2)\rho_0,\quad \kappa_2=-2c_2\rho_0,\\
		&\kappa_3=2(4c_1\sin^2\eta-c_2)\rho_0,\\
		&\kappa_4=2(4c_1\sin^2(\eta-\tfrac{\pi}{3})-c_2)\rho_0, \\ 
		&\kappa_5=2(4c_1\sin^2(\eta+\tfrac{\pi}{3})-c_2)\rho_0.
	\end{align}
	All $ \boldsymbol{y}_i $'s are $ \sigma $-orthogonal to each other, and hence only type-I (quasi-)NGMs emerge. Because $ H_0^2=0 $, the eigenvalues and eigenvectors of $ \sigma H_0 $ are given by $ 2\kappa_i $'s and $ \boldsymbol{z}_i $'s by Theorem~C.\ref{prp:K2}. This provides an easy way to determine the values of $ \kappa_i $'s and the eigenvectors $ \boldsymbol{y}_i $'s satisfying the orthogonal relations (\ref{eq:orthozero119}). \\
	\indent If we define the block-diagonalizing matrix by
	\begin{align}
		U=\left( \tfrac{\boldsymbol{y}_1+\boldsymbol{z}_1}{2},\dots,\tfrac{\boldsymbol{y}_5+\boldsymbol{z}_5}{2},\tfrac{-\boldsymbol{y}_1+\boldsymbol{z}_1}{2},\dots,\tfrac{-\boldsymbol{y}_5+\boldsymbol{z}_5}{2} \right),
	\end{align}
	then we obtain the standard form of $ H_0 $ [Eq.~(\ref{eq:UiHU0U})]:
	\begin{align}
		&U^{-1}H_0U=\begin{pmatrix} K & K \\ -K & -K \end{pmatrix},\\
		&K=\operatorname{diag}(\kappa_1,\kappa_2,\kappa_3,\kappa_4,\kappa_5),
	\end{align}
	which shows that there are five type-I modes. The perturbative expansion of  $ \boldsymbol{y}_i $'s and corresponding dispersion relations are given by [Eqs. (\ref{eq:typeIvalue}) and (\ref{eq:typeIvector})]
	\begin{align}
		\epsilon=\sqrt{2\kappa_i}k+O(k^2),\quad \boldsymbol{\xi}=\boldsymbol{y}_i\pm\frac{k}{\sqrt{2\kappa_i}}\boldsymbol{z}_i+O(k^2),\quad i=1,\dots,5.
	\end{align}
	The exact dispersion relations are given by 
	\begin{align}
		\epsilon=\pm\sqrt{2\kappa_i k^2+k^4},\quad i=1,\dots,5,
	\end{align}
	which are consistent with the perturbation result.\\ 
	\indent In this phase, the fluctuations of physical quantities (\ref{eq:spinorgenlin01})-(\ref{eq:spinorgenlin05}) are
	\begin{align}
		&\delta\rho=\sqrt{\rho_0}\left[ \frac{(u_2+u_{-2}+v_2+v_{-2})\sin\eta}{\sqrt{2}}+(u_0+v_0)\cos\eta \right],\quad \delta M_z=\sqrt{2\rho_0}(u_2-u_{-2}+v_2-v_{-2})\sin\eta, \\
		&\delta M_+=\sqrt{2\rho_0}\left[ (u_1+v_{-1})\sin\eta+\sqrt{3}(v_1+u_{-1})\cos\eta \right],\quad \delta M_-=\sqrt{2\rho_0}\left[ (v_1+u_{-1})\sin\eta+\sqrt{3}(u_1+v_{-1})\cos\eta \right].
	\end{align}
	We can verify that the mode $ \boldsymbol{y}_1+\frac{k}{\sqrt{2\kappa_1}}\boldsymbol{z}_1 $ has finite $ \delta\rho $ and it is a sound wave. The mode $ \boldsymbol{y}_i+\frac{k}{\sqrt{2\kappa_i}}\boldsymbol{z}_i $ with $ i=3,4, $ and 5 has finite $ \delta M_z,\ \delta M_x, $ and $ \delta M_y $. So they are a spin wave with the $z$-, $x$-, and $y$-direction. Exceptionally, if $ \eta=0 $, i.e., if the phase is uniaxial nematic, $ \delta M_z $ vanishes and $ \boldsymbol{y}_3+\frac{k}{\sqrt{2\kappa_3}}\boldsymbol{z}_3 $ has only a fluctuation of higher-rank tensors. Regardless of the value of $ \eta $, the mode $ \boldsymbol{y}_2+\frac{k}{\sqrt{2\kappa_2}}\boldsymbol{z}_2 $ is always a fluctuation of higher-rank tensors. \\
	\indent The above discussion on fluctuations of physical quantities is closely related to whether a given mode is a NGM or a quasi-NGM. We note that $ \boldsymbol{y}_1, \boldsymbol{y}_3,\,\boldsymbol{y}_4, $ and $ \boldsymbol{y}_5 $ are regarded as conventional NGMs, because
	\begin{align}
		\boldsymbol{y}_3 \propto \begin{pmatrix} F_z\boldsymbol{\psi} \\ -F_z^*\boldsymbol{\psi}^* \end{pmatrix},\quad \boldsymbol{y}_4 \propto\begin{pmatrix} F_y\boldsymbol{\psi} \\ -F_y^*\boldsymbol{\psi}^* \end{pmatrix},\quad \boldsymbol{y}_5 \propto\begin{pmatrix} F_x\boldsymbol{\psi} \\ -F_x^*\boldsymbol{\psi}^* \end{pmatrix}
	\end{align}
	hold if $ \eta\ne0 $, thus they reduce to Eq.~(\ref{eq:spinorzeromode}). Therefore, only $ \boldsymbol{y}_2 $ is a quasi-NGM. This quasi-NGM is also simply obtained by differentiation of the GP equation by $ \eta $. If $ \eta=0 $, the state becomes uniaxial nematic $ \boldsymbol{\psi}=(0,0,\sqrt{\rho_0},0,0)^T $ and has an $ SO(2) $-symmetry with respect to the $ z $-axis rotation. In this case $ \boldsymbol{y}_3 $ also becomes a quasi-NGM, because $ F_z\boldsymbol{\psi}=\boldsymbol{0} $. As a consequence of the fact that $ \boldsymbol{y}_3 $ changes from a NGM to a quasi-NGM at $ \eta=0 $, the fluctuation $ \delta M_z $ of the mode $ \boldsymbol{y}_3+\frac{k}{\sqrt{2\kappa_3}}\boldsymbol{z}_3 $ vanishes at $ \eta=0 $. We mention that Ref.~\cite{2014arXiv1404.5685} has shown that both modes corresponding to $ \boldsymbol{y}_2 $ and $ \boldsymbol{y}_3 $ acquire an energy gap in the uniaxial nematic phase if the quantum fluctuation is included by the spinor Beliaev theory \cite{Phuc2013158}.
\subsection{Spin-3 BECs}\label{subsec:spin3}
	We also consider a few phases in spin-3 BECs. Even though the spin-3 BEC model is complicated, it is worth analyzing because it contains the following examples:
	\begin{itemize}
		\item The coefficient $ \epsilon_2 $ of the type-II dispersion relation $ \epsilon=\epsilon_2k^2+O(k^4) $ deviates from unity, as stated in Subsec. \ref{sec:bdwbm} and \ref{subsec:finitek}.
		\item The sound-spin composite wave excitation appears. Due to this, we need to make a nontrivial linear combination of zero modes $ (\boldsymbol{\psi},-\boldsymbol{\psi}^*) $ and $ (F_z\boldsymbol{\psi},-F_z\boldsymbol{\psi}^*) $ to obtain the standard form (\ref{eq:UiHU0U}). 
		\item The block-diagonalizing B-unitary matrix $ U $ can have a non-zero off-diagonal block. In all the previous examples of spin-$F$ condensates ($F\le2$) which we have seen so far, $ U $ has the form of $ U=\left( \begin{smallmatrix} U_0 & \\ & U_0^* \end{smallmatrix} \right) $, and the kinetic term  $ \sigma k^2 $ is invariant under the transformation by $ U $:  $ U^{-1}\sigma U k^2 =\sigma k^2 $. In the present case,  $ U^{-1}\sigma U k^2 $ may change to a different form.
	\end{itemize}
	Since spin-3 BECs have too many phases \cite{PhysRevLett.96.190405,PhysRevA.84.053616}, here we only focus on the following phases:
	\begin{itemize}
		\item F phase:  $ \boldsymbol{\psi}=\sqrt{\rho_0}(0,1,0,0,0,0,0)^T $.
		\item H phase:  $ \boldsymbol{\psi}=\sqrt{\rho_0}(\cos\eta,0,0,0,0,\sin\eta,0)^T $. 
	\end{itemize}
	\indent As already mentioned in Subsec.~\ref{sec:spinorgen}, the Hamiltonian density of the spin-3 BEC \cite{PhysRevLett.96.190405,PhysRevA.84.053616,Kawaguchi:2012ii} is given by Eqs.~(\ref{eq:spinorgen00}) and (\ref{eq:spinorgen03}):
	\begin{align}
		h=\sum_{j=-3}^3 |\nabla\psi_j|^2-\mu\rho+\tilde{c}_0\rho^2+\tilde{c}_1\boldsymbol{M}^2+\frac{\tilde{c}_2}{7}|\Theta|^2+\tilde{c}_3\operatorname{tr}\mathcal{N}^2,
	\end{align}
	where the definitions of the coefficients $ \tilde{c}_1,\tilde{c}_2,\tilde{c}_3 $ are the same with Fig.~8 of Ref.~\cite{PhysRevA.84.053616}.
	The GP equation is given by [c.f.: Eqs.~(\ref{eq:spinorgenGP00})-(\ref{eq:spinorgenGP03})]
	\begin{align}
		\mathrm{i}\partial_t\psi_j&=-\nabla^2\psi_j-\mu\psi_j+2\tilde{c}_0\rho\psi_j+\frac{2\tilde{c}_2}{7}\Theta(-1)^j\psi_{-j}^* \nonumber \\
		&\quad +\tilde{c}_1\left[ 2j\psi_jM_z+\sqrt{(3+j)(4-j)}\psi_{j-1}M_-+\sqrt{(3-j)(4+j)}\psi_{j+1}M_+ \right] \nonumber \\
		&\quad+\tilde{c}_3\biggl[ 2j^2\psi_jN_{zz}+(12-j^2)\psi_jN_{+-}+(2j-1)\sqrt{(3+j)(4-j)}\psi_{j-1}N_{z-}+(2j+1)\sqrt{(3-j)(4+j)}\psi_{j+1}N_{z+} \nonumber \\
		&\qquad\quad+\frac{1}{2}\sqrt{(3+j)(4-j)(2+j)(5-j)}\psi_{j-2}N_{--}+\frac{1}{2}\sqrt{(3-j)(4+j)(2-j)(5+j)}\psi_{j+2}N_{++}\biggr],
	\end{align}
	and the Bogoliubov equation is given by linearization of the GP equation. (We do not write down it explicitly here.) 
\subsubsection{Spin-3 F phase}\label{subsec:spin3Fphase}
	The state is given by
	\begin{align}
		\boldsymbol{\psi}=\sqrt{\rho_0}(0,1,0,0,0,0,0)^T,
	\end{align}
	and it becomes a solution to the GP equation with  $ \mu=2\rho_0(\tilde{c}_0+4\tilde{c}_1+48\tilde{c}_3) $. Since this state is inert \cite{PhysRevA.75.023625}, it always becomes a solution of the GP equation, and in particular, it becomes a ground state when $ \tilde{c}_1<0,\ \tilde{c}_2/|\tilde{c}_1|>28, $ and $ \tilde{c}_3/|\tilde{c}_1|>2/15 $ \cite{PhysRevLett.96.190405,PhysRevA.84.053616}. Since this state has a magnetization and preserves the $ U(1) $-symmetry, we expect one type-I and one type-II NGMs.\\
	\indent The Bogoliubov equation for  $ (\boldsymbol{u},\boldsymbol{v})^T=(u_3,\dots,u_{-3},v_3,\dots,v_{-3})^T $ is given by
	\begin{align}
		&(H_0+\sigma k^2)\begin{pmatrix}\boldsymbol{u} \\ \boldsymbol{v}\end{pmatrix}=\epsilon\begin{pmatrix}\boldsymbol{u} \\ \boldsymbol{v}\end{pmatrix},\label{eq:spin-3BogoF}\\ 
		&H_0=\begin{pmatrix}F&G \\ -G^* & -F^*\end{pmatrix},\quad \sigma=\begin{pmatrix} I_7 & \\ & -I_7 \end{pmatrix},
	\end{align}
	where $ F $ is a diagonal matrix 
	\begin{align}
		F=&\rho_0\operatorname{diag}\Bigl(5(2\tilde{c}_1+15\tilde{c}_3),2(\tilde{c}_0+4\tilde{c}_1+48\tilde{c}_3),3(2\tilde{c}_1+15\tilde{c}_3),4(15\tilde{c}_3-2\tilde{c}_1),-12\tilde{c}_1,\tfrac{4}{7}(\tilde{c}_2-28\tilde{c}_1),-20\tilde{c}_1\Bigr),
	\end{align}
	and nonzero components of $ G $ are
	\begin{align}
		&G_{13}=G_{31}=\sqrt{15}(2\tilde{c}_1+15\tilde{c}_3)\rho_0,\quad G_{22}=2(\tilde{c}_0+4\tilde{c}_1+48\tilde{c}_3)\rho_0=F_{22}.
	\end{align}
	Thus, Eq.~(\ref{eq:spin-3BogoF}) is divided into eleven blocks: $ (u_3,v_1),\ (u_2,v_2),\ (u_1,v_3),\ u_0,\ u_{-1},\ u_{-2}\ u_{-3},\ v_0,\ v_{-1},\ v_{-2}, $ and $  v_{-3} $. One type-I NGM is included in the block of $ (u_2,v_2) $ and type-II NGM with positive and negative dispersion relations are included in $ (u_1,v_3) $ and $ (u_3,v_1) $, respectively. All other modes are gapful. \\
	\indent To save space, we define unit vectors by
	\begin{align}
		\boldsymbol{e}_m:=\text{( $ u_m=1 $ and all other components are zero.)}
	\end{align}
	for $  m=3,\dots,-3 $. Note that the vector such that $ v_m=1 $ and all other components are zero can be written as $ \tau\boldsymbol{e}_m $, where $ \tau=\left(\begin{smallmatrix}&I_7 \\ I_7 &\end{smallmatrix}\right) $ . Then, the SSB-originated zero mode solutions with desired  $ \sigma $-orthogonal relations (\ref{eq:orthozero11})-(\ref{eq:orthozero12}) are given by
	\begin{align}
		\boldsymbol{y}_1&:=\frac{1}{\sqrt{\rho_0}}\begin{pmatrix}\boldsymbol{\psi} \\ -\boldsymbol{\psi}^* \end{pmatrix}=\boldsymbol{e}_2-\tau\boldsymbol{e}_2, \\
		\boldsymbol{x}_1&:=\frac{1}{2\sqrt{\rho_0}}\begin{pmatrix}F_-\boldsymbol{\psi} \\ -F_+^*\boldsymbol{\psi}^* \end{pmatrix}=\frac{\sqrt{10}}{2}\boldsymbol{e}_1-\frac{\sqrt{6}}{2}\tau\boldsymbol{e}_3.
	\end{align}
	Note that $ \boldsymbol{x}_1 $ has nonvanishing entries both in the $ \boldsymbol{u} $-part and $ \boldsymbol{v} $-part. This is due to the linear independence of $ F_x\boldsymbol{\psi} $ and $ F_y\boldsymbol{\psi} $, and it makes the coefficient of the quadratic dispersion relation to be greater than 1. The generalized eigenvector pairing with $ \boldsymbol{y}_1 $ is given by $ \boldsymbol{z}_1=\sigma\boldsymbol{y}_1 $ and satisfy
	\begin{align}
		H_0\boldsymbol{z}_1=2\kappa_1\boldsymbol{y}_1,\quad \kappa_1=F_{22}=2(\tilde{c}_0+4\tilde{c}_1+48\tilde{c}_3)\rho_0.
	\end{align}
	Thus the dispersion relation of the type-I Bogoliubov phonon is given by
	\begin{align}
		\epsilon = \pm\sqrt{2\kappa_1}k+O(k^2) = \pm2\sqrt{(\tilde{c}_0+4\tilde{c}_1+48\tilde{c}_3)\rho_0}k+O(k^2). \label{eq:spin-3typeI0}
	\end{align}
	On the other hand, the type-II dispersion relation is given by
	\begin{align}
		\epsilon=\frac{(\boldsymbol{x}_1,\sigma\boldsymbol{x}_1)_\sigma}{(\boldsymbol{x}_1,\boldsymbol{x}_1)_\sigma}k^2+O(k^4)=4k^2+O(k^4). \label{eq:spin-3typeII0}
	\end{align}
	Thus we have a steeper quadratic dispersion relation than that of a free particle $ \epsilon=k^2 $.\\ 
	\indent The exact dispersion relations can be obtained by solving the eigenvalue problem (\ref{eq:spin-3BogoF}) directly. The result is
	\begin{align}
		\epsilon=&\pm\sqrt{4(\tilde{c}_0+4\tilde{c}_1+48\tilde{c}_3)\rho_0k^2+k^4}, \label{eq:spin3FtypeI}\\
		&\pm(2\tilde{c}_1+15\tilde{c}_3)\rho_0\pm\sqrt{(2\tilde{c}_1+15\tilde{c}_3)^2\rho_0^2+8(2\tilde{c}_1+15\tilde{c}_3)\rho_0k^2+k^4}, \label{eq:spin3FtypeII} \\
		&\pm\left[k^2+4(15\tilde{c}_3-2\tilde{c}_1)\rho_0\right],\\
		&\pm(k^2-12\tilde{c}_1\rho_0),\\
		&\pm\left[k^2+\frac{4\rho_0}{7}(\tilde{c}_2-28\tilde{c}_1)\right],\\
		&\pm(k^2-20\tilde{c}_1\rho_0).
	\end{align}
	We can check that Eq. (\ref{eq:spin3FtypeI}) reproduces Eq. (\ref{eq:spin-3typeI0}), and Eq. (\ref{eq:spin3FtypeII}) with $(-,+)$ sign reproduces Eq. (\ref{eq:spin-3typeII0}).
\subsubsection{Spin-3 H phase}\label{subsec:spin3Hphase}
	This phase becomes the ground state when $ \tilde{c}_1>0,\ \frac{-2\tilde{c}_1}{5}<\tilde{c}_3<\frac{-2\tilde{c}_1}{15}, $ and $ \tilde{c}_2>\frac{252\tilde{c}_1(5\tilde{c}_3^2-2\tilde{c}_1\tilde{c}_3)}{4\tilde{c}_1^2+12\tilde{c}_1\tilde{c}_3+45\tilde{c}_3^2} $ \cite{PhysRevLett.96.190405,PhysRevA.84.053616}. The state is given by
	\begin{align}
		\boldsymbol{\psi}=\sqrt{\rho_0}(\sqrt{\tfrac{2+m}{5}},0,0,0,0,\sqrt{\tfrac{3-m}{5}},0)^T,
	\end{align}
	where  $ m $ represents the magnetization per density and $ -2<m<3 $ holds. Since this state has nonzero magnetization and a discrete $ C_5 $-symmetry, two type-I and one type-II NGMs appear. This state becomes a solution of the GP equation if
	\begin{align}
		\mu=\rho_0[2\tilde{c}_0+2\tilde{c}_1m^2+3\tilde{c}_3(m^2+4m+36)], \\
		m=-\frac{6\tilde{c}_3}{2\tilde{c}_1+3\tilde{c}_3} \quad\leftrightarrow\quad \tilde{c}_3=\frac{-2\tilde{c}_1m}{3(2+m)}. \label{eq:spin3Hm}
	\end{align}
	Henceforth we eliminate $ \tilde{c}_3 $ by using (\ref{eq:spin3Hm}). Though the H phase reduces to the F phase when $ m=-2 $, in the phase diagram,  $ m $ can take  $ \frac{1}{2}< m<3 $ because  $ \frac{-2\tilde{c}_1}{5}<\tilde{c}_3<\frac{-2\tilde{c}_1}{15} $. \\
	\indent  The Bogoliubov equation for  $ (\boldsymbol{u},\boldsymbol{v})^T=(u_3,\dots,u_{-3},v_3,\dots,v_{-3})^T $ is given by
	\begin{align}
		&(H_0+\sigma k^2)\begin{pmatrix}\boldsymbol{u} \\ \boldsymbol{v}\end{pmatrix}=\epsilon\begin{pmatrix}\boldsymbol{u} \\ \boldsymbol{v}\end{pmatrix}, \label{eq:spin-3bogoH} \\ 
		&H_0=\begin{pmatrix}F&G \\ -G^* & -F^*\end{pmatrix},\quad \sigma=\begin{pmatrix} I_7 & \\ & -I_7 \end{pmatrix}.
	\end{align}
	The matrices $ F $ and $ G $  have the form
	\begin{align}
		F=\begin{pmatrix} *&&&&&*& \\ &*&&&&&* \\ &&*&&&& \\ &&&*&&& \\ &&&&*&& \\ *&&&&&*& \\ &*&&&&&*  \end{pmatrix},\quad G=\begin{pmatrix} *&&&&&*& \\ &&&&*&& \\ &&&*&&& \\ &&*&&&& \\ &*&&&&&* \\ *&&&&&*& \\ &&&&*&&  \end{pmatrix}.
	\end{align}
	where nonvanishing entries are denoted by $ * $. Thus, Eq.~(\ref{eq:spin-3bogoH}) is divided into five blocks: $ (u_3,u_{-2},v_3,v_{-2}) $,\  $ (u_2,u_{-3},v_{-1}) $,\  $ (u_{-1},v_2,v_{-3}) $,\  $ (u_1,v_0) $ and $ (u_0,v_1) $. The explicit values are given by
	\begin{align}
		F_{11}&=G_{11}=\frac{2\rho_0}{5}[\tilde{c}_0(2+m)+6\tilde{c}_1(3-8m)], \\
		F_{16}&=F_{61}=G_{16}=G_{61}=\frac{2\rho_0\sqrt{(3-m)(2+m)}}{5}\left[ \tilde{c}_0-\frac{2\tilde{c}_1(6+19m)}{2+m} \right], \\
		F_{22}&=\frac{4\rho_0}{35}(7\tilde{c}_1+\tilde{c}_2)(3-m), \\
		F_{33}&=8\tilde{c}_1m\rho_0, \\
		F_{44}&=\frac{20\tilde{c}_1m^2\rho_0}{2+m}, \\
		F_{55}&=\frac{12\tilde{c}_1(1+m^2)\rho_0}{2+m}, \\
		F_{66}&=G_{66}=\frac{2(3-m)\rho_0}{5}\left[\tilde{c}_0+\frac{4\tilde{c}_1(2-7m)}{2+m}\right], \\
		F_{77}&=\frac{4\rho_0}{35}\left[\tilde{c}_2(2+m)+\frac{7\tilde{c}_1(9-66m-4m^2)}{2+m}\right], \\
		F_{27}&=F_{72}=\frac{4\rho_0\sqrt{(3-m)(2+m)}}{35}\left[ \frac{7\tilde{c}_1(3+14m)}{2+m}-\tilde{c}_2 \right], \\
		G_{25}&=G_{52}=\frac{4\sqrt{3}\rho_0\tilde{c}_1(1+3m)\sqrt{(3-m)(2+m)}}{\sqrt{5}(2+m)}, \\
		G_{34}&=G_{43}=-\frac{4\sqrt{2}\tilde{c}_1m\rho_0\sqrt{(3-m)(2+m)}}{2+m}, \\
		G_{57}&=G_{75}=\frac{4\sqrt{3}\rho_0\tilde{c}_1(3-m)(1-2m)}{\sqrt{5}(2+m)}.
	\end{align}
	 Two type-I excitations are included in the block of  $ (u_3,u_{-2},v_3,v_{-2}) $. Type-II excitations with positive and negative dispersion relations are included in the blocks of $ (u_2,u_{-3},v_{-1}) $ and $ (u_{-1},v_2,v_{-3}) $, respectively. All other modes are gapful. \\ 
	 \indent Let us first see the block of $ (u_3,u_{-2},v_3,v_{-2}) $, which has two type-I excitations. The Bogoliubov equation is given by
	 \begin{align}
	 	&H_0'\begin{pmatrix}u_3 \\ u_{-2} \\ v_3 \\ v_{-2} \end{pmatrix}=\epsilon\begin{pmatrix}u_3 \\ u_{-2} \\ v_3 \\ v_{-2} \end{pmatrix},\quad H_0'=\begin{pmatrix} F_{11}& F_{16} & G_{11} & G_{16} \\ F_{61} & F_{66} & G_{61} & G_{66} \\ -G_{11}^* & -G_{16}^* & -F_{11}^* & -F_{16}^* \\ -G_{61}^* & -G_{66}^* & -F_{61}^* & -F_{66}^*  \end{pmatrix}.
	 \end{align}
	 We can check that $ (H_0')^2=0 $, and hence Theorem~C.\ref{prp:K2} can be applied. So, we can determine $ \boldsymbol{y}_i $'s and $ \kappa_i $'s giving  the standard form (\ref{eq:UiHU0U}) by solving the eigenvalue problem of $ \sigma H_0' $. By solving it, we obtain
	 \begin{align}
	 	\boldsymbol{y}_1&:=\frac{1}{\sqrt{\rho_0}}\begin{pmatrix}\boldsymbol{\psi} \\ -\boldsymbol{\psi}^*\end{pmatrix}=\sqrt{\tfrac{2+m}{5}}(\boldsymbol{e}_3-\tau\boldsymbol{e}_3)+\sqrt{\tfrac{3-m}{5}}(\boldsymbol{e}_2-\tau\boldsymbol{e}_2), \\
	 	\boldsymbol{y}_2&:=\frac{1}{\sqrt{\rho_0}\sqrt{(3-m)(2+m)}}\left[ -m\begin{pmatrix}\boldsymbol{\psi} \\ -\boldsymbol{\psi}^*\end{pmatrix}+\begin{pmatrix}F_z\boldsymbol{\psi} \\ -F_z^*\boldsymbol{\psi}^*\end{pmatrix} \right]=\sqrt{\tfrac{3-m}{5}}(\boldsymbol{e}_3-\tau\boldsymbol{e}_3)-\sqrt{\tfrac{2+m}{5}}(\boldsymbol{e}_2-\tau\boldsymbol{e}_2),\\
	 	\boldsymbol{z}_1&:=\sigma\boldsymbol{y}_1=\sqrt{\tfrac{2+m}{5}}(\boldsymbol{e}_3+\tau\boldsymbol{e}_3)+\sqrt{\tfrac{3-m}{5}}(\boldsymbol{e}_2+\tau\boldsymbol{e}_2),\\
	 	\boldsymbol{z}_2&:=\sigma\boldsymbol{y}_2=\sqrt{\tfrac{3-m}{5}}(\boldsymbol{e}_3+\tau\boldsymbol{e}_3)-\sqrt{\tfrac{2+m}{5}}(\boldsymbol{e}_2+\tau\boldsymbol{e}_2).
	 \end{align}
	 and
	 \begin{align}
	 	&H_0\boldsymbol{z}_i=2\kappa_i\boldsymbol{y}_i,\quad i=1,2,\\
	 	&\kappa_1=2\rho_0\left[ \tilde{c}_0-\frac{2\tilde{c}_1m(m+18)}{2+m} \right],\\
	 	&\kappa_2=4\rho_0\tilde{c}_1(3-m).
	 \end{align}
	Thus, the dispersion relations are given by $ \epsilon =\sqrt{2\kappa_i}k+O(k^2),\ i=1,2 $.  While $ \boldsymbol{y}_1 $ is written by only using the phase-fluctuation zero mode, $ \boldsymbol{y}_2 $ has the form of linear combination of the phase and the spin fluctuations. So, the NGM arising from $ \boldsymbol{y}_2 $ has both density and spin fluctuations. Thus this mode is a sound-spin composite excitation. \\ 
	 \indent Next, let us see the block of $ (u_2,u_{-3},v_{-1}) $, which possesses a type-II NGM. A normalized finite-norm eigenvector constructed from two zero-mode solutions is given by
	 \begin{align}
	 	\boldsymbol{x}_1&:=\frac{1}{\sqrt{2m\rho_0}}\begin{pmatrix} F_-\boldsymbol{\psi}\\ -F_+^*\boldsymbol{\psi}^* \end{pmatrix}=\sqrt{\tfrac{3(2+m)}{5m}}\boldsymbol{e}_2+\sqrt{\tfrac{3(3-m)}{5m}}\boldsymbol{e}_{-3}-\sqrt{\tfrac{3-m}{m}}\tau\boldsymbol{e}_1.
	 \end{align}
	We can show $ (\boldsymbol{x}_1,\boldsymbol{x}_1)_\sigma=\operatorname{sgn}m $. So, it represents a normalized positive-norm eigenvector if $ m>0 $. The dispersion relation is given by
	 \begin{align}
	 	\epsilon = \frac{(\boldsymbol{x}_1,\sigma\boldsymbol{x}_1)_\sigma}{(\boldsymbol{x}_1,\boldsymbol{x}_1)_\sigma}k^2+O(k^4)=\frac{6-m}{m}k^2+O(k^4). \label{eq:spin3Htypeii}
	 \end{align}
	 Other two modes in the block of $ (u_2,u_{-3},v_{-1}) $ are gapful. Similarly, the block $ (u_{-1},v_2,v_{-3}) $ has the zero mode $ \tau\boldsymbol{x}_1^* $, and the corresponding type-II dispersion is given by $ \epsilon = -\frac{6-m}{m}k^2+O(k^4) $. \\
	\indent It is interesting to see what happens to the type-II NGM at $ m=0 $, though the state with $ -2<m<1/2 $ does not appear in the phase diagram. Since the magnetization vanishes, the WB matrix (\ref{eq:wbmatspinor}) vanishes and therefore we expect four type-I NGMs. When $ m=0 $, the expansion Eq.~(\ref{eq:spin3Htypeii}) becomes invalid. Instead, we have two type-I NGMs. Since the characteristic equation for the block of $ (u_2,u_{-3},v_{-1}) $ is cubic for $ \epsilon $, it is not smart to discuss the dispersion relation based on a lengthy exact expression. So, let us discuss the lowest order solution. If we ignore the terms $ \epsilon^\alpha k^\beta $ such that $ \alpha+\beta\ge 3 $, the characteristic equation for $ (u_2,u_{-3},v_{-1}) $ reduces to
	\begin{align}
		&\epsilon^2+2A m \epsilon+2A(m-6)k^2=0, \label{eq:spin3Htypechange}\\
		&A=\frac{2\tilde{c}_1\rho_0[14\tilde{c}_1m(3+4m)-\tilde{c}_2(1+m^2)]}{7\tilde{c}_1m(13+4m)-\tilde{c}_2(2+m)}.
	\end{align}
	When $ m \simeq  0 $,  $ A\simeq \tilde{c}_1\rho_0>0 $. Therefore, the gapless solution to the above equation becomes
	\begin{align}
		\epsilon&=-Am+\sqrt{A^2m^2+2A(6-m)k^2}\nonumber \\
		&\simeq  \begin{cases} \sqrt{12A}|k| & (m=0) \\ \frac{6-m}{m}k^2 & (m\ne 0). \end{cases}
	\end{align}
	Thus we can observe a type-I--type-II transition. When $ m=0 $, the gapful solution of Eq.~(\ref{eq:spin3Htypechange}) also changes to the gapless one. We again emphasize that the H phase is unstable if $ -2<m<1/2 $, and several other gapful excitations have Landau or dynamical instabilities in this region.
\section{The case of spacetime symmetry breaking}\label{sec:spacetime}
	\indent The general theory constructed in Sec.~\ref{sec:perturb} is restricted to the case where the state does not break a spacetime symmetry. In this section we consider two examples of spacetime symmetry breaking; the one is the Kelvin modes in a vortex, i.e., a spiral motion of a vortex and the other one is the ripplon in two-component BECs, i.e., the oscillation of a domain wall separating two immiscible condensates. We show that the main feature does not change even in the case of spacetime symmetry breaking. As with the result of Sec.~\ref{sec:perturb}, if a given zero mode solution is  $ \sigma $-orthogonal to all other zero mode solutions, the NGM originated from this zero mode has a linear dispersion, i.e., the NGM is of type-I. On the other hand, if there exists a pair such that their $ \sigma $-inner product is nonzero, then we can construct a finite-norm zero-mode solution from them and the dispersion of this NGM becomes quadratic, i.e., the NGM is of type-II. The coefficient of dispersion can be also calculated by the same method in Sec.~\ref{sec:perturb}. However, we also see that the coefficient of type-II dispersion relation diverges if the system size is sent to be infinite, which means that the naive perturbation method becomes invalid for infinite systems. We show a perspective to this issue in Subsec.~\ref{subsec:infinite}. 
\subsection{Kelvin modes in one component BECs}\label{subsec:Kelvin}
	\indent Let us consider the GP functional of a scalar condensate in three spatial dimensions:
	\begin{align}
		H=\int\mathrm{d}^3x \left(|\nabla\psi|^2-\mu|\psi|^2+c_0|\psi|^4\right)
	\end{align}
	The GP and the Bogoliubov equations are given by
	\begin{align}
		&\mathrm{i}\partial_t\psi=-\nabla^2\psi-\mu\psi+2c_0|\psi|^2\psi, \label{eq:GPst} \\
		&\mathrm{i}\partial_t\begin{pmatrix}u\\ v\end{pmatrix}=\begin{pmatrix} -\nabla^2-\mu+4c_0|\psi|^2 & 2c_0\psi^2 \\ -2c_0\psi^{2*} & \nabla^2+\mu-4c_0|\psi|^2 \end{pmatrix}\begin{pmatrix}u\\ v\end{pmatrix}.
	\end{align}
	Henceforth we consider a stationary vortex solution, and we assume that $ \psi $ is independent of  $ z,t $ and invariant under a $ z $-axis rotation. Let $ \psi $ be
	\begin{align}
		\psi(x,y)=f(r)\mathrm{e}^{\mathrm{i}n\theta},\quad (x,y)=(r\cos\theta,r\sin\theta).
	\end{align}
	Here $ n $ is an integer representing the charge of the vortex, and the non-negative function $ f(r) $ has an asymptotic behavior $ f(\infty)=\sqrt{\rho_0} $ at $ r=\infty $. From this boundary condition, the chemical potential is determined to be $ \mu=2c_0\rho_0 $, and the GP equation reduces to
	\begin{align}
		-f''(r)-\frac{f'(r)}{r}+\frac{n^2f(r)}{r^2}-2c_0f(r)\left(\rho_0-f(r)^2\right)=0.
	\end{align}
	The asymptotic expansion at $ r=\infty $ is given by
	\begin{align}
		f(r)=\sqrt{\rho_0}-\frac{n^2}{4c_0\sqrt{\rho_0}r^2}-\frac{8n^2+n^4}{32c_0^2\rho_0^{3/2}r^4}+\dotsb. \label{eq:frasymp}
	\end{align}
	Let us consider the Bogoliubov equation in the presence of this $ \psi $. We are interested in the solution of the mode propagating in the $ z $-direction and seek a solution of the form $ (u(x,y,z,t),v(x,y,z,t))=(u(x,y),v(x,y))\mathrm{e}^{\mathrm{i}(kz-\epsilon t)} $. The equation becomes
	\begin{gather}
		(H_0+\sigma k^2)\begin{pmatrix}u\\ v\end{pmatrix}=\epsilon\begin{pmatrix}u\\ v\end{pmatrix},\qquad \sigma=\begin{pmatrix}1 & \\ & -1\end{pmatrix}, \\
		H_0=\begin{pmatrix} -\partial_x^2-\partial_y^2-2c_0(\rho_0-2f^2) & 2c_0f^2\mathrm{e}^{2\mathrm{i}n\theta} \\  -2c_0f^2\mathrm{e}^{-2\mathrm{i}n\theta} & \partial_x^2+\partial_y^2+2c_0(\rho_0-2f^2) \end{pmatrix}.
	\end{gather}
	Following the same way with Sec.~\ref{sec:perturb}, we calculate an eigenvector of $ H=H_0+\sigma k^2 $ starting from an eigenvector of $ H_0 $. 
	We define the $ \sigma $-inner product for  $ w_1=(u_1(x,y),v_1(x,y))^T $ and $ w_2=(u_2(x,y),v_2(x,y))^T $ as
	\begin{align}
		(w_1,w_2)_\sigma:=\int\mathrm{d}x\mathrm{d}y\left( u_1^*u_2-v_1^*v_2 \right). \label{eq:kelvinsigmaprod}
	\end{align}
	If  $ \psi(x,y,z,t) $ is a solution of the GP equation (\ref{eq:GPst}),  $ \mathrm{e}^{\mathrm{i}\varphi}\psi(x+x_0,y+y_0,z,t) $ is also a solution. Differentiating both sides of Eq.~(\ref{eq:GPst}) by $ \varphi,x_0 $ and $ y_0 $, we obtain three SSB-originated zero-mode solutions for $ H_0 $: 
	\begin{align}
		w_{\text{phase}}=\begin{pmatrix} \psi \\ -\psi^* \end{pmatrix},\ w_{x\text{-trans}}=\begin{pmatrix} \partial_x\psi \\ \partial_x\psi^* \end{pmatrix},\ w_{y\text{-trans}}=\begin{pmatrix} \partial_y\psi \\ \partial_y\psi^* \end{pmatrix}. \label{eq:kelvinzeromode}
	\end{align}
	All these modes have zero norm, i.e., $ (w,w)_\sigma=0 $. The $ \sigma $-inner product between $ w_{\text{phase}} $ and  $ w_{x\text{-trans}} $ vanishes since $ |\psi|\rightarrow\sqrt{\rho_0} $ at infinity:
	\begin{align}
		(w_{\text{phase}},w_{x\text{-trans}})_\sigma=\int\mathrm{d}x\mathrm{d}y\partial_x|\psi|^2=0.
	\end{align}
	Similarly we also obtain $ (w_{\text{phase}},w_{y\text{-trans}})_\sigma=0 $. Thus,  $ w_{\text{phase}} $ is $ \sigma $-orthogonal to the other zero modes, and the NGM from this zero mode is of type-I. On the other hand,  $ w_{x\text{-trans}} $ and $ w_{y\text{-trans}} $ are not $ \sigma $-orthogonal, because
	\begin{align}
		&(w_{x\text{-trans}},w_{y\text{-trans}})_\sigma \nonumber \\
		=&\int\mathrm{d}x\mathrm{d}y\left[ \partial_x\psi^*\partial_y\psi-\partial_x\psi\partial_y\psi^* \right]=\int r\mathrm{d}r\mathrm{d}\theta \left[\frac{2n\mathrm{i}f(r)f'(r)}{r}\right]=2\pi\mathrm{i}n\rho_0.
	\end{align}
	Here, we have assumed $ f(0)=0 $. Thus, the $ \sigma $-inner product between the two zero modes originated from the translational-symmetry breaking gives the topological charge of the vortex.  Because of non-$ \sigma $-orthogonality,  the dispersion relation of the NGM from these two zero modes is expected to be of type-II, and this mode corresponds to the Kelvin mode. \\
	\indent In the present case, the Gram matrix is given by
	\begin{align}
		P&=\begin{pmatrix} (w_{\text{phase}},w_{\text{phase}})_\sigma \!& (w_{\text{phase}},w_{x\text{-trans}})_\sigma \!& (w_{\text{phase}},w_{y\text{-trans}})_\sigma \\ (w_{x\text{-trans}},w_{\text{phase}})_\sigma \!& (w_{x\text{-trans}},w_{x\text{-trans}})_\sigma \!& (w_{x\text{-trans}},w_{y\text{-trans}})_\sigma \\ (w_{y\text{-trans}},w_{\text{phase}})_\sigma \!& (w_{y\text{-trans}},w_{x\text{-trans}})_\sigma \!& (w_{y\text{-trans}},w_{y\text{-trans}})_\sigma \end{pmatrix}=\begin{pmatrix} 0&0&0 \\ 0&0&2\pi\mathrm{i}n\rho_0 \\ 0&-2\pi\mathrm{i}n\rho_0 & 0 \end{pmatrix}. \label{eq:kelvingram}
	\end{align} 
	So we obtain $ \frac{1}{2}\operatorname{rank}P=1 $, which implies that one type-II mode appears. \\
	\indent Let us derive the dispersion relation explicitly. Henceforth we assume $ n>0 $ without loss of generality. We can construct a positive-norm zero mode by
	\begin{align}
		w_0:=w_{x\text{-trans}}-\mathrm{i}w_{y\text{-trans}},
	\end{align}
	which has positive norm: $ (w_0,w_0)_\sigma=4\pi n\rho_0 $. (When $ n<0 $,  $ w_{x\text{-trans}}+\mathrm{i}w_{y\text{-trans}} $ has positive norm.) Let us solve the Bogoliubov equation for finite $ k $ perturbatively:
	\begin{align}
		(H_0+\sigma k^2)(w_0+w_2 k^2+w_4k^4+\dotsb)=(\epsilon_2 k^2+\epsilon_4k^4+\dotsb)(w_0+w_2 k^2+w_4k^4+\dotsb).
	\end{align}
	The equation for $ k^2 $-coefficient is given by $ H_0w_2+\sigma w_0=\epsilon_2 w_0 $. 
	Taking a  $ \sigma $-inner product between $ w_0 $ and this equation, we obtain
	\begin{align}
		\epsilon_2=\frac{(w_0,\sigma w_0)_\sigma}{(w_0,w_0)_\sigma}&=\frac{2\int\mathrm{d}x\mathrm{d}y(|\partial_x\psi|^2+|\partial_y\psi|^2)}{4\pi n \rho_0}=\frac{1}{n\rho_0}\int_0^\infty\mathrm{d}r\left[ \frac{n^2f(r)^2}{r}+rf'(r)^2 \right]. \label{eq:kelvindispersion0}
	\end{align}
	Since this integral diverges logarithmically, let us introduce a cutoff at $ r=R $. We then obtain $ \epsilon_2\simeq n\log R $, and the dispersion relation of the Kelvin mode is found to be
	\begin{align}
		\epsilon=(n\log R)k^2+\dotsb, \label{eq:kelvindispersion}
	\end{align}
	which is consistent with preceding works \cite{Kobayashi01022014}. It is worth noting that the calculation shown here does not need a concept of central extension of Lie algebra, which arises from a little sensitive mathematical treatment of the vortex core and is necessary if one wants to explain the emergence of type-II modes from  non-commutative nature of two generators \cite{Watanabe:2014pea}. \\
	\indent We can also obtain the dispersion relation of the NGM originated from $ w_{\text{phase}} $, which simply corresponds to the Bogoliubov phonon. The generalized eigenvector pairing with $ w_{\text{phase}} $  can be obtained by differentiation of the GP equation by parameters which are not originated from symmetry \cite{TakahashiPhysD}. In the present case, the differentiation by $ \rho_0 $ yields: 
	\begin{align}
		&H_0z_{\text{phase}}=2c_0w_{\text{phase}},\quad  z_{\text{phase}}:=\begin{pmatrix} \partial_{\rho_0}\psi \\ \partial_{\rho_0}\psi^* \end{pmatrix}, \label{eq:st5}
	\end{align}
	where the relation $ \partial_{\rho_0}\mu=2c_0 $ is used. Following the derivation of Subsec.~\ref{subsec:finitek}, we seek a solution for finite $ k $ by perturbative expansion:
	\begin{align}
		&(H_0+\sigma k^2)(w_{\text{phase}}+\alpha k z_{\text{phase}}+k^2 w_2+\dotsb)=(\beta k+\gamma k^2+\dotsb)(w_{\text{phase}}+\alpha k z_{\text{phase}}+k^2 w_2+\dotsb),
	\end{align}
	where $ \alpha,\beta,\gamma $ are constants to be determined. 
	From the equation of $ k^1 $-coefficient and the relation (\ref{eq:st5}), we obtain $ 2c_0\alpha=\beta $. Taking the $ \sigma $-inner product between $ w_0 $ and the equation of $ k^2 $-coefficient, we obtain
	\begin{align}
		& (w_{\text{phase}},\sigma w_{\text{phase}})_\sigma=\alpha\beta(w_{\text{phase}},z_{\text{phase}})_\sigma, \\
		&\leftrightarrow\quad  \alpha\beta=\frac{2\int\mathrm{d}x\mathrm{d}y |\psi|^2}{\int\mathrm{d}x\mathrm{d}y \partial_{\rho_0}|\psi|^2}.
	\end{align}
	In the last expression, both the numerator and the denominator diverge for infinite systems, but if we introduce a cutoff $ r=R $ , the ratio comes close to $ 2\rho_0 $ for sufficiently large $ R $, since $ |\psi|^2 \sim \rho_0 $ and $ \partial_{\rho_0}|\psi|^2\sim 1 $ hold far from the origin. Thus, we can set $ \alpha\beta=2\rho_0 $ and we obtain
	\begin{align}
		\alpha=\pm\sqrt{\frac{\rho_0}{c_0}},\quad  \beta=\pm2\sqrt{c_0\rho_0}.
	\end{align}
	Therefore, the perturbative expansions of the eigenstate and the dispersion relation are given by
	\begin{align}
		w&=w_{\text{phase}}\pm k\sqrt{\frac{\rho_0}{c_0}}z_{\text{phase}}+O(k^2), \\
		\epsilon&=\pm2\sqrt{c_0\rho_0}k+O(k^2),
	\end{align}
	respectively. Thus we obtain a type-I relation. This relation is the same with that of the Bogoliubov phonon in a uniform system [Eq.~(\ref{eq:scalarphonon})].
\subsection{Ripplons in two-component BECs}\label{subsec:ripplon}
	Let us consider the GP functional for two-component BECs in three spatial dimensions:
	\begin{align}
		H=\int\mathrm{d}^3x\left[ \sum_{i=1,2}\left(\frac{|\nabla\psi_i|^2}{2m_i}-\mu_i|\psi_i|^2\right)+\sum_{i,j=1,2}\left(g_{ij}|\psi_i|^2|\psi_j|^2\right)\right],
	\end{align}
	where $ g_{12}=g_{21} $ and $ g_{11}, g_{22} $ are positive. 
	If $ g_{12}>\sqrt{g_{11}g_{22}} $, the ground state is given by the state where two condensates $ \psi_1 $ and $ \psi_2 $ are separated. Let us consider a stationary domain-wall solution where  $ \psi_1 $ and $ \psi_2 $ are translationally invariant in the $ x $ and $ y $ directions and the domain wall exists at $ z=0 $. We set the boundary condition as
	\begin{align}
		\psi_1\rightarrow \begin{cases} \sqrt{\rho_{1}} & (z=+\infty) \\ 0 & (z=-\infty) \end{cases},\quad \psi_2\rightarrow \begin{cases} 0 & (z=+\infty) \\ \sqrt{\rho_{2}} & (z=-\infty). \end{cases}
	\end{align}
	Without loss of generality we can assume both $ \psi_1 $ and $ \psi_2 $ are real-valued. The GP equation with respect to the $ z $-axis is given by
	\begin{align}
		\frac{-1}{2m_1}\partial_z^2\psi_1-\mu_1\psi_1+2g_{11}|\psi_1|^2\psi_1+2g_{12}|\psi_2|^2\psi_1&=0, \\
		\frac{-1}{2m_2}\partial_z^2\psi_2-\mu_2\psi_2+2g_{22}|\psi_2|^2\psi_1+2g_{12}|\psi_1|^2\psi_2&=0. 
	\end{align}
	From the boundary conditions the chemical potentials are determined as
	\begin{align}
		\mu_i=2g_{ii}\rho_{i}, \quad i=1,2.
	\end{align}
	Henceforth, for simplicity, we consider the case where the parameters of  $ \psi_1 $ and $ \psi_2 $  are symmetric:
	\begin{align}
		g_{11}=g_{22}=g,\ 2m_1=2m_2=1,\ \rho_1=\rho_2=\rho_0.
	\end{align}
	If these parameters are different, the velocities of phonons in the right and left sides are unequal and more complicated reflection-refraction phenomena may occur.\\
	\indent As usual, the Bogoliubov equation is obtained by linearization of the GP equation. Here, we are interested in the solution propagating in the $ x $ and $ y $ directions. So we seek the solution of the form $ (u_i(x,y,z,t),v_i(x,y,z,t))=\mathrm{e}^{\mathrm{i}(k_xx+k_yy-\epsilon t)}(u_i(z),v_i(z)),\ i=1,2 $. Then, the Bogoliubov equation is given by
	\begin{align}
		\epsilon\begin{pmatrix} u_1 \\ u_2 \\ v_1 \\ v_2 \end{pmatrix}=(H_0+\sigma k^2)\begin{pmatrix} u_1 \\ u_2 \\ v_1 \\ v_2 \end{pmatrix},
	\end{align}
	where $ k=\sqrt{k_x^2+k_y^2},\ \sigma=\operatorname{diag}(1,1,-1,-1)$, and
	\begin{align}
		H_0&=T+S,\\
		T&=\operatorname{diag}(-\partial_z^2-2g\rho_0,-\partial_z^2-2g\rho_0,\partial_z^2+2g\rho_0,\partial_z^2+2g\rho_0),\\
		S&=2\begin{pmatrix} 2g|\psi_1|^2+g_{12}|\psi_2|^2 & g_{12}\psi_1\psi_2^* & g\psi_1^2 & g_{12}\psi_1\psi_2 \\ g_{12}\psi_1^*\psi_2 & 2g|\psi_2|^2+g_{12}|\psi_1|^2 & g_{12}\psi_1\psi_2 & g\psi_2^2 \\ -g\psi_1^{2*} & -g_{12}\psi_1^*\psi_2^* & -2g|\psi_1|^2-g_{12}|\psi_2|^2 & -g_{12}\psi_1^*\psi_2 \\ -g_{12}\psi_1^*\psi_2^* & -g\psi_2^{*2} & -g_{12}\psi_1\psi_2^* & -2g|\psi_2|^2-g_{12}|\psi_1|^2 \end{pmatrix}.
	\end{align}
	From the symmetry of the Hamiltonian, if $ (\psi_1(z),\psi_2(z)) $ is a solution of the GP equation,  $ (\psi_1(z+z_0)\mathrm{e}^{\mathrm{i}(\theta+\varphi)},\psi_2(z+z_0)\mathrm{e}^{\mathrm{i}(\theta-\varphi)}) $ is also a solution. Differentiating the GP equation by $ \theta,\ \varphi,\  $ and $ z_0 $, we obtain three SSB-originated zero-mode solutions for $ H_0 $:
	\begin{align}
		w_{\text{over}}=\begin{pmatrix} \psi_1 \\ \psi_2 \\ -\psi_1^* \\ -\psi_2^* \end{pmatrix},\ w_{\text{rel}}=\begin{pmatrix} \psi_1 \\ -\psi_2 \\ -\psi_1^* \\ \psi_2^* \end{pmatrix},\ w_{\text{trans}}=\begin{pmatrix} \partial_z\psi_1 \\ \partial_z\psi_2 \\ \partial_z\psi_1^* \\ \partial_z\psi_2^* \end{pmatrix} \label{eq:riplonzeromode}
	\end{align}
	Here ``over'' and ``rel'' mean the overall and relative phase factors. The generalized eigenvector pairing with $ w_{\text{over}} $ is found by differentiating the GP equation by $ \rho_0 $: 
	\begin{align}
		H_0z_{\text{over}}=2gw_{\text{over}},\quad z_{\text{over}}=\begin{pmatrix}\partial_{\rho_0}\psi_1 \\ \partial_{\rho_0}\psi_2 \\ \partial_{\rho_0}\psi_1^* \\ \partial_{\rho_0}\psi_2^* \end{pmatrix}.
	\end{align}
	In the present case, the $ \sigma $-inner product for two Bogoliubov wavefunctions $ w_1=(u_{11}(z),u_{12}(z),v_{11}(z),v_{12}(z))^T, w_2=(u_{21}(z),u_{22}(z),v_{21}(z),v_{22}(z))^T $ is defined as
	\begin{align}
		(w_1,w_2)_\sigma=\int\mathrm{d}z\left( u_{11}^*u_{21}+u_{12}^*u_{22}-v_{11}^*v_{21}-v_{12}^*v_{22} \right). \label{eq:ripsigmaprod}
	\end{align}
	We can check
	\begin{align}
		(w_{\text{over}},w_{\text{rel}})_\sigma=(w_{\text{over}},w_{\text{trans}})_\sigma=0,\ (w_{\text{rel}},w_{\text{trans}})_\sigma=2\rho_0.
	\end{align}
	Thus, $ w_{\text{over}} $ is $ \sigma $-orthogonal to the other two zero modes and it gives rise to a type-I NGM. On the other hand, $ w_{\text{rel}} $ and $ w_{\text{trans}} $ are not $ \sigma $-orthogonal, so these two modes become a seed of a type-II NGM. The Gram matrix becomes 
	\begin{align}
		P&=\begin{pmatrix} (w_{\text{over}},w_{\text{over}})_\sigma &(w_{\text{over}},w_{\text{rel}})_\sigma & (w_{\text{over}},w_{\text{trans}})_\sigma \\ (w_{\text{rel}},w_{\text{over}})_\sigma &(w_{\text{rel}},w_{\text{rel}})_\sigma & (w_{\text{rel}},w_{\text{trans}})_\sigma \\ (w_{\text{trans}},w_{\text{over}})_\sigma &(w_{\text{trans}},w_{\text{rel}})_\sigma & (w_{\text{trans}},w_{\text{trans}})_\sigma \end{pmatrix}=\begin{pmatrix} 0&0&0 \\ 0&0&2\rho_0 \\ 0&2\rho_0&0  \end{pmatrix}. \label{eq:gramripln}
	\end{align}
	So, we obtain $ \frac{1}{2}\operatorname{rank}P=1 $. \\
	\indent Let us determine a finite-norm eigenvector  $ w_0:=w_{\text{rel}}+cw_{\text{trans}} $ satisfying the following $ \sigma $-orthogonal relations [c.f.: Eqs. (\ref{eq:orthozero11}) and (\ref{eq:orthozero119})]:
	\begin{align}
		(w_0,\tau w_0^*)_\sigma=(w_0,\sigma\tau w_0^*)_\sigma=0,\quad \tau:=\begin{pmatrix} & I_2 \\ I_2 &  \end{pmatrix}.
	\end{align}
	$ (w_0,\tau w_0^*)_\sigma=0 $ is satisfied if $ c $ is real. From the second condition, we obtain
	\begin{align}
		c^2=\frac{(w_{\text{rel}},\sigma w_{\text{rel}})_{\sigma}}{(w_{\text{trans}},\sigma w_{\text{trans}})_{\sigma}}=\frac{\int\mathrm{d}z(|\psi_1|^2+|\psi_2|^2)}{\int\mathrm{d}z(|\partial_z\psi_1|^2+|\partial_z\psi_2|^2)}\sim \frac{2\rho_0L}{T_0},
	\end{align}
	where $ T_0=\int\mathrm{d}z(|\partial_z\psi_1|^2+|\partial_z\psi_2|^2) $ is a total kinetic energy and we have introduced a cutoff $ L $ for the integral of the numerator. (The interval of the system is set to $ [-L,L] $.) Using this $ w_0 $, we can carry out the perturbative calculation in the same way as Kelvin modes. The coefficient of quadratic dispersion is given by
	\begin{align}
		\epsilon_2=\frac{(w_0,\sigma w_0)_\sigma}{(w_0,w_0)_\sigma}=\frac{cT_0}{\rho_0}\sim \sqrt{\frac{2T_0L}{\rho_0}}. \label{eq:riplndisper0}
	\end{align}
	Thus, the dispersion relation of ripplons for a finite-size system is given by
	\begin{align}
		\epsilon = \sqrt{\frac{2T_0L}{\rho_0}} k^2+O(k^4). \label{eq:riplndisper}
	\end{align}
	The coefficient is proportional to the square root of the system length $ \sqrt{2L} $, which is consistent with the finite-size effect found by Takeuchi and Kasamatsu \cite{PhysRevA.88.043612}.
\subsection{Perspective on infinite systems}\label{subsec:infinite}
	So far we have seen that the type-II NGMs indeed have quadratic dispersion if the system size is finite. However, it is known that the dispersion of these NGMs in infinite systems is not given by an integer power of $ k $. For Kelvin modes, it is known that the dispersion is given by $ \epsilon \sim -k^2\log k $ \cite{Donnelly}. For the ripplon, while the dispersion becomes quadratic $ \epsilon \sim L^{1/2}k^2 $ in finite size systems, it becomes  $ \epsilon \sim k^{3/2} $ in infinite systems \cite{PhysRevA.88.043612}. Empirically, the correct dispersion relations in infinite systems can be obtained if we \textit{formally} replace the system length (or radius)  $ L $ (or $ R $) by $ k^{-1} $. In order to derive them, we need to modify the naive perturbation theory; if we appropriately take account of asymptotic behaviors of low-energy quasiparticle wavefunctions in large systems, we can obtain an interpolating formula which connects an integer-power dispersion in finite systems and a non-integer dispersion relations in infinite systems. These findings will be published elsewhere in future \cite{TTKKN2014}. 
\section{Summary and discussions}\label{sec:summary}
	In this last section, we provide a summary and discuss a few related and remaining issues.
\subsection{Summary}
	In this paper, we have constructed a theory to count NGMs with linear and quadratic dispersion relations in the framework of the Bogoliubov theory 
in systems with spontaneously broken internal and/or spacetime symmetries.  
In our theory, the classification of NGMs and the explicit calculation of dispersion relations are based on the following two core concepts:
\begin{enumerate}[(i)]
	\item $ \sigma $-inner products and  $ \sigma $-orthogonality --- non-positive-definite inner products between Bogoliubov quasiparticle wavefunctions.
	\item SSB-originated zero-mode solutions --- zero-energy solutions of the Bogoliubov equation derived by differentiation of the GP equation with respect to a parameter related to the symmetry.
\end{enumerate}
The concept (i) is introduced via Bogoliubov transformations, and the most general definition is given by Eq.~(\ref{eq:intro105}). For the case of internal symmetry breaking, we can use a simplified version with the omitted spatial integration (Subsec.~\ref{subsec:sigmabasis}). For the case of spacetime symmetry breaking, we can also omit the integration for the axis where the translational symmetry is preserved [Eqs.~(\ref{eq:kelvinsigmaprod}) and (\ref{eq:ripsigmaprod})]. The solution (ii) can be generally written as Eq.~(\ref{eq:intro110}), i.e., ``$(\text{a generator of the symmetry group of the system}) \times (\text{the order parameter}) $''. In the case of spacetime symmetry breaking, these solutions are simply given by spatial derivatives [Eqs.~(\ref{eq:kelvinzeromode}) and (\ref{eq:riplonzeromode})]. 

In terms of  the $ \sigma $-orthogonality of zero-mode solutions, our procedure to count type-I and type-II NGMs can be summarized as follows:
\begin{enumerate}
		\item Define  the $ \sigma $-inner product. (It is always possible if the system obeys the Hamiltonian mechanics.)
		\item Derive all zero-energy and zero-wavenumber solutions for the Bogoliubov equation. As for the SSB-originated zero-mode solutions, we can derive it by differentiation of the fundamental equation by the corresponding parameter. 
		\item If a given zero mode solution is $ \sigma $-orthogonal to all other zero-mode solutions, then the corresponding gapless mode is of type-I.
		\item If there exists a pair of  zero modes with a nonzero $ \sigma $-inner product,  then these two modes yield one type-II excitation.
	\end{enumerate}
	On the basis of this criterion, we can also construct a matrix which counts the number of type-II NGMs, namely, a Gram matrix $ P $ (Subsec.~\ref{subsec:gram} for internal symmetry breaking and Eqs.~(\ref{eq:kelvingram}) and (\ref{eq:gramripln}) for spacetime symmetry breaking). The number of type-II modes is then given by $ n_{\text{II}}=\frac{1}{2}\operatorname{rank}P $. 
	The counting method based on the $ \sigma $-orthogonality and the Gram matrix is more useful and powerful than that proposed in earlier works, because our method can easily include an additional zero modes, which are not originated from the SSB (see the example of quasi-NGMs of the spin-2 nematic phase in Subsec.~\ref{sec:spin2nematic}), and does not need a sensitive mathematical treatment for cores of topological defects in order to derive non-commutativity of translation operators (see Sec.~\ref{sec:spacetime}).\\ 
	\indent In addition to the above-mentioned main result, our paper also includes many new findings such as:
	\begin{enumerate}[(i)]
		\item The complete block-diagonalization of the WB matrix (Subsec.~\ref{sec:bdwbm}). Through this procedure, we have found that a pair of zero-modes becoming a seed of a type-II NGM is generally linearly independent, contrary to the original assumption by Nielsen and Chadha \cite{Nielsen:1975hm}.
		\item As a result of (i), if the pair of the zero modes are linearly independent, the generated type-II NGM has a dispersion relation with a coefficient larger than that of a free particle. Namely, if we write it $ \epsilon=A k^2 $, we can show $ A\ge 1 $ [Eq.~(\ref{eq:typeiidpndnt0})]. The simplest example is given by the spin-3 BEC F phase (Subsec.~\ref{subsec:spin3Fphase}).
		\item Several linear-algebraic theorems for finite-dimensional Bogoliubov equations and Bogoliubov transformations, which we refer to as B-hermitian and B-unitary matrices in this paper. (Sec.~\ref{sec:LAofsigmaprod}). In particular, we have revived Colpa's important result \cite{Colpa1986,Colpa1986II}, where the standard form of the B-hermitian matrices is given (Theorem \ref{prp:colpa001}). The standard form Eq.~(\ref{eq:UiHU0U}) based on this theorem completely describes how many type-I, type-II and gapful modes exist.
		\item A formulation of a perturbation theory by making full use of the concept of $ \sigma $-inner products (Sec.~\ref{sec:perturb}). The construction of this theory makes it possible to calculate the dispersion relation for a finite wavenumber $ k $ very systematically. For example, if the zero-mode solution of the type-II mode is given by $ \boldsymbol{x}=(\boldsymbol{u},\boldsymbol{v})^T $, the lowest-order result is given by
		\begin{align}
			\epsilon=\frac{(\boldsymbol{x},\sigma\boldsymbol{x})_\sigma}{(\boldsymbol{x},\boldsymbol{x})_\sigma}k^2=\frac{\boldsymbol{u}^\dagger\boldsymbol{u}+\boldsymbol{v}^\dagger\boldsymbol{v}}{\boldsymbol{u}^\dagger\boldsymbol{u}-\boldsymbol{v}^\dagger\boldsymbol{v}}k^2.
		\end{align}
		This kind of calculation appears in many parts of this paper, including the case of spacetime symmetry breaking [e.g., Eqs. (\ref{eq:kelvindispersion0}) and (\ref{eq:riplndisper0})].
	\end{enumerate}
	As for (ii), we mention that the relation between the intermediately-polarized phases and quantum fluctuations is recently discussed in Ref.~\cite{2014arXiv1408.1691}. We also mention that the type-I--type-II transition, which we have demonstrated in the unstable region of the spin-3 H phase (Subsec.~\ref{subsec:spin3Hphase}), is recently proposed in metastable spin texture states of spin-1 ferromagnetic BECs in a ring trap \cite{2014arXiv1408.4129}. \\ 
	\indent We finally would like to emphasize that the construction of the whole theory based on $ \sigma $-inner products and $ \sigma $-orthogonality is independent of symmetry discussions such as Lie algebras. In our formulation, the symmetry consideration is necessary only when we derive the SSB-originated zero-mode solutions at first (Subsec.~\ref{sec:symzero}), but once the zero-mode solutions are found, the rest of the theory can be constructed without using the concept of symmetry. In fact, as emphasized in Subsecs.~\ref{subsec:H0k0} and \ref{subsec:finitek}, the standard form of $ H_0 $ [Eq.~(\ref{eq:UiHU0U})] always exists even when the zero-energy eigenvector is not originated from the SSB, and once this standard form can be obtained, the perturbative calculation for finite $ k $ can be carried out without considering the physical origin of each mode.  
This is in contrast to the previous works \cite{Nielsen:1975hm,Nambu:2004yia,PhysRevD.84.125013,PhysRevLett.108.251602,Hidaka:2012ym} based on the Lie algebra of the Hamiltonian symmetry, in which one cannot take into account  accidental zero-mode solutions which do not have an SSB origin. 
Even though in this paper the analysis is restricted to the concrete multicomponent GP model [Eq.~(\ref{eq:GP00})], the procedure is general and can be used to Hamiltonian systems in general. 
It is an interesting future work to apply our method in other models.

\subsection{Discussions}\label{subsec:discussion}
\subsubsection{Non-positive-semidefinite cases}\label{subsec:bigjordan}
	In this paper, we have derived a standard form of B-hermitian matrices only when the positive-semidefinite assumption is satisfied [Theorem \ref{prp:colpa001} and Eq.~(\ref{eq:UiHU0U})]. In this case, the size of the largest Jordan block is 2. As mentioned in the introductory part of Sec.~\ref{sec:LAofsigmaprod}, the general B-hermitian matrices can have arbitrarily large Jordan blocks. However, we can show that if $ H_0 $ has a Jordan block whose size is greater than 2, the finite-wavenumber matrix $ H=H_0+\sigma k^2 $ always has a complex eigenvalue. Its derivation is given in \ref{app:jordan}. Thus, if we are only interested in the gapless modes with stable backgrounds, such cases are physically less important. 
\subsubsection{Explicit symmetry breaking and ``massive'' Nambu-Goldstone modes}\label{eq:subsecexmass}
	The ``massive'' NGMs in the presence of explicitly symmetry-breaking terms, e.g., an external magnetic field, are discussed in Refs.~\cite{PhysRevLett.110.011602,PhysRevLett.111.021601,Nicolis,2014arXiv1406.6271}. They are gapful, but their presence is still universally ensured by symmetry and the value of the gap is determined only by a symmetry discussion. These modes also can be treated in the framework of the Bogoliubov theory. See \ref{app:exmassive} for a detail. A well-known example is a spinor BEC with a magnetic field:
	\begin{align}
		h=\sum_{j=-F}^F|\nabla\psi_j|^2-\mu\rho +h_{\text{int}}-B M_z,
	\end{align}
	where the model is the same with that in Subsec.~\ref{sec:spinorgen} except for the last term. $ M_z $ is a $z$-component of the magnetization and $ B $ is a strength of the magnetic field. By this term, the symmetry of the system reduces from $ U(1)\times SO(3) $ to $ U(1)\times SO(2) $. As derived in Eqs.(\ref{eq:finiteSSB0}) and (\ref{eq:finiteSSB}), in addition to the ordinary zero-energy SSB-originated solutions
	\begin{align}
		\begin{pmatrix}\boldsymbol{u} \\ \boldsymbol{v}\end{pmatrix}=\begin{pmatrix} \boldsymbol{\psi} \\ -\boldsymbol{\psi}^* \end{pmatrix}, \begin{pmatrix} F_z\boldsymbol{\psi} \\ -F_z^*\boldsymbol{\psi}^* \end{pmatrix},\qquad \epsilon=0,
	\end{align}
	we obtain the SSB-originated \textit{finite-energy} solutions:
	\begin{alignat}{2}
		\begin{pmatrix}\boldsymbol{u} \\ \boldsymbol{v}\end{pmatrix}=\begin{pmatrix} F_\mp \boldsymbol{\psi} \\ -F_\pm^*\boldsymbol{\psi}^* \end{pmatrix},\qquad \epsilon=\pm B, \label{eq:finiteSSB22}
	\end{alignat}
	where  $ \epsilon=\pm B $ is a gap in the energy spectra. We thus obtain massive NGMs in the Bogoliubov theory. The dispersion relation for a finite wavenumber $ k $ can be also derived [Eq.~(\ref{eq:lastdance})]. They reduce to four zero-energy solutions (\ref{eq:spinorzeromode02}) when $ B=0 $. As discussed below, these solutions play an important role to explain a perfect tunneling of ``massive'' NGMs. 
\subsubsection{SSB-originated zero-modes as an origin of perfect tunneling of NGMs}\label{subsec:tunneling}
	The SSB-originated zero-mode solutions survive even when there exists an external potential and the order parameter is spatially non-uniform, unless the potential does not break a corresponding symmetry. In order to emphasize this, let us write them with the position variable $ \boldsymbol{r} $: 
	\begin{align}
		\begin{pmatrix} \boldsymbol{u}(\boldsymbol{r}) \\ \boldsymbol{v}(\boldsymbol{r}) \end{pmatrix} = \begin{pmatrix} Q_j\boldsymbol{\psi}(\boldsymbol{r}) \\ -Q_j^*\boldsymbol{\psi}(\boldsymbol{r})^* \end{pmatrix}, \label{eq:persp21}
	\end{align}
	where  $ Q_j $ is a generator of the symmetry group $ G $. For example, in the case of spinor BECs, $ Q_j=I,\ F_x,\ F_y, $ and $ F_z $. Note that the sign of $ \boldsymbol{v}(\boldsymbol{r}) $ is frequently taken in an opposite way in many papers. The ``massive'' NGMs, i.e., finite-energy solutions (\ref{eq:finiteSSB22}) in a magnetic field also exist, if the potential does not break the symmetry with respect to a $ z $-axis rotation. \\
	\indent The above zero-mode solutions in non-uniform systems have a close relation to the scattering properties of NGMs. Scattering problems of NGMs are extensively studied in Refs. \cite{Kovrizhin,Kagan,DanshitaYokoshiKurihara,doi:10.1143/JPSJ.77.013602,PhysRevA.78.063611,PhysRevA.83.033627,PhysRevA.83.053624,PhysRevA.84.013616}. It is known that NGMs show a perfect tunneling in the long-wavelength limit. As a first example, the tunneling properties of Bogoliubov phonons in scalar BECs were studied in Refs. \cite{Kovrizhin,Kagan,DanshitaYokoshiKurihara,doi:10.1143/JPSJ.77.013602}. In particular, the physical origin of this perfect tunneling was shown to be a coincidence between the condensate wavefunction and quasiparticle wavefunctions \cite{doi:10.1143/JPSJ.77.013602}. This solution is just Eq.~(\ref{eq:persp21}) with $ Q_j=I $. The similar coincidences are also found in the perfect tunneling of spin waves in spinor BECs \cite{PhysRevA.83.033627,PhysRevA.83.053624,PhysRevA.84.013616}, and they are reduced to the cases $ Q_j=F_x,\ F_y, $ and $ F_z $ of Eq.~(\ref{eq:persp21}). Moreover, the perfect tunneling of gapful modes in the presence of magnetic fields, e.g., the transverse spin waves in the current-carrying ferromagnetic BEC \cite{PhysRevA.83.033627} and the unsaturated magnetization phases \cite{PhysRevA.83.053624} can be also explained by a position-dependent version of the ``massive'' NGMs (\ref{eq:finiteSSB22}).  [Note that their wavenumbers are not necessarily equal to zero in the current-carrying case because the form of dispersion relation may change, though the universal existence of the solution (\ref{eq:finiteSSB22}) with the energy $ \epsilon=\pm B $ is unchanged.] \\
	\indent Thus, the SSB-originated zero- and finite-energy solutions provide an explanation for all perfect tunneling properties of NGMs known so far.

\section*{Acknowledgment}
We would like to thank Shun Uchino, Michikazu Kobayashi, and Masaya Kunimi for useful discussions. 
The work of MN is supported in part by Grant-in-Aid for Scientific
Research (No. 25400268) and by the ``Topological Quantum Phenomena'' 
Grant-in-Aid for Scientific Research on Innovative Areas (No. 25103720)  
from the Ministry of Education, Culture, Sports, Science and Technology 
(MEXT) of Japan.

\appendix
\section{Bogoliubov approximation in quantum field theory}\label{app:quantum}
	In this appendix we show the equivalence of the problem between the linear waves of classical field theory and  quantum field theory within the framework of Bogoliubov approximation. Let the 2nd-quantized Hamiltonian for the $ N $-component Bose gas be
	\begin{align}
		&\hat{\mathcal{H}}=\int \hat{h} \mathrm{d}x,\\
		&\hat{h}=-\sum_{i=1}^N\hat{\psi}_i^\dagger \nabla^2\hat{\psi}_i+F(\{ \hat{\psi}_l^\dagger, \hat{\psi}_l \}).
	\end{align}
	Here,  $ \hat{\psi}_1,\dots,\hat{\psi}_N $ are field operators satisfying the bosonic commutation relations $ [\hat{\psi}_i(x),\hat{\psi}_j(y)]=0 $ and $ [\hat{\psi}_i(x),\hat{\psi}_j^\dagger(y)]=\delta_{ij}\delta(x-y) $, and $ F(\{\psi_l^*,\psi_l\})=F(\psi_1^*,\dots,\psi_N^*,\psi_1,\dots,\psi_N) $ is a c-number polynomial function and $ F(\{ \hat{\psi}_l^\dagger, \hat{\psi}_l \}) $ is defined by substituting the field operators and sorting them in normal order. The spatial dimension is arbitrary and if it is $ d $,  $ \mathrm{d}x $ and $ \delta(x-y) $ should be read as $ \mathrm{d}x=\mathrm{d}x_1\dotsm\mathrm{d}x_d $ and $ \delta(x-y)=\delta(x_1-y_1)\dotsm\delta(x_d-y_d) $. 
	Let us assume that the Bose condensation occurs and each $ \hat{\psi}_i$ has a finite expectation value $ \braket{\hat{\psi}_i} $. We then write the field operator as the sum of the expectation value and the deviation from it: $ \hat{\psi}_i=\braket{\hat{\psi}_i}+\delta\hat{\psi}_i,\ \delta\hat{\psi}_i:=\hat{\psi}_i-\braket{\hat{\psi}_i} $. By definition $ \braket{\delta\hat{\psi}_i}=0 $. Substituting them to the Hamiltonian, we ignore higher order terms with respect of $ \delta\hat{\psi}_i $ and keep only quadratic terms with the assumption that these deviations are small. Writing the expectation value by hatless notation $ \psi_i=\braket{\hat{\psi}_i} $, the approximate Hamiltonian becomes
	\begin{align}
		\hat{h}&\simeq h_0+\hat{h}_1+\hat{h}_2,\\
		h_0&= -\sum_{i=1}^N\psi_i^* \nabla^2 \psi_i +F(\{\psi^*_l,\psi_l\}), \\
		\hat{h}_1&=\sum_{i=1}^N \left[\left( -\nabla^2\psi_i+\frac{\partial F}{\partial\psi_i^*} \right)\delta\hat{\psi}_i^\dagger+\left( -\nabla^2\psi_i^*+\frac{\partial F}{\partial\psi_i} \right)\delta\hat{\psi}_i\right], \label{eq:appd1} \\
		\hat{h}_2&=-\sum_{i=1}^N\delta\hat{\psi}_i^\dagger\nabla^2\delta\hat{\psi}_i+\sum_{i,j}\bigg[\frac{1}{2}\frac{\partial^2F}{\partial\psi_i\partial\psi_j}\delta\hat{\psi}_i\delta\hat{\psi}_j+\frac{\partial^2F}{\partial\psi_i^*\partial\psi_j}\delta\hat{\psi}_i^\dagger\delta\hat{\psi}_j+\frac{1}{2}\frac{\partial^2F}{\partial\psi_i^*\partial\psi_j^*}\delta\hat{\psi}_i^\dagger\delta\hat{\psi}_j^\dagger\bigg], \label{eq:appd2}
	\end{align}
	where the arguments of the partial derivatives of $ F $ in Eqs. (\ref{eq:appd1}) and (\ref{eq:appd2}) are merely the classical fields $ \{ \psi_l,\psi_l^* \} $, so these are c-number functions. Let us impose the extremum condition for classical fields:
	\begin{align}
		\frac{\delta \braket{\hat{\mathcal{H}}}}{\delta \psi_k(x)^*}=\Braket{\frac{\delta \hat{\mathcal{H}}}{\delta\psi_k(x)^*}}=0.
	\end{align}
	We then obtain the equation
	\begin{align}
		-\nabla^2\psi_k+\frac{\partial F}{\partial \psi_k^*}+\sum_{i,j}\left[\frac{1}{2}\frac{\partial^3F}{\partial\psi_k^*\partial\psi_i\partial\psi_j}\braket{\delta\hat{\psi}_i\delta\hat{\psi}_j}+\frac{\partial^3F}{\partial\psi_k^*\partial\psi_i^*\partial\psi_j}\braket{\delta\hat{\psi}_i^\dagger\delta\hat{\psi}_j}+\frac{1}{2}\frac{\partial^3F}{\partial\psi_k^*\partial\psi_i^*\partial\psi_j^*}\braket{\delta\hat{\psi}_i^\dagger\delta\hat{\psi}_j^\dagger}\right]=0. \label{eq:appgp1}
	\end{align}
	Compared to the classical GP equation (\ref{eq:GP01}), it contains the contribution from the expectation value of quasiparticles. The corresponding equation for the single component case (the case of scalar BEC) is found in Ref.~\cite{PhysRevB.53.9341}. Note that if we want to formulate the Hartree-Fock-Bogoliubov theory, we need to keep a little more kinds of terms for $ \delta\hat{\psi}_i $'s \cite{PethickSmith,PhysRevB.53.9341}. In the simplest Bogoliubov approximation, all expectation values of quasiparticles are ignored, and Eq.~(\ref{eq:appgp1}) is simply reduced to the GP equation (\ref{eq:GP01}). If we use the solution of the GP equation,  $ \hat{h}_1 $ vanishes automatically and the remaining work is to diagonalize $ \hat{h}_2 $ by the Bogoliubov transformation. Let us consider the Bogoliubov transformation
	\begin{align}
		\delta\hat{\psi}_i(x)&=\sum_n u_i^{(n)}(x)\hat{\alpha}_n+v_i^{(n)}(x)^*\hat{\alpha}_n^\dagger \label{eq:appdbog} \\
		\leftrightarrow\quad  \hat{\alpha}_n&=\sum_{i=1}^N\int\mathrm{d}x\left( u_i^{(n)*}(x)\hat{\psi}_i(x)-v_i^{(n)*}(x)\hat{\psi}_i^\dagger(x) \right),
	\end{align}
	where the subscript $ n $ is a label of quasiparticle eigenstates (\textit{not} to be confused with the number of the component). The operators $ \hat{\alpha}_n $ also satisfy the bosonic commutation relations: $ [\hat{\alpha}_m,\hat{\alpha}_n]=0 $ and $ [\hat{\alpha}_m,\hat{\alpha}_n^\dagger]=\delta_{mn} $. In order for these bosonic commutation relations to hold, the coefficient functions $ u_i^{(n)}(x) $ and $ v_i^{(n)}(x) $ must satisfy
	\begin{align}
		\sum_{i=1}^N\int\mathrm{d}x\left( u_i^{(m)*}(x)u_i^{(n)}(x)-v_i^{(m)*}(x)v_i^{(n)}(x) \right)&=\delta_{mn}, \label{eq:appdbu01} \\
		\sum_{i=1}^N\int\mathrm{d}x\left( u_i^{(m)}(x)v_i^{(n)}(x)-v_i^{(m)}(x)u_i^{(n)}(x) \right)&=0, \label{eq:appdbu02} \\
		\sum_n\left( u_i^{(n)}(x)u_j^{(n)}(y)^*-v_i^{(n)}(x)^*v_j^{(n)}(y) \right)&=\delta_{ij}\delta(x-y), \\
		\sum_n\left( u_i^{(n)}(x)v_j^{(n)}(y)^*-v_i^{(n)}(x)^*u_j^{(n)}(y) \right)&=0.
	\end{align}
	These relations can be regarded as an infinite-dimensional version of B-unitary condition discussed in  Sec.~\ref{sec:LAofsigmaprod}. To diagonalize $ \hat{h}_2 $, we choose $ (u_i^{(n)}(x),v_i^{(n)}(x)) $ to satisfy
	\begin{align}
		-\nabla^2u_i^{(n)}+\sum_j \left(F_{ij}u_j^{(n)}+G_{ij}v_j^{(n)}\right)&=\epsilon^{(n)}u_i^{(n)},  \label{eq:appd11} \\
		\nabla^2v_i^{(n)}-\sum_j \left(F_{ij}^*v_j^{(n)}+G_{ij}^*u_j^{(n)}\right)&=\epsilon^{(n)}v_i^{(n)},  \label{eq:appd12}
	\end{align}
	where  $ F_{ij} $ and $ G_{ij} $ are defined by Eq.~(\ref{eq:FijGij}) and satisfy Eq. (\ref{eq:FijGij2}), and the eigenvalue $ \epsilon^{(n)} $ is assumed to be real. 
	Let us substitute Eq.~(\ref{eq:appdbog}) into Eq.~(\ref{eq:appd2}) after rewriting the kinetic energy term in Eq.~(\ref{eq:appd2}) as 
	\begin{align}
		\sum_{i=1}^N\delta\hat{\psi}_i^\dagger\nabla^2\delta\hat{\psi}_i \rightarrow \frac{1}{2}\sum_{i=1}^N\left(\delta\hat{\psi}_i^\dagger\nabla^2\delta\hat{\psi}_i+(\nabla^2\delta\hat{\psi}_i^\dagger)\delta\hat{\psi}_i  \right)
	\end{align}
	by integration by parts. We then obtain
	\begin{align}
		\hat{h}_2&=\sum_{i,m,n}\frac{\epsilon^{(n)}+\epsilon^{(m)}}{2}\left( u_i^{(m)*}u_i^{(n)}-v_i^{(m)*}v_i^{(n)} \right)\hat{\alpha}_m^\dagger\hat{\alpha}_n-\sum_{i,n}\epsilon^{(n)}|v_i^{(n)}|^2 \nonumber \\
		&\quad+\frac{1}{2}\sum_{i,m,n}\epsilon^{(n)}\left( u_i^{(n)*}v_i^{(m)*}-v_i^{(n)*}u_i^{(m)*} \right)\hat{\alpha}_m^\dagger\hat{\alpha}_n^\dagger+\frac{1}{2}\sum_{i,m,n}\epsilon^{(n)}\left( u_i^{(n)}v_i^{(m)}-v_i^{(n)}u_i^{(m)} \right)\hat{\alpha}_m\hat{\alpha}_n.
	\end{align}
	Integrating this expression and using Eqs.~(\ref{eq:appdbu01}) and (\ref{eq:appdbu02}), we obtain
	\begin{align}
		\int\hat{h}_2\mathrm{d}x=\sum_n\epsilon^{(n)}\hat{\alpha}_n^\dagger\hat{\alpha}_n-\sum_{i,n}\epsilon^{(n)}\int|v_i^{(n)}|^2\mathrm{d}x.
	\end{align}
	Thus $ \hat{h}_2 $ can be diagonalized by the Bogoliubov transformation (\ref{eq:appdbog}) with Eqs. (\ref{eq:appd11}) and (\ref{eq:appd12}), which are equivalent to the linearized equations for a classical field [Eqs.~(\ref{eq:Bogo01}) and (\ref{eq:Bogo02})]. Therefore both theories share the same fundamental equations and the results in the main part of this paper are also applicable to quantum field theory within the framework of the Bogoliubov approximation.
\section{Equivalence between symplectic group and Bogoliubov transformation group}\label{app:classicalhamilton}
	The classical Hamiltonian mechanics can be formulated in terms of generalized position and momentum variables $ q $ and $ p $. We can rewrite it by the complex variable  $ \psi=(q+\mathrm{i}p)/\sqrt{2} $, which is convenient for GP or Ginzburg-Landau type equations. Here we briefly summarize the relation between both representations, and show that symplectic matrices and B-unitary matrices are equivalent up to trivial linear transformation. As stated in Sec.~\ref{sec:LAofsigmaprod}, B-unitary matrix corresponds to the Bogoliubov transformation for bosonic field operators. So the symplectic group and the Bogoliubov transformation group are equivalent. \\
	\indent Let $ H(\{q,p\})=H(q_1,\dots,q_n,p_1,\dots,p_n) $ be the Hamiltonian with $ n $ degree of freedom. The Hamilton equation is given by
	\begin{align}
		\dot{p}_j=-\frac{\partial H}{\partial q_j},\quad \dot{q}_j=\frac{\partial H}{\partial p_j}. \label{eq:apphami1}
	\end{align}
	Substituting $ q_j=q_j+\delta q_j, p_j=p_j+\delta p_j $ to the above, and ignoring higher order terms with respect to $ (\delta q_j,\delta p_j) $, we obtain the equation for the linearized small oscillations in the neighbor of a certain solution of Eq. (\ref{eq:apphami1}):
	\begin{align}
		\frac{\mathrm{d} }{\mathrm{d} t}\begin{pmatrix} \delta q \\ \delta p \end{pmatrix}=L \begin{pmatrix} \delta q \\ \delta p \end{pmatrix},\quad L=\begin{pmatrix} A & B \\ -C & -A^T \end{pmatrix}, \\
		\delta q:=(\delta q_1,\dots,\delta q_n)^T,\quad \delta p:=(\delta p_1,\dots,\delta p_n)^T,\\
		 A_{ij}=\frac{\partial^2 H}{\partial p_i \partial q_j},\quad B_{ij}=\frac{\partial^2H }{\partial p_i\partial p_j},\quad C_{ij}=\frac{\partial^2H }{\partial q_i\partial q_j}.
	\end{align}
	The matrix  $ L $ is sometimes called \textit{a ``hamiltonian'' matrix} in the literature of dynamical systems. This naming is rather confusing for condensed matter physicists, because a ``hamiltonian'' matrix is not hermitian! In order to avoid a confusion with hermitian matrices, we always add a double quotation mark. If we consider the small oscillation around a stationary solution, the eigenvalue of $ L $ describes the stability of the stationary point. The classification for normal forms of ``hamiltonian'' matrices is given in Arnold's book (Appendix 6 of Ref. \cite{Arnold}). \\ 
	\indent A symplectic matrix $ R $ is defined as a linear transformation for $ (p_j,q_j) $ which preserves the Hamilton equation, and must satisfy the following condition:
	\begin{align}
		R^*=R,\quad R^TJR=J,\quad J:=\begin{pmatrix} & I_n \\ -I_n &  \end{pmatrix}.
	\end{align}
	If we define $ (q',p')^T:=R(q,p)^T $ and $ H'(\{q',p'\}):=H(\{q,p\}) $, the new variables also satisfy the Hamilton equation. Note that the exponential of the ``hamiltonian'' matrix $ \mathrm{e}^{Lt} $ is symplectic.  So, using the above  $ J $, the ``hamiltonian'' matrix satisfies
	\begin{align}
		L^*=L,\quad L^T J+J L=0.
	\end{align}
	\indent Let us define complex variables by
	\begin{align}
		\begin{cases}
		q_j=\frac{\psi_j+\psi_j^*}{\sqrt{2}} \\
		p_j=\frac{\psi_j-\psi_j^*}{\mathrm{i}\sqrt{2}}
		\end{cases}
		\quad \leftrightarrow \quad
		\begin{cases}
		\psi_j=\frac{q_j+\mathrm{i}p_j}{\sqrt{2}} \\
		\psi_j^*=\frac{q_j-\mathrm{i}p_j}{\sqrt{2}}, 
		\end{cases}
	\end{align}
	and define a new Hamiltonian by $ \tilde{H}(\{\psi^*,\psi\})=H(\{\frac{\psi+\psi^*}{\sqrt{2}},\frac{\psi-\psi^*}{\sqrt{2}\mathrm{i}}\}) $. Then, the Hamilton equation is given by
	\begin{align}
		\mathrm{i}\dot{\psi}_j=\frac{\partial \tilde{H}}{\partial \psi_j^*},\quad -\mathrm{i}\dot{\psi}^*_j=\frac{\partial \tilde{H}}{\partial \psi_j},
	\end{align}
	and the linearized equation is
	\begin{align}
		&\mathrm{i}\frac{\mathrm{d} }{\mathrm{d} t}\begin{pmatrix} \delta \psi \\ \delta \psi^* \end{pmatrix}=\tilde{L} \begin{pmatrix} \delta\psi \\ \delta \psi^* \end{pmatrix},\quad \tilde{L}=\begin{pmatrix} F & G \\ -G^* & -F^* \end{pmatrix}, \\
		&\delta \psi:=(\delta \psi_1,\dots,\delta \psi_n)^T,\quad \delta \psi^*:=(\delta \psi_1^*,\dots,\delta \psi^*_n)^T,\\
		&F_{ij}=\frac{\partial^2 \tilde{H}}{\partial \psi_i^* \partial \psi_j},\quad G_{ij}=\frac{\partial^2\tilde{H}}{\partial \psi_i\partial \psi_j}.
	\end{align}
	Here, the matrix $ \tilde{L} $ is B-hermitian. The linearized variables $ (\delta q, \delta p) $ and $ (\delta \psi, \delta\psi^*) $ are related as
	\begin{align}
		\begin{pmatrix} \delta \psi \\ \delta \psi^* \end{pmatrix}=U_0 \begin{pmatrix}\delta q \\ \delta p \end{pmatrix},\quad U_0:=\frac{1}{\sqrt{2}}\begin{pmatrix} I_n & \mathrm{i}I_n \\ I_n & -\mathrm{i}I_n \end{pmatrix}.
	\end{align}
	Therefore, the ``hamiltonian'' matrix $ L $ and the B-hermitian matrix $ \tilde{L} $ satisfy
	\begin{align}
		\mathrm{i}L=U_0^{-1}\tilde{L}U_0.
	\end{align}
	Because of the imaginary number $ \mathrm{i} $, a pure imaginary eigenvalue of $ L $ corresponds to a real eigenvalue of $ \tilde{L} $. The correspondence between symplectic matrix  $ R $ and the B-unitary matrix $ U $ is given by
	\begin{align}
		R=U_0^{-1}UU_0.
	\end{align}
	From $ R^*=R $ and $ R^T=-JR^{-1}J $, we obtain the B-unitary conditions $ U^*=\tau U \tau $ and $ U^\dagger=\sigma U^{-1}\sigma $, respectively.
\section{Proofs of Theorems and Propositions in Sec.~\ref{sec:LAofsigmaprod}}\label{app:proof}
	In this appendix we provide the complete proofs for theorems and propositions given in Sec.~\ref{sec:LAofsigmaprod}. 
	\begin{proof}[Proof of the fundamental properties (i)-(iii) in Subsec.~\ref{subsec:sigmabasis}](ii): Let $ W $ be a subset of $ V $ such that its elements are $ \sigma $-orthogonal to all vectors in $ V $. We can easily show that $ W $ becomes a vector space, and therefore its property does not depend on a choice of basis. (i),(iii): Let $ \boldsymbol{w}_1,\dots,\boldsymbol{w}_r $ be a basis of $ V $. We define a $ (2N)\times r $ matrix by $ P=(\boldsymbol{w}_1,\dots,\boldsymbol{w}_r) $. Let us consider the Gram matrix with respect to the $ \sigma $-inner product $ P^\dagger \sigma P $, which gives a list of $ \sigma $-inner products in the current basis. Since $ P^\dagger \sigma P $ is hermitian, there exists an invertible matrix $ Q $ such that  $ Q^\dagger P^\dagger \sigma P Q =\operatorname{diag}(1,\dots,1,-1,\dots,-1,0,\dots,0) $. If we define a new basis by $ \boldsymbol{w}'_i=\sum_j\boldsymbol{w}_jQ_{ji} $, then $ \{\boldsymbol{w}_1',\dots,\boldsymbol{w}_r'\} $ becomes a  $ \sigma $-orthonormal system. Furthermore, by Sylvester's law of inertia, the number of  $ 1,-1 $ and $ 0 $ in $ Q^\dagger P^\dagger \sigma P Q $ does not depend on the diagonalizing matrix $ Q $. 
	\end{proof}
	To prove the properties (iv) and (v), we first prove the following proposition:
	\begin{prpappC} \label{prp:sigmabasis3} Let  $ \{ \boldsymbol{x}_1,\dots,\boldsymbol{x}_p,\boldsymbol{y}_1,\dots,\boldsymbol{y}_q,\boldsymbol{z}_1,\dots,\boldsymbol{z}_t \}$ be a  $ \sigma $-orthonormal basis such that $ \boldsymbol{x}_i $, $ \boldsymbol{y}_i $, and $ \boldsymbol{z}_i $ have positive, negative, and zero norm, respectively. Let us write $ \boldsymbol{x}_i=(\boldsymbol{u}_i,\boldsymbol{v}_i)^T,\ \boldsymbol{y}_i=(\boldsymbol{u}'_i,\boldsymbol{v}'_i)^T, $ and $ \boldsymbol{z}_i=(\boldsymbol{u}_i'',\boldsymbol{v}_i'')^T $, where  $ \boldsymbol{u}_i,\boldsymbol{u}'_i,\boldsymbol{u}''_i, \boldsymbol{v}_i,\boldsymbol{v}_i',\boldsymbol{v}_i''\in\mathbb{C}^N $. The following (a)-(d) hold:
	\begin{enumerate}[(a)]
		\item $\boldsymbol{u}_1,\dots,\boldsymbol{u}_p$ are linearly independent.
		\item $\boldsymbol{v}'_1,\dots,\boldsymbol{v}'_q$ are linearly independent.
		\item $\boldsymbol{u}''_1,\dots,\boldsymbol{u}''_t$ are linearly independent.
		\item $\boldsymbol{v}''_1,\dots,\boldsymbol{v}''_t$ are linearly independent.
	\end{enumerate}
	\end{prpappC}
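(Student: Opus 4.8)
The plan is to settle all four claims with a single device: assume a linear relation among the indicated half-vectors, assemble the corresponding linear combination of the \emph{full} basis vectors, and evaluate its $\sigma$-norm in two ways. The point is that for any $\boldsymbol{w}=(\boldsymbol{u},\boldsymbol{v})^T$ one has $(\boldsymbol{w},\boldsymbol{w})_\sigma=\boldsymbol{u}^\dagger\boldsymbol{u}-\boldsymbol{v}^\dagger\boldsymbol{v}$, so a vanishing upper (resp.\ lower) half makes the $\sigma$-norm manifestly $\le 0$ (resp.\ $\ge 0$), while $\sigma$-orthonormality of the basis supplies the opposite estimate; the two are compatible only when every coefficient vanishes.

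For (a) I would suppose $\sum_{i=1}^p c_i\boldsymbol{u}_i=\boldsymbol{0}$ with $c_i\in\mathbb{C}$ and set $\boldsymbol{x}:=\sum_{i=1}^p c_i\boldsymbol{x}_i$. Its upper half is then $\boldsymbol{0}$, so with $\boldsymbol{w}:=\sum_i c_i\boldsymbol{v}_i$ one gets
\begin{align}
(\boldsymbol{x},\boldsymbol{x})_\sigma=-\boldsymbol{w}^\dagger\boldsymbol{w}\le 0,
\end{align}
whereas $(\boldsymbol{x}_i,\boldsymbol{x}_j)_\sigma=\delta_{ij}$ gives
\begin{align}
(\boldsymbol{x},\boldsymbol{x})_\sigma=\sum_{i,j}c_i^*c_j(\boldsymbol{x}_i,\boldsymbol{x}_j)_\sigma=\sum_{i=1}^p|c_i|^2\ge 0.
\end{align}
Hence $\sum_i|c_i|^2=0$ and all $c_i$ vanish. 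Claim (b) is the mirror image: from $\sum_{i=1}^q c_i\boldsymbol{v}'_i=\boldsymbol{0}$ I would form $\boldsymbol{y}:=\sum_i c_i\boldsymbol{y}_i$, whose lower half vanishes so that $(\boldsymbol{y},\boldsymbol{y})_\sigma\ge 0$, while $(\boldsymbol{y}_i,\boldsymbol{y}_j)_\sigma=-\delta_{ij}$ forces $(\boldsymbol{y},\boldsymbol{y})_\sigma=-\sum_i|c_i|^2\le 0$, again giving $c_i=0$.

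The zero-norm cases (c) and (d) need one extra ingredient: the $\boldsymbol{z}_i$ are themselves linearly independent, since they belong to a $\sigma$-orthonormal basis. For (c) I would assume $\sum_{i=1}^t c_i\boldsymbol{u}''_i=\boldsymbol{0}$ and set $\boldsymbol{z}:=\sum_i c_i\boldsymbol{z}_i$, whose upper half then vanishes. Since the $\boldsymbol{z}_i$ are mutually $\sigma$-orthogonal and of zero norm, $(\boldsymbol{z},\boldsymbol{z})_\sigma=0$; but the vanishing upper half also gives $(\boldsymbol{z},\boldsymbol{z})_\sigma=-\big(\sum_i c_i\boldsymbol{v}''_i\big)^\dagger\big(\sum_i c_i\boldsymbol{v}''_i\big)$, so $\sum_i c_i\boldsymbol{v}''_i=\boldsymbol{0}$ as well. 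Thus $\boldsymbol{z}=\boldsymbol{0}$, and linear independence of the $\boldsymbol{z}_i$ yields $c_i=0$. Claim (d) follows identically with the upper and lower halves interchanged.

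The argument is short and presents no real obstacle; the only subtlety, and the step worth flagging, is that the zero-norm cases cannot be closed by the sign of the $\sigma$-norm alone — that norm vanishes identically on the span of the $\boldsymbol{z}_i$ — so one must additionally invoke the linear independence of the zero-norm vectors themselves.
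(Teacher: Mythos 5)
Your proof is correct, but it follows a genuinely different route from the paper's. The paper works entirely at the component level: it assumes one half-vector is a combination of the rest (e.g.\ $\boldsymbol{u}_p=\sum_{i<p}c_i\boldsymbol{u}_i$), translates $\sigma$-orthonormality into the relations $\boldsymbol{u}_i^\dagger\boldsymbol{u}_j=\boldsymbol{v}_i^\dagger\boldsymbol{v}_j+\delta_{ij}$, and then plays these against the Cauchy--Schwarz inequality --- the strict inequality for (a), (b), and the equality (proportionality) case for (c), (d) --- to reach a contradiction. You instead run a global argument: assemble the full vector $\sum_i c_i\boldsymbol{x}_i$ (resp.\ $\boldsymbol{y}_i$, $\boldsymbol{z}_i$) and evaluate its $\sigma$-norm in two ways, once by sesquilinear expansion in the $\sigma$-orthonormal basis, giving $\pm\sum_i|c_i|^2$ (or $0$ in the zero-norm case), and once from the vanishing upper or lower half, which forces the opposite sign. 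This buys you a shorter, unified treatment of all four claims with no Cauchy--Schwarz, no case split at $p=1$ or $t=1$, and no relabeling, and it makes the structural reason transparent: the span of the $\boldsymbol{x}_i$ is $\sigma$-positive-definite, so it cannot contain a nonzero vector whose upper half vanishes. Your flagged subtlety is also accurate and not a defect: for (c), (d) the sign argument only yields that the full combination $\sum_i c_i\boldsymbol{z}_i$ is the zero vector, so one must finish by the linear independence of the $\boldsymbol{z}_i$ as basis vectors --- and indeed the paper's own proof ends the same way, deriving $\boldsymbol{z}_t=\sum_i c_i\boldsymbol{z}_i$ and citing that same independence as the contradiction.
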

	\begin{proof}[Proof of Proposition C.\ref{prp:sigmabasis3}]
	(a): The case of $ p=1 $ is trivial. Let $ p\ge 2 $ and assume the relation $ \boldsymbol{u}_p=\sum_{i=1}^{p-1}c_i\boldsymbol{u}_i $, where at least one $ c_i $ satisfies $ c_i\ne 0 $. Henceforth we abbreviate $ \sum_{i=1}^{p-1} $ as $ \sum $. By the Cauchy-Schwartz inequality, 
	\begin{align}
		\left|\boldsymbol{v}_p^\dagger\left(\sum c_i\boldsymbol{v}_i\right)\right|^2\le \left(\sum c_i\boldsymbol{v}_i\right)^\dagger\left(\sum c_i\boldsymbol{v}_i\right)\boldsymbol{v}_p^\dagger \boldsymbol{v}_p. \label{eq:basisproof00}
	\end{align}
	On the other hand, using the $ \sigma $-orthogonality $ \boldsymbol{u}_i^\dagger\boldsymbol{u}_j=\boldsymbol{v}_i^\dagger\boldsymbol{v}_j+\delta_{ij} $ and the first assumption $ \boldsymbol{u}_p=\sum c_i\boldsymbol{u}_i $, we obtain
	\begin{align}
		\left|\boldsymbol{v}_p^\dagger\left(\sum c_i\boldsymbol{v}_i\right)\right|^2=\left|\boldsymbol{u}_p^\dagger\left(\sum c_i\boldsymbol{u}_i\right)\right|^2=\left(\sum c_i\boldsymbol{u}_i\right)^\dagger\left(\sum c_i\boldsymbol{u}_i\right) \boldsymbol{u}_p^\dagger \boldsymbol{u}_p=\left[ \sum |c_i|^2+\left(\sum c_i\boldsymbol{v}_i\right)^\dagger\left(\sum c_i\boldsymbol{v}_i\right) \right]\left( 1+\boldsymbol{v}_p^\dagger\boldsymbol{v}_p \right). \label{eq:basisproof01}
	\end{align}
	Combining Eqs.~(\ref{eq:basisproof00}) and (\ref{eq:basisproof01}), we get $ (\sum|c_i|^2)(1+\boldsymbol{v}_p^\dagger\boldsymbol{v}_p)+(\sum c_i\boldsymbol{v}_i)^\dagger(\sum c_i\boldsymbol{v}_i)\le0 $, a contradiction. (b): The same as (a). (c): It is trivial if $ t=1 $. Let $ t\ge 2 $ and assume the relation  $ \boldsymbol{u}_t''=\sum_{i=1}^{t-1}c_i\boldsymbol{u}_i'' $, where at least one $ c_i $ satisfies  $ c_i\ne 0 $. Henceforth we abbreviate  $ \sum_{i=1}^{t-1} $ as $ \sum $. By a similar calculation to Eq.~(\ref{eq:basisproof01}), we obtain $ \left| (\boldsymbol{v}_t'')^\dagger\sum c_i\boldsymbol{v}_i'' \right|^2=\left(\sum c_i\boldsymbol{v}_i''\right)^\dagger\left(\sum c_i\boldsymbol{v}_i''\right)(\boldsymbol{v}_t'')^\dagger\boldsymbol{v}_t'' $, which is the case of the equality in the Cauchy-Schwartz inequality. Therefore, a relation $ \boldsymbol{v}_t''=\alpha\sum c_i\boldsymbol{v}_i'' $ with  $ \alpha \in\mathbb{C} $ exists. On the other hand, from the $ \sigma $-orthogonality,  $ (\boldsymbol{u}_t'')^\dagger\boldsymbol{u}_i''-(\boldsymbol{v}_t'')^\dagger\boldsymbol{v}_i''=0 $ for $ i=1,\dots,t-1 $ holds. Multiplying this relation by $ c_i $ and taking a sum with respect to $ i $, and using  $ \boldsymbol{u}_t''=\sum c_i\boldsymbol{u}_i'' $ and $ \boldsymbol{v}_t''=\alpha\sum c_i\boldsymbol{v}_i'' $, we obtain $ (1-\alpha)(\boldsymbol{v}_t'')^\dagger\boldsymbol{v}_t''=0 $. Since $ (\boldsymbol{v}_t'')^\dagger\boldsymbol{v}_t''=(\boldsymbol{u}_t'')^\dagger\boldsymbol{u}_t''\ne0 $, we get $ \alpha=1 $. But it implies $ \boldsymbol{z}_t=\sum c_i \boldsymbol{z}_i $, which contradicts the linear independence of $ \boldsymbol{z}_1,\dots,\boldsymbol{z}_t $. (d): The same as (c). 
	\end{proof}
	Then, the properties (iv) and (v) are proved as follows.
	\begin{proof}[Proof of the fundamental properties (iv) and (v) in Subsec.~\ref{subsec:sigmabasis}]
	(iv): It is obvious by Proposition C.\ref{prp:sigmabasis3}. (v):  Let  $ \boldsymbol{z} $ be a zero-norm vector in $ \mathbb{C}^{2N} $ and write it as $ \boldsymbol{z}=(\boldsymbol{u},\boldsymbol{v})^T $ with $  \boldsymbol{u},\boldsymbol{v}\in \mathbb{C}^N $. Since $ \boldsymbol{z}\ne\boldsymbol{0} $ and $ (\boldsymbol{z},\boldsymbol{z})_\sigma=0 $, both $ \boldsymbol{u} $ and $ \boldsymbol{v} $ are nonzero.  $ \mathbb{C}^{2N} $ has an element $ (\boldsymbol{u},\boldsymbol{0})^T $, and the $ \sigma $-inner product between this element and $ \boldsymbol{z} $ is nonzero. Thus, there cannot exist a zero-norm vector $ \boldsymbol{z} $ which is $ \sigma $-orthogonal to all vectors in $ \mathbb{C}^{2N} $. Therefore $ t=0 $ and $ p+q=2N $ follow. By (iv), however, only $ p=q=N $ is possible.
	\end{proof}
	\begin{proof}[Proof of Proposition\ref{prpsigmabasis2}] Let $ p<N $, and let us prove that we can make a positive-norm vector which is $ \sigma $-orthogonal to  $ \boldsymbol{x}_1,\dots,\boldsymbol{x}_p,\boldsymbol{y}_1,\dots,\boldsymbol{y}_q $. Let us write  $ \boldsymbol{x}_i=(\boldsymbol{u}_i,\boldsymbol{v}_i)^T,\ \boldsymbol{y}_i=(\tilde{\boldsymbol{u}}_i,\tilde{\boldsymbol{v}}_i)^T $ with $  \boldsymbol{u}_i,\boldsymbol{v}_i,\tilde{\boldsymbol{u}}_i,\tilde{\boldsymbol{v}}_i \in\mathbb{C}^N $. By Proposition C.\ref{prp:sigmabasis3}(a),  $ \boldsymbol{u}_1,\dots,\boldsymbol{u}_p $ are linearly independent. Since $ p<N $, we can take $ \boldsymbol{u}_{p+1}\in \mathbb{C}^N $ such that $ \boldsymbol{u}_{p+1} $ is orthogonal to all other $ \boldsymbol{u}_i $'s. Using it, we define $ \boldsymbol{w}=(\boldsymbol{u}_{p+1},\boldsymbol{0})^T\in\mathbb{C}^{2N} $, which obviously satisfies  $ (\boldsymbol{w},\boldsymbol{w})_\sigma>0 $ and $ (\boldsymbol{x}_i,\boldsymbol{w})_\sigma=0 $. Furthermore, we define $ \boldsymbol{w}'=\boldsymbol{w}+\sum_j(\boldsymbol{y}_j,\boldsymbol{w})_\sigma\boldsymbol{y}_j $. Then, $ \boldsymbol{w}' $ is $ \sigma $-orthogonal to $ \boldsymbol{x}_1,\dots,\boldsymbol{x}_p,\boldsymbol{y}_1,\dots,\boldsymbol{y}_q $ and has positive norm $ (\boldsymbol{w}',\boldsymbol{w}')_\sigma=(\boldsymbol{w},\boldsymbol{w})_\sigma+\sum_i |\tilde{\boldsymbol{u}}_i^\dagger \boldsymbol{u}_{p+1}|^2 >0$. By the same procedure, we can also make a new negative-norm vector if $ q<N $. Then, we can make a  $ \sigma $-orthonormal basis of $ \mathbb{C}^{2N} $ by repeating this procedure. For the B-orthonormal case, when the above-mentioned $ \boldsymbol{w}' $ is added to the new basis, $ \tau(\boldsymbol{w}')^* $ can be also added.
	\end{proof}
	\begin{proof}[Proof of Theorem~\ref{prplambdaKdecomp}]
	Assume that we find a positive-norm right eigenvector $ \boldsymbol{w}_1 $ with an eigenvalue $ \lambda_1 $. From the properties (viii) and (x) in Subsec.~\ref{subsec:bhbu},  $ \lambda_1 $ is real and $ \tau\boldsymbol{w}_1^* $ is a negative-norm right eigenvector with an eigenvalue $ -\lambda_1 $. By Proposition~\ref{prpsigmabasis2}, there exists a B-unitary matrix  $ U_1 $ such that the first and $ (N+1) $-th column are given by $ \boldsymbol{w}_1 $ and $ \tau\boldsymbol{w}_1^* $, respectively. We then obtain 
	\begin{align}
		U^{-1}_1HU_1=\begin{pmatrix} \lambda_1 &&& \\ & H_{11}' && H_{12}' \\ && -\lambda_1 & \\  & H_{21}' && H_{22}'  \end{pmatrix},
	\end{align}
	where  $ H'=\left( \begin{smallmatrix} H_{11}' & H_{12}' \\ H_{21}' & H_{22}' \end{smallmatrix} \right) $ is a B-hermitian matrix of size $ 2(N-1) $. By iteration, we can reduce the size of $ H $ as long as we find a new finite-norm eigenvector.\\ 
	\indent If there exists a degeneracy in some real eigenvalue $ \lambda $, we first take a  $ \sigma $-orthonormal basis for its eigenspace. (It is possible by the property (i) stated in Subsec.~\ref{subsec:sigmabasis}.) Then, as far as we find positive- and negative-norm vectors in the basis, we repeat the above-mentioned process. The rest zero-norm eigenvectors become a constituent of $ K $. The uniqueness follows from the properties (ii) and (iii)  of  $ \sigma $-orthonormal basis in Subsec.~\ref{subsec:sigmabasis}; the numbers of positive- and negative-norm vectors, $ p $ and $ q $, are unique and the subspace spanned by zero-norm eigenvectors does not depend on a choice of basis. 
	\end{proof}
	From this point forward, we give a few theorems necessary to prove Theorem~\ref{prp:colpa001}. The key lemma is given as follows. 
	\begin{lemappC}[Colpa \cite{Colpa1986}]\label{lem:colpa}Let $ K $ be a B-hermitian matrix such that all eigenvalues are zero and $ \sigma K $ is positive-semidefinite. Then, $ (\sigma K)^{1/2}\sigma (\sigma K)^{1/2}=0 $.
	\end{lemappC}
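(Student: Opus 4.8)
The plan is to reduce the claim to the elementary fact that a Hermitian matrix all of whose eigenvalues vanish must be the zero matrix. First I would set $A:=\sigma K$, which by hypothesis is a positive-semidefinite Hermitian matrix (recall from property (i) of Subsec.~\ref{subsec:bhbu} that $\sigma K$ is Hermitian for any B-hermitian $K$). Let $A^{1/2}$ denote the unique positive-semidefinite square root of $A$; it is Hermitian, so that the target matrix
$M:=A^{1/2}\sigma A^{1/2}=(\sigma K)^{1/2}\sigma(\sigma K)^{1/2}$
is Hermitian as well, since $M^\dagger=A^{1/2}\sigma^\dagger A^{1/2}=M$. Thus it suffices to prove that every eigenvalue of $M$ is zero.

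To control the spectrum of $M$ I would invoke the standard linear-algebra fact that for square matrices $X,Y$ of equal size the products $XY$ and $YX$ have the same nonzero eigenvalues with multiplicities. Taking $X=A^{1/2}\sigma$ and $Y=A^{1/2}$ gives $XY=A^{1/2}\sigma A^{1/2}=M$ and $YX=A^{1/2}A^{1/2}\sigma=A\sigma$, so $M$ and $A\sigma$ share exactly the same nonzero spectrum.

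Next I would identify the spectrum of $A\sigma$. Since $K=\sigma A$, one has $A\sigma=\sigma(\sigma A)\sigma=\sigma K\sigma$, which is similar to $K$ via conjugation by $\sigma$ (using $\sigma^{-1}=\sigma$). Hence $A\sigma$ has precisely the eigenvalues of $K$, all of which vanish by hypothesis. Combining this with the previous step, $M$ possesses no nonzero eigenvalue, i.e.\ all eigenvalues of $M$ equal zero. Being Hermitian, $M$ is diagonalizable with real eigenvalues, so $M=0$, which is the assertion of the lemma.

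The only points requiring a little care are the spectral coincidence of $XY$ and $YX$ and the similarity $A\sigma=\sigma K\sigma$, but both become routine once the substitution $A=\sigma K$ is made, so I do not anticipate a genuine obstacle. The conceptual heart of the argument — and the one nontrivial observation — is that $M$ is engineered to be Hermitian while its nonzero spectrum is forced to coincide with that of the nilpotent matrix $K$; this tension is exactly what collapses $M$ to zero.
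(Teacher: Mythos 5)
Your proof is correct and follows essentially the same route as the paper: both rest on the cyclic-permutation fact that $XY$ and $YX$ share their nonzero spectrum (the paper phrases it as $\det(\lambda I-AB)=\det(\lambda I-BA)$, comparing $M=(\sigma K)^{1/2}\sigma(\sigma K)^{1/2}$ directly with $K=\sigma(\sigma K)^{1/2}(\sigma K)^{1/2}$), and then conclude $M=0$ from its Hermiticity. Your intermediate detour through $A\sigma=\sigma K\sigma$ and similarity to $K$ is only a cosmetic rearrangement of the same argument.
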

	This short lemma, appearing in the proof of Lemma B.2 of Ref.~\cite{Colpa1986}, seems to be the most important step to accomplish the construction of the whole theory.
	\begin{proof}Since $ \sigma K $ is a positive-semidefinite hermitian matrix, we can define $ (\sigma K)^{1/2} $ unambiguously. Using the general formula $ \det(\lambda I-AB)=\det(\lambda I-BA) $, we obtain $ \det(\lambda I-K)=\det(\lambda I-\sigma (\sigma K)^{1/2}(\sigma K)^{1/2})=\det(\lambda I-(\sigma K)^{1/2}\sigma (\sigma K)^{1/2}) $. By assumption, $ K $ has only zero eigenvalues, so $ (\sigma K)^{1/2}\sigma (\sigma K)^{1/2} $ also has only zero eigenvalues. However, since $ (\sigma K)^{1/2}\sigma (\sigma K)^{1/2} $ is hermitian, it must be a zero matrix.  
	\end{proof}
	\begin{thmappC}\label{prp:K2} Let $ K $ be a singular B-hermitian matrix of size $ 2n\times 2n $ and satisfy the same assumption with Lemma C.\ref{lem:colpa}. The following (i)-(iii) hold.
	\begin{enumerate}[(i)]
		\item  $ K^2=0 $. 
		\item  $ Let $  $ \boldsymbol{w} $ be an eigenvector of $ \sigma K $ with a positive eigenvalue $ 2\kappa $. Then, $ \sigma\boldsymbol{w} $ is an eigenvector of both $ \sigma K $ and $ K $ with zero eigenvalue.
		\item There exists a B-unitary matrix $ V $ such that
	\begin{align}
		V^{-1}KV=\begin{pmatrix} \tilde{K} & \tilde{K} \\ -\tilde{K} & -\tilde{K} \end{pmatrix},
	\end{align}
	where $ \tilde{K}=\operatorname{diag}(\kappa_1,\dots,\kappa_n) $, and  $ 2\kappa_i(>0) $ is an eigenvalue of $ \sigma K $. Here, $ V $ can be written as $ V=V_0\oplus V_0^* $ with an $ n\times n $ unitary matrix $ V_0 $. Thus, $ V $ is in fact both unitary and B-unitary.
	\end{enumerate}
	\end{thmappC}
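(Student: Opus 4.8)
The plan is to run everything through the single hermitian positive-semidefinite matrix $A:=\sigma K$ (property (i) of Subsec.~\ref{subsec:bhbu}), writing $K=\sigma A$, and to lean on Lemma~C.\ref{lem:colpa}. First, for (i) I would observe that Lemma~C.\ref{lem:colpa} states $A^{1/2}\sigma A^{1/2}=0$; sandwiching this with $A^{1/2}$ on both sides gives $A\sigma A=0$, and therefore
\begin{align}
K^2=\sigma A\sigma A=\sigma A^{1/2}\bigl(A^{1/2}\sigma A^{1/2}\bigr)A^{1/2}=0. \nonumber
\end{align}
For (ii), if $A\boldsymbol{w}=2\kappa\boldsymbol{w}$ with $\kappa>0$, then applying $A\sigma A=0$ to $\boldsymbol{w}$ yields $0=A\sigma A\boldsymbol{w}=2\kappa A\sigma\boldsymbol{w}$, so $\sigma K\sigma\boldsymbol{w}=A\sigma\boldsymbol{w}=\boldsymbol{0}$, whence also $K\sigma\boldsymbol{w}=\sigma(A\sigma\boldsymbol{w})=\boldsymbol{0}$; invertibility of $\sigma$ ensures $\sigma\boldsymbol{w}\neq\boldsymbol{0}$. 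Thus (i) and (ii) are essentially immediate consequences of the Lemma.

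The substance is in (iii), and I would organize it in four steps. The first is a counting step: since $K^2=0$ we have $\operatorname{im}K\subseteq\ker K$, giving $\operatorname{rank}K\le n$; on the other hand $\ker K$ is exactly the set of eigenvectors of the singular matrix $K$, hence totally $\sigma$-isotropic, so property (iv) of Subsec.~\ref{subsec:sigmabasis} forces $\dim\ker K\le n$, i.e.\ $\operatorname{rank}K\ge n$. Therefore $\operatorname{rank}A=\operatorname{rank}K=n$, so $A$ has exactly $n$ positive eigenvalues $2\kappa_1,\dots,2\kappa_n$ and an $n$-dimensional kernel, which is where $\tilde{K}=\operatorname{diag}(\kappa_1,\dots,\kappa_n)$ will come from.

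The second step is the genuinely nontrivial ingredient, and the one I expect to be the main obstacle: arranging the eigenvectors of $A$ to be simultaneously $\sigma$-orthonormal (with the correct normalization) \emph{and} compatible with the special shape $V=V_0\oplus V_0^*$. The key is the antiunitary involution $C:\boldsymbol{v}\mapsto\tau\boldsymbol{v}^*$. Using B-hermiticity $K=-\tau K^*\tau$ together with $\sigma\tau=-\tau\sigma$ I would verify $A=\tau A^*\tau$, which makes $C$ commute with $A$ and satisfy $C^2=I$. Hence each eigenspace of $A$ is $C$-invariant and admits a $\tau$-real orthonormal eigenbasis; I pick $\boldsymbol{z}_1,\dots,\boldsymbol{z}_n$ with $A\boldsymbol{z}_i=2\kappa_i\boldsymbol{z}_i$, $\tau\boldsymbol{z}_i^*=\boldsymbol{z}_i$, and $\boldsymbol{z}_i^\dagger\boldsymbol{z}_j=2\delta_{ij}$, and set $\boldsymbol{y}_i:=\sigma\boldsymbol{z}_i$. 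By (ii) these $\boldsymbol{y}_i$ lie in $\ker K$, and a line of algebra using $\sigma\tau=-\tau\sigma$ gives $\tau\boldsymbol{y}_i^*=-\boldsymbol{y}_i$.

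The final two steps are bookkeeping. Since $\ker K$ is isotropic, $(\boldsymbol{y}_i,\boldsymbol{y}_j)_\sigma=0$, and substituting $\boldsymbol{y}_i=\sigma\boldsymbol{z}_i$ shows $(\boldsymbol{z}_i,\boldsymbol{z}_j)_\sigma=(\boldsymbol{y}_i,\boldsymbol{y}_j)_\sigma=0$ while $(\boldsymbol{y}_i,\boldsymbol{z}_j)_\sigma=\boldsymbol{z}_i^\dagger\boldsymbol{z}_j=2\delta_{ij}$. Consequently the vectors $\boldsymbol{c}_i:=\tfrac{\boldsymbol{y}_i+\boldsymbol{z}_i}{2}$ have unit positive norm, $\tau\boldsymbol{c}_i^*=\tfrac{-\boldsymbol{y}_i+\boldsymbol{z}_i}{2}$ have unit negative norm (the $\tau$-reality of Step two is exactly what makes $\tau\boldsymbol{c}_i^*$ the negative-norm partner), and the whole family is B-orthonormal, so $V:=(\boldsymbol{c}_1,\dots,\boldsymbol{c}_n,\tau\boldsymbol{c}_1^*,\dots,\tau\boldsymbol{c}_n^*)$ is B-unitary by property (xi). Because $\boldsymbol{y}_i+\boldsymbol{z}_i$ has vanishing lower block (it equals $(2\boldsymbol{a}_i,\boldsymbol{0})^T$ with $\boldsymbol{a}_i$ the upper half of $\boldsymbol{z}_i$), each $\boldsymbol{c}_i$ is supported on the upper block, forcing $V=V_0\oplus V_0^*$ with $V_0=(\boldsymbol{a}_1,\dots,\boldsymbol{a}_n)$ unitary. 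Finally $K\boldsymbol{c}_i=K\,\tau\boldsymbol{c}_i^*=\kappa_i\boldsymbol{y}_i=\kappa_i(\boldsymbol{c}_i-\tau\boldsymbol{c}_i^*)$, which read column by column is precisely $V^{-1}KV=\left(\begin{smallmatrix}\tilde{K}&\tilde{K}\\-\tilde{K}&-\tilde{K}\end{smallmatrix}\right)$, completing the proof.
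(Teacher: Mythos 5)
Your proofs of (i) and (ii) coincide with the paper's: it too obtains $\sigma K^2=0$ by sandwiching Lemma~C.\ref{lem:colpa} with $(\sigma K)^{1/2}$, and gets (ii) by applying $K\sigma$ to the eigenvalue equation and invoking (i). For (iii) your argument is correct and shares the paper's skeleton --- exploit the antiunitary symmetry $\boldsymbol{v}\mapsto\tau\boldsymbol{v}^*$ of $\sigma K$ (coming from $K=-\tau K^*\tau$) to choose $\tau$-real orthonormal eigenvectors for the positive eigenvalues, use (ii) to map them by $\sigma$ into $\ker K$, and assemble exactly the same $V=V_0\oplus V_0^*$ --- but the two proofs secure the crucial count of positive eigenvalues differently. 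The paper allows a possible surplus of $l'$ extra zero modes (chosen in the form $\left(\begin{smallmatrix}\boldsymbol{u}_j \\ -\boldsymbol{u}_j^*\end{smallmatrix}\right)$), stacks all $2n$ vectors into a matrix $P$, computes the Gram matrix $P^\dagger\sigma P=\left(\begin{smallmatrix}\tau_l & O\\ O & O\end{smallmatrix}\right)$, and concludes $l'=0$, $l=n$ because invertibility of $P$ forces invertibility of $P^\dagger\sigma P$. You instead settle the count before constructing anything: $K^2=0$ gives $\operatorname{im}K\subseteq\ker K$, hence $\operatorname{rank}K\le n$, while the total $\sigma$-isotropy of $\ker K$ (the defining property of a singular B-hermitian matrix together with the remark after its Definition) bounds $\dim\ker K\le n$ by property (iv) of Subsec.~\ref{subsec:sigmabasis} (with $N=n$ here), so $\operatorname{rank}(\sigma K)=n$ exactly. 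Your route buys a cleaner division of labor --- the dimension count is a self-contained two-line rank argument, and the subsequent construction never has to introduce, choose a sign convention for, or eliminate surplus kernel vectors --- whereas the paper's Gram-matrix computation is more pedestrian but stays within the single tool ($P^\dagger\sigma P$) that it reuses in \ref{app:proof} and \ref{app:basischoice}. One small point: where you cite property (xi) to conclude that $V$ is B-unitary, what you actually need is its converse --- a B-orthonormal family arranged as $(\boldsymbol{c}_1,\dots,\boldsymbol{c}_n,\tau\boldsymbol{c}_1^*,\dots,\tau\boldsymbol{c}_n^*)$ satisfies $V^\dagger\sigma V=\sigma$ and $\tau V^*\tau=V$, hence is B-unitary; this is immediate, and is also what the paper's own construction relies on, but it should be stated.
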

	\begin{proof}(i): Multiplying the relation $ (\sigma K)^{1/2}\sigma (\sigma K)^{1/2}=0 $ by $ (\sigma K)^{1/2} $ from left and right, we obtain $ \sigma K^2=0 $. (ii): Multiplying the equation $ \sigma K\boldsymbol{w}=2\kappa \boldsymbol{w} $ by $ K\sigma $ from left and using (i), we obtain $ 2\kappa K\sigma\boldsymbol{w}=\boldsymbol{0} $, and $ \kappa\ne0 $ by assumption. (iii): Let us write positive eigenvalues of $ \sigma K $ as $ 2\kappa_1,\dots,2\kappa_l \ (l\le n) $ with distinguishing multiple roots, and let us write corresponding eigenvectors as $ \boldsymbol{w}_1,\dots,\boldsymbol{w}_l $. We can easily show that if $ \boldsymbol{w}_i $ is an eigenvector with an eigenvalue $ 2\kappa_i $,  $ \tau\boldsymbol{w}_i^* $ is also an eigenvector with the same eigenvalue. Using this symmetry, we can always choose the eigenvector to satisfy $ \boldsymbol{w}_i=\tau\boldsymbol{w}_i^* $ or  $ \boldsymbol{w}_i=-\tau\boldsymbol{w}_i^* $. So, we take each $ \boldsymbol{w}_i $ to satisfy $ \boldsymbol{w}_i=\tau\boldsymbol{w}_i^* $, which can be written as $ \boldsymbol{w}_i=\left( \begin{smallmatrix}\boldsymbol{u}_i \\ \boldsymbol{u}_i^*\end{smallmatrix} \right) $ with $ \boldsymbol{u}_i\in \mathbb{C}^n $. By (ii),  $ \sigma\boldsymbol{w}_i=\left( \begin{smallmatrix}\boldsymbol{u}_i \\ -\boldsymbol{u}_i^*\end{smallmatrix} \right) $ is an eigenvector of $ \sigma K $ and $ K $ with zero eigenvalue. We have now obtained $ 2l $ eigenvectors for $ \sigma K $. Since $ \sigma K $ is positive-semidefinite hermitian, and all positive eigenvalues are already exhausted, the rest eigenvectors have zero eigenvalue, and therefore, they are also eigenvectors of $ K $. Let us write them as $ \left( \begin{smallmatrix}\boldsymbol{u}_{l+1} \\ -\boldsymbol{u}_{l+1}^*\end{smallmatrix} \right),\dots,\left( \begin{smallmatrix}\boldsymbol{u}_{l+l'} \\ -\boldsymbol{u}_{l+l'}^*\end{smallmatrix} \right) $, where $ 2l+l'=2n $. If all eigenvectors shown so far are normalized with respect to hermitian inner product, the unitary matrix which diagonalizes $ \sigma K $ is given by
	\begin{align}
		P=\begin{pmatrix}\boldsymbol{u}_1  & \cdots & \boldsymbol{u}_l & \boldsymbol{u}_{1} &  \cdots & \boldsymbol{u}_{l+l'} \\  \boldsymbol{u}_1^*  &  \cdots  & \boldsymbol{u}_l^* & -\boldsymbol{u}_{1}^* & \cdots & -\boldsymbol{u}_{l+l'}^* \end{pmatrix}.
	\end{align}
	Since $ P $ is invertible, $P^\dagger \sigma P$ is also invertible. From the assumption that $ K $ is singular B-hermitian, its all eigenvectors $ \left( \begin{smallmatrix} \boldsymbol{u}_i \\ \boldsymbol{u}_i^* \end{smallmatrix} \right) \ (i=1,\dots,l+l') $ have zero norm and $ \sigma $-orthogonal to each other. Therefore, we obtain
	\begin{align}
		P^\dagger \sigma P=\begin{pmatrix} \tau_l & O_{2l \times l'} \\ O_{l'\times 2l} & O_{l'\times l'} \end{pmatrix},\ 
	\end{align} 
	but since $ P $ is invertible, $ l'=0 $ and $ l=n $. Let us rescale $ \boldsymbol{u}_i\rightarrow \sqrt{2}\boldsymbol{u}_i $, then $ \boldsymbol{u}_i^\dagger\boldsymbol{u}_i=1 $. Let us define
	\begin{align}
		V=\begin{pmatrix} \boldsymbol{u}_1 & \cdots & \boldsymbol{u}_n & \boldsymbol{0} & \cdots & \boldsymbol{0} \\ \boldsymbol{0} & \cdots & \boldsymbol{0} & \boldsymbol{u}_1^* & \cdots & \boldsymbol{u}_n^* \end{pmatrix},
	\end{align}
	which is both unitary and B-unitary. Using the relations $ K\left( \begin{smallmatrix}\boldsymbol{u}_i \\ \boldsymbol{u}_i^* \end{smallmatrix} \right)=2\kappa_i\left( \begin{smallmatrix}\boldsymbol{u}_i \\ -\boldsymbol{u}_i^* \end{smallmatrix} \right) $ and $ K\left( \begin{smallmatrix}\boldsymbol{u}_i \\ -\boldsymbol{u}_i^* \end{smallmatrix} \right)=\boldsymbol{0} $, we obtain the theorem.
	\end{proof}
	Theorem~\ref{prp:colpa001} is finally obtained as a corollary of Theorem~\ref{prplambdaKdecomp} and Theorem C.\ref{prp:K2}:
	\begin{proof}[Proof of Theorem~\ref{prp:colpa001}] One can soon verify that if $ \sigma H $ is positive-semidefinite,  $ \sigma K $ of the singular part in Theorem~\ref{prplambdaKdecomp} is also positive-semidefinite. So, one can apply Theorem C.\ref{prp:K2}. Let us define $ \tilde{V}=I_r\oplus V_0\oplus I_r\oplus V_0^* $, where $ V_0 $ is an $ (N-r)\times(N-r) $ unitary matrix such that $ V_0\oplus V_0^* $ gives the standard form of the singular part $ K $ as Theorem~C.\ref{prp:K2}. 
	If $ U^{-1}HU $ has the form of Eq.~(\ref{eq:prplambdaKdecomp00}), then $ \tilde{V}^{-1}U^{-1}HU\tilde{V} $ gives Eq.~(\ref{eq:colpathm002}).
	\end{proof}
\section{Existence of the basis satisfying Eqs.~(\ref{eq:orthozero11})-(\ref{eq:orthozero12})}\label{app:basischoice}
	In this appendix, for any B-hermitian matrix $ H_0 $ such that  $ \sigma H_0 $ is positive-semidefinite, we prove that there exists a basis for an eigenspace with zero eigenvalue satisfying the $ \sigma $-orthogonal and orthogonal relations Eqs.~(\ref{eq:orthozero11})-(\ref{eq:orthozero12}). 
	The existence of the block-diagonal form [Eq.~(\ref{eq:UiHU0U})] with an appropriate B-unitary matrix $ U $ is guaranteed by Theorem~\ref{prp:colpa001}. Therefore, if  $ \boldsymbol{x}_i $'s,  $ \tau\boldsymbol{x}_i^* $'s and $ \boldsymbol{y}_i $'s are positive-norm, negative-norm, and zero-norm eigenvectors with zero eigenvalue and $ \boldsymbol{z}_i $'s are generalized eigenvectors satisfying
	\begin{align}
		&H_0\boldsymbol{x}_i=\boldsymbol{0},\quad H_0\tau\boldsymbol{x}_i^*=\boldsymbol{0},\quad (i=1,\dots,s.),  \\
		&H_0\boldsymbol{y}_j=\boldsymbol{0},\quad H_0\boldsymbol{z}_j=2\kappa_j\boldsymbol{y}_j, \quad (j=1,\dots,r.),
	\end{align}
	we can always assume that the $ \sigma $-orthogonal relations
	\begin{align}
		&(\boldsymbol{x}_i,\boldsymbol{x}_j)_\sigma=-(\tau\boldsymbol{x}_i^*,\tau\boldsymbol{x}_j^*)_\sigma=\delta_{ij}, \label{eq:appex01} \\ 
		&(\boldsymbol{y}_i,\boldsymbol{y}_j)_\sigma=(\boldsymbol{y}_i,\boldsymbol{x}_j)_\sigma=(\boldsymbol{y}_i,\tau\boldsymbol{x}_j^*)_\sigma=(\boldsymbol{x}_i,\tau\boldsymbol{x}_j^*)_\sigma=0, \\
		&(\boldsymbol{z}_i,\boldsymbol{z}_j)_\sigma=0,\quad (\boldsymbol{y}_i,\boldsymbol{z}_j)_\sigma=2\delta_{ij} \label{eq:appex02}
	\end{align}
	are satisfied. So, what we should prove is that we can choose a basis for an eigenspace with zero eigenvalue which satisfies the orthogonal relations
	\begin{align}
		&(\boldsymbol{x}_i,\boldsymbol{x}_j)_{\mathbb{C}}=(\tau\boldsymbol{x}_i^*,\tau\boldsymbol{x}_j^*)_{\mathbb{C}}=\frac{1}{\mu_i}\delta_{ij},\quad (\boldsymbol{y}_i,\boldsymbol{y}_j)_{\mathbb{C}}=2\delta_{ij}, \\
		&(\boldsymbol{x}_i,\tau\boldsymbol{x}_j^*)_{\mathbb{C}}=(\boldsymbol{y}_i,\boldsymbol{x}_j)_{\mathbb{C}}=(\boldsymbol{y}_i,\tau\boldsymbol{x}_j^*)_{\mathbb{C}}=0.
	\end{align}
	with keeping the $ \sigma $-orthogonal relations (\ref{eq:appex01})-(\ref{eq:appex02}). (Here, only in this appendix, we use the notation of two kinds of product $ (\cdot,\cdot)_\sigma $ and $ (\cdot,\cdot)_{\mathbb{C}} $ in parallel for brevity.)
	\begin{proof}
	Since $ \sigma H_0 $ is positive-semidefinite and $ \boldsymbol{z}_i $ is not an eigenvector of $ H_0 $ with zero eigenvalue, $ (\boldsymbol{z}_i,H_0\boldsymbol{z}_i)_\sigma=2\kappa_i(\boldsymbol{z}_i,\boldsymbol{y}_i)_\sigma>0 $. So, both $ \kappa_i $ and $ (\boldsymbol{y}_i,\boldsymbol{z}_i)_\sigma $ can be set to be real and positive. If we define $ \boldsymbol{y}_i'=\sqrt{2\kappa_i}\boldsymbol{y}_i $ and $ \boldsymbol{z}_i'=\boldsymbol{z}_i/\sqrt{2\kappa_i} $, we obtain the relation $ H_0\boldsymbol{z}_i'=\boldsymbol{y}_i' $ with keeping the $ \sigma $-orthogonal relation $ (\boldsymbol{y}_i',\boldsymbol{z}_j')_\sigma=2\delta_{ij} $. If we write $ \boldsymbol{y}_i'=(\boldsymbol{\phi}_i,\boldsymbol{\phi}_i^*)^T,\ \boldsymbol{\phi}_i\in\mathbb{C}^N $, by Proposition C.\ref{prp:sigmabasis3}(c),  $ \boldsymbol{\phi}_1,\dots\boldsymbol{\phi}_r $ are linearly independent and the relation $ (\boldsymbol{y}_i',\boldsymbol{y}_j')_\sigma=\boldsymbol{\phi}_i^\dagger\boldsymbol{\phi}_j-\boldsymbol{\phi}_i^T\boldsymbol{\phi}_j^*\propto \operatorname{Im}\boldsymbol{\phi}_i^\dagger\boldsymbol{\phi}_j=0 $ holds. Therefore the $ r\times r $ Gram matrix $ P_{ij}=\boldsymbol{\phi}_i^\dagger\boldsymbol{\phi}_j $ is a positive-definite real symmetric matrix. Then, $ P $ can be diagonalized by a real orthogonal transformation  $ \boldsymbol{y}_i''=\sum_j\boldsymbol{y}_j'O_{ji} $, and  $ \boldsymbol{y}_i'' $ satisfies the orthogonal relation $ (\boldsymbol{y}_i'',\boldsymbol{y}_j'')_{\mathbb{C}}=4\kappa_i'\delta_{ij}, $ where $  2\kappa_i'>0 $ is an eigenvalue of $P$. If the generalized eigenvectors are also transformed by the same orthogonal matrix  $ \boldsymbol{z}_i''=\sum_j\boldsymbol{z}_j'O_{ji} $, the $ \sigma $-orthogonal relation $ (\boldsymbol{y}_i'',\boldsymbol{z}_j'')_\sigma=2\delta_{ij} $ and the relation $ H_0\boldsymbol{z}_i''=\boldsymbol{y}_i'' $ are preserved. Finally, defining $ \boldsymbol{y}_i'''=\boldsymbol{y}_i''/\sqrt{2\kappa_i'} $ and $ \boldsymbol{z}_i'''=\sqrt{2\kappa_i'}\boldsymbol{z}_i'' $, and eliminating the prime symbols from $ \kappa_i',\boldsymbol{y}_i''',\boldsymbol{z}_i''' $, we obtain the basis satisfying the desired orthogonal relation. Next, let us make a basis for $ \boldsymbol{x}_i $'s. By the Gram-Schmidt process, if we define $ \boldsymbol{x}'_i=\boldsymbol{x}_i-\sum_{l=1}^r\frac{(\boldsymbol{y}_l,\boldsymbol{x}_i)_{\mathbb{C}}}{(\boldsymbol{y}_l,\boldsymbol{y}_l)_{\mathbb{C}}}\boldsymbol{y}_l $, they satisfy $ (\boldsymbol{y}_j,\boldsymbol{x}_i')_{\mathbb{C}}=(\boldsymbol{y}_j,\tau(\boldsymbol{x}_i')^*)_{\mathbb{C}}=0 $ with keeping the $ \sigma $-orthogonal relations. Henceforth let us write  $ \boldsymbol{x}'_i $ as $ \boldsymbol{x}_i $ for simplicity. Let $ P=(\boldsymbol{x}_1,\dots,\boldsymbol{x}_s,\tau\boldsymbol{x}_1^*,\dots,\tau\boldsymbol{x}_s^*) $ and let us consider two kinds of Gram matrices, i.e.,  $ P^\dagger P $ for the normal inner product and  $ P^\dagger\sigma P $ for the $ \sigma $-inner product. Since the basis is now chosen as $ \sigma $-orthonormal, the relation $ P^\dagger\sigma P=\sigma_s $ holds. If the basis transformation $ P'=PU $ preserves $ (P')^\dagger\sigma P'=P^\dagger\sigma P $,  $ U $ must be a $ 2s\times 2s $ B-unitary matrix. Since $ P^\dagger P $ is positive-definite, by Theorem~\ref{prp:colpa009}, there exists a B-unitary matrix $ U $ such that $ U^\dagger P^\dagger PU $ is diagonal.
	\end{proof}
\section{Second and third order calculations for type-I and type-II NGMs}\label{app:typeI2nd}
	\indent In this appendix, as a complementary calculation of Sec.~\ref{sec:perturb}, we derive the second order term for type-I mode [Eqs. (\ref{eq:prtrb2storder01}) and (\ref{eq:prtrb2storder02})] and show the absence of the third-order term in type-II modes [Eqs.~(\ref{eq:typeiidsprsn}) and (\ref{eq:typeiidsprsn2})]. \\ 
	\indent Let $ \boldsymbol{\xi}_0 $ be an eigenvector such that $ H_0\boldsymbol{\xi}_0=\boldsymbol{0} $ and $ \epsilon_0=0 $. Then, a perturbative expansion up to third order is given by
	\begin{align}
		H_0\boldsymbol{\xi}_1&=\epsilon_1\boldsymbol{\xi}_0, \label{eq:app01} \\
		H_0\boldsymbol{\xi}_2+\sigma\boldsymbol{\xi}_0&=\epsilon_2\boldsymbol{\xi}_0+\epsilon_1\boldsymbol{\xi}_1, \label{eq:app02} \\
		H_0\boldsymbol{\xi}_3+\sigma\boldsymbol{\xi}_1&=\epsilon_3\boldsymbol{\xi}_0+\epsilon_2\boldsymbol{\xi}_1+\epsilon_1\boldsymbol{\xi}_2. \label{eq:app03}
	\end{align}
	\indent First, let us consider the type-I mode. Following the result of Subsec.~\ref{subsec:finitek}, we take $ \boldsymbol{\xi}_0=\boldsymbol{y}_j,\  \boldsymbol{\xi}_1=\pm\frac{1}{\sqrt{2\kappa_j}}\boldsymbol{z}_j, $ and $ \epsilon_1=\pm\sqrt{2\kappa_j} $. Then, Eq.~(\ref{eq:app01}) becomes an identity and Eqs.~(\ref{eq:app02}) and (\ref{eq:app03}) are given by
	\begin{align}
		H_0\boldsymbol{\xi}_2+\sigma\boldsymbol{y}_j&=\epsilon_2\boldsymbol{y}_j+\boldsymbol{z}_j, \label{eq:app04} \\
		H_0\boldsymbol{\xi}_3\pm\frac{1}{\sqrt{2\kappa_j}}\sigma\boldsymbol{z}_j&=\epsilon_3\boldsymbol{y}_j\pm\frac{\epsilon_2}{\sqrt{2\kappa_j}}\boldsymbol{z}_j\pm\sqrt{2\kappa_j}\boldsymbol{\xi}_2. \label{eq:app05}
	\end{align}
	The $ \sigma $-inner product between $ \boldsymbol{z}_j $ and Eq.~(\ref{eq:app04}) and that between  $ \boldsymbol{y}_j $ and Eq.~(\ref{eq:app05}) yield
	\begin{align}
		2\kappa_j(\boldsymbol{y}_j,\boldsymbol{\xi}_2)_\sigma+(\boldsymbol{z}_j,\sigma\boldsymbol{y}_j)_\sigma&=2\epsilon_2, \\
		(\boldsymbol{y}_j,\sigma\boldsymbol{z}_j)_\sigma&=2\epsilon_2+2\kappa_j(\boldsymbol{y}_j,\boldsymbol{\xi}_2)_\sigma.
	\end{align}
	From them, using the general property $ (\boldsymbol{y}_j,\sigma\boldsymbol{z}_j)_\sigma=(\boldsymbol{z}_j,\sigma\boldsymbol{y}_j)_\sigma^* $, we obtain
	\begin{align}
		\epsilon_2=\frac{\operatorname{Re}(\boldsymbol{z}_j,\sigma\boldsymbol{y}_j)_\sigma}{2},\quad (\boldsymbol{y}_j,\boldsymbol{\xi}_2)_\sigma=-\frac{\mathrm{i}\operatorname{Im}(\boldsymbol{z}_j,\sigma\boldsymbol{y}_j)_\sigma}{2\kappa_j}.
	\end{align}
	On the other hand, $ \boldsymbol{y}_j $ and $ \boldsymbol{z}_j $ generally have the form of $ \boldsymbol{y}_j=(\boldsymbol{\phi},-\boldsymbol{\phi}^*)^T $ and $ \boldsymbol{z}_j=(\boldsymbol{\eta},\boldsymbol{\eta}^*)^T $. Therefore $ (\boldsymbol{z}_j,\sigma\boldsymbol{y}_j)_\sigma=2\mathrm{i}\operatorname{Im}\boldsymbol{\eta}^\dagger\boldsymbol{\phi} $ is pure imaginary. Therefore
	\begin{align}
		\epsilon_2=0,\quad (\boldsymbol{y}_j,\boldsymbol{\xi}_2)_\sigma=-\frac{(\boldsymbol{z}_j,\sigma\boldsymbol{y}_j)_\sigma}{4\kappa_j}.
	\end{align}
	Thus we have proved the absence of the second-order energy for the type-I NGMs. Furthermore, taking the $ \sigma $-inner product between Eq.~(\ref{eq:app04}) and $ \boldsymbol{z}_i, \boldsymbol{w}_i, $ and $ \tau\boldsymbol{w}_i^* $, and using Eq.~(\ref{eq:prtrbxij2}), the expansion coefficients in Eq.~(\ref{eq:prtrbxij}) are given by
	\begin{align}
		d_i^{(2)}=-\frac{(\boldsymbol{z}_i,\sigma\boldsymbol{y}_j)_\sigma}{4\kappa_l},\quad \alpha_i^{(2)}=-\frac{(\boldsymbol{w}_i,\sigma\boldsymbol{y}_j)_\sigma}{\lambda_i}, \quad \beta_i^{(2)}=-\frac{(\tau\boldsymbol{w}_i^*,\sigma\boldsymbol{y}_j)_\sigma}{\lambda_i}.
	\end{align}
	Recalling that $ (\boldsymbol{X},\sigma\boldsymbol{Y})_\sigma\boldsymbol{X}=\boldsymbol{X}\boldsymbol{X}^\dagger\boldsymbol{Y} $,  $ \boldsymbol{\xi}_2 $ can be rewritten as
	\begin{align}
		\boldsymbol{\xi}_2=-\left[ \sum_{i=1}^r\frac{\boldsymbol{z}_i\boldsymbol{z}_i^\dagger}{4\kappa_i}+\sum_{i=1}^{m}\frac{\boldsymbol{w}_i\boldsymbol{w}_i^\dagger}{\lambda_i}+\sum_{i=1}^{m}\frac{\tau\boldsymbol{w}_i^*\boldsymbol{w}_i^T\tau}{\lambda_i} \right]\boldsymbol{y}_j,\label{eq:appthrid0I}
	\end{align}
	which just gives the second order term of Eq.~(\ref{eq:prtrb2storder02}). \\ 
	\indent Next, let us consider the type-II mode. Following the result of Subsec.~\ref{subsec:finitek}, we take  $ \boldsymbol{\xi}_0=\boldsymbol{x}_j,\ \boldsymbol{\xi}_1=\boldsymbol{0},\ \epsilon_1=0, $ and $ \epsilon_2=\frac{1}{\mu_j} $. Then, Eq.~(\ref{eq:app01}) becomes an identity and Eqs.~(\ref{eq:app02}) and (\ref{eq:app03}) are given by
	\begin{align}
		H_0\boldsymbol{\xi}_2+\sigma\boldsymbol{x}_j&=\frac{1}{\mu_j}\boldsymbol{x}_j,\\
		H_0\boldsymbol{\xi}_3&=\epsilon_3\boldsymbol{x}_j. \label{eq:app07}
	\end{align}
	By a similar calculation with the type-I case, we obtain the second order eigenvector as follows:
	\begin{align}
		\boldsymbol{\xi}_2=-\left[ \sum_{i=1}^r\frac{\boldsymbol{z}_i\boldsymbol{z}_i^\dagger}{4\kappa_i}+\sum_{i=1}^{m}\frac{\boldsymbol{w}_i\boldsymbol{w}_i^\dagger}{\lambda_i}+\sum_{i=1}^{m}\frac{\tau\boldsymbol{w}_i^*\boldsymbol{w}_i^T\tau}{\lambda_i} \right]\boldsymbol{x}_j,
	\end{align}
	which is consistent with the second order term of Eq.~(\ref{eq:typeiidsprsn2}). Taking the $ \sigma $-inner product between  $ \boldsymbol{x}_j $ and Eq.~(\ref{eq:app07}), we obtain
	\begin{align}
		\epsilon_3=\frac{(\boldsymbol{x}_j,H_0\boldsymbol{x}_j)_\sigma}{(\boldsymbol{x}_j,\boldsymbol{x}_j)_\sigma}=0.
	\end{align}
	Thus the third order energy for type-II modes generally vanishes. Then, Eq.~(\ref{eq:app07}) reduces to $ H_0\boldsymbol{\xi}_3=\boldsymbol{0} $. However, since $ \boldsymbol{\xi}_j $ with $ j\ge 1 $ does not contain the zeroth-order solution [See Eq.~(\ref{eq:prtrbxij})], we immediately have $ \boldsymbol{\xi}_3=\boldsymbol{0} $. 
\section{Perturbation theory for larger Jordan blocks}\label{app:jordan}
	Here we show the perturbation theory when $ H_0 $ has a Jordan block of size $ n\ge3 $. We can find a fractional dispersion such as $ \epsilon \propto k^{2/n} $ for finite $ k $, but there is at least one complex-valued coefficient, which means that the energy spectrum exhibits a dynamical instability. \\ 
	\indent For simplicity, we only consider the case of zero eigenvalue. If there exists a Jordan block of size $ n $, we can find the generalized eigenvectors satisfying the following relations:
	\begin{align}
		H_0\boldsymbol{w}_0=\boldsymbol{0},\quad H_0\boldsymbol{w}_1=\boldsymbol{w}_0,\quad\dots,\quad H_0\boldsymbol{w}_{n-1}=\boldsymbol{w}_{n-2}.
	\end{align}
	 By Theorem~\ref{prplambdaKdecomp}, such block must be singular B-hermitian, and hence $ \boldsymbol{w}_0 $ must have zero norm: $ (\boldsymbol{w}_0,\boldsymbol{w}_0)_\sigma=0 $. By an appropriate choice of the basis of the generalized eigenspace, we can always take $ \boldsymbol{w}_0,\dots,\boldsymbol{w}_{n-1} $ such that
	 \begin{align}
	 	&(\boldsymbol{w}_0,\boldsymbol{w}_{n-1})_\sigma=(\boldsymbol{w}_1,\boldsymbol{w}_{n-2})_\sigma=(\boldsymbol{w}_2,\boldsymbol{w}_{n-3})_\sigma=\dotsb=(\boldsymbol{w}_{n-1},\boldsymbol{w}_0)_\sigma\ne 0,\\
	 	&(\boldsymbol{w}_i,\boldsymbol{w}_j)_\sigma=0 \quad\text{with}\quad i+j\ne n-1. \label{eq:jordan01}
	 \end{align}
	Let us calculate an eigenvalue and an eigenvector of  $ H=H_0+\sigma k^2 $ perturbatively. The perturbative expansion works well if we expand the eigenvalue and the eigenvector as
	 \begin{align}
	 	\epsilon&=\sum_{j=1}^{n-1} \epsilon_jk^{2j/n}+O(k^2), \\
	 	\boldsymbol{\xi}&=\boldsymbol{w}_0+\sum_{m=1}^{n-1}k^{2m/n}\sum_{j=1}^m\alpha_{m,j}\boldsymbol{w}_j+k^2\boldsymbol{\xi}_2+O(k^{2(n+1)/n}).
	 \end{align}
	Substituting them into $ (H_0+\sigma k^2)\boldsymbol{\xi}=\epsilon\boldsymbol{\xi} $, the coefficients $ \alpha_{i,j} $ are iteratively determined and expressed in terms of $ \epsilon_j $'s. In particular, we obtain $ \alpha_{j,j}=\epsilon_1^j $. On the other hand, the equation for the coefficient of $ k^2 $ is given by
	\begin{align}
		H_0\boldsymbol{\xi}_2+\sigma\boldsymbol{w}_0=\sum_{m=1}^{n-1}\sum_{j=1}^m\epsilon_{n-m}\alpha_{m,j}\boldsymbol{w}_j.
	\end{align}
	Taking the $ \sigma $-inner product between this equation and $ \boldsymbol{w}_0 $ yields
	\begin{align}
		(\boldsymbol{w}_0,\sigma\boldsymbol{w}_0)_\sigma=\epsilon_1^n(\boldsymbol{w}_0,\boldsymbol{w}_{n-1})_\sigma,
	\end{align}
	where Eq.~(\ref{eq:jordan01}) and $ \alpha_{n-1,n-1}=\epsilon_1^{n-1} $ are used. We thus obtain
	\begin{align}
		\epsilon &= \left[\frac{(\boldsymbol{w}_0,\sigma\boldsymbol{w}_0)_\sigma}{(\boldsymbol{w}_0,\boldsymbol{w}_{n-1})_\sigma}k^2\right]^{1/n}+O(k^{4/n}), \label{eq:jordan04}
	\end{align}
	where we consider all possible $ n $-th roots, hence Eq.~(\ref{eq:jordan04}) represents  $ n $ different branches. If $ n\ge 3 $, Eq.~(\ref{eq:jordan04}) always includes at least one dispersion relation with complex coefficient. Thus we conclude that the system has a dynamical instability if  $ H_0 $ has a Jordan block of size $ n\ge 3 $. When $ n=1 $ and 2, it reduces to the type-II and type-I dispersion relations derived in Subsec.~\ref{subsec:finitek}, respectively. Therefore Eq.~(\ref{eq:jordan04}) includes all dispersion relations treated so far.\\  
	\indent Note that the origin of the fractional dispersion (\ref{eq:jordan04}) is completely different from that of ripplons $ \epsilon\sim k^{3/2} $ (Subsecs~\ref{subsec:ripplon} and \ref{subsec:infinite}), because the latter arises from an infinite-dimensional nature of B-hermitian operator and becomes exact only in the infinite-size limit. \section{``Massive'' Nambu-Goldstone modes in the Bogoliubov theory}\label{app:exmassive}
	This appendix is a complement of Subsec.~\ref{eq:subsecexmass}. We give a general result on ``massive'' NGMs \cite{PhysRevLett.110.011602,PhysRevLett.111.021601,Nicolis,2014arXiv1406.6271} and related SSB-originated \textit{finite-energy} solutions. \\ 
	\indent Let us consider the Hamiltonian density
	\begin{align}
		h=\sum_{i=1}^N\nabla \psi_i^* \nabla \psi_i +F(\boldsymbol{\psi}^*,\boldsymbol{\psi})-\mu_1M_1, \label{eq:magGP00}
	\end{align}
	where the model is the same with Eq.~(\ref{eq:GP00}) except for the last term $ \mu_1M_1 $. In the last term, $ \mu_1 $ is a real constant and $ M_1 $ is given by
	\begin{align}
		M_1=\boldsymbol{\psi}^\dagger Q_1\boldsymbol{\psi},
	\end{align}
	where $ Q_1 $ is a generator of the symmetry group $ G $, and hence hermitian. $ M_1 $ is a conserved quantity from Noether's conservation law. The GP equation is given by
	\begin{align}
		\mathrm{i}\partial_t \boldsymbol{\psi}=-\nabla^2\boldsymbol{\psi}+\frac{\partial F}{\partial\boldsymbol{\psi}^*}-\mu_1Q_1\boldsymbol{\psi},
	\end{align}
	where $ \boldsymbol{\psi}=(\psi_1,\dots,\psi_N)^T $ and $ \frac{\partial F}{\partial\boldsymbol{\psi}^*}=(\frac{\partial F}{\partial \psi_1^*},\dots,\frac{\partial F}{\partial \psi_N^*})^T $. The Bogoliubov equation is given by
	\begin{align}
		\mathrm{i}\partial_t\begin{pmatrix}\boldsymbol{u} \\ \boldsymbol{v}\end{pmatrix}=\begin{pmatrix} -\nabla^2+F-\mu_1Q_1 & G \\ -G^* & \nabla^2-F^*+\mu_1Q_1^* \end{pmatrix}\begin{pmatrix}\boldsymbol{u} \\ \boldsymbol{v}\end{pmatrix},
	\end{align}
	where the $ N\times N $ matrices $ F $ and $ G $ are the same with Eq.~(\ref{eq:FijGij}). Let us define
	\begin{align}
		\tilde{\boldsymbol{\psi}}=\mathrm{e}^{-\mathrm{i}\mu_1Q_1t}\boldsymbol{\psi},\quad \tilde{\boldsymbol{u}}=\mathrm{e}^{-\mathrm{i}\mu_1Q_1t}\boldsymbol{u},\quad \tilde{\boldsymbol{v}}=\mathrm{e}^{\mathrm{i}\mu_1Q_1^*t}\boldsymbol{v}.
	\end{align}
	Then, we can show that these tilde-added quantities satisfy the GP and Bogoliubov equations without the term $ -\mu_1M_1 $: 
	\begin{align}
		\mathrm{i}\partial_t \tilde{\boldsymbol{\psi}}&=-\nabla^2\tilde{\boldsymbol{\psi}}+\frac{\partial F}{\partial\tilde{\boldsymbol{\psi}}^*},\\
		\mathrm{i}\partial_t\begin{pmatrix}\tilde{\boldsymbol{u}} \\ \tilde{\boldsymbol{v}} \end{pmatrix}&=\begin{pmatrix} -\nabla^2+F & G \\ -G^* & \nabla^2-F^* \end{pmatrix}\begin{pmatrix}\tilde{\boldsymbol{u}} \\ \tilde{\boldsymbol{v}}\end{pmatrix},
	\end{align}
	where, in proving them, we must pay attention to the fact that the function $ F(\boldsymbol{\psi}^*,\boldsymbol{\psi}) $ satisfies the property $ F(\tilde{\boldsymbol{\psi}^*},\tilde{\boldsymbol{\psi}})=F(\boldsymbol{\psi}^*,\boldsymbol{\psi}) $, because $ U=\mathrm{e}^{-\mathrm{i}\mu_1Q_1t}\in G $. Thus, repeating the same argument in Subsec.~\ref{sec:symzero}, we can obtain SSB-originated zero-mode solutions for the Bogoliubov equation. If we go back to the tildeless notations, the solution can be written as
	\begin{align}
		\begin{pmatrix}\boldsymbol{u} \\ \boldsymbol{v}\end{pmatrix}=\begin{pmatrix} \mathrm{e}^{\mathrm{i}\mu_1Q_1 t}Q_j\mathrm{e}^{-\mathrm{i}\mu_1Q_1 t}\boldsymbol{\psi} \\ -\mathrm{e}^{-\mathrm{i}\mu_1Q_1^* t}Q_j^*\mathrm{e}^{\mathrm{i}\mu_1Q_1^* t}\boldsymbol{\psi}^* \end{pmatrix},\quad j=1,\dots,n=\dim G.
	\end{align}
	They are, however, \textit{time-dependent} solutions unless $ [Q_1,Q_j]=0 $. In order to discuss dispersion relations, we need to get information on stationary eigenstates. We can achieve it using the knowledge of Lie algebra. Every element in the Lie algebra can be classified into a Cartan subalgebra or raising and lowering operators. So either of the following two cases occur: 
	\begin{itemize}
		\item $[Q_1,Q_j]=0$, where $Q_j$ is an element of a Cartan subalgebra.
		\item $[Q_1,Q_\pm]=\pm \alpha Q_\pm$, where $Q_\pm=Q_j\pm\mathrm{i}Q_k$ is a raising and lowering operator, and $ \alpha $ is real and only determined by structure constants of the Lie algebra.
	\end{itemize}
	In the former case, we simply obtain $ \mathrm{e}^{-\mathrm{i}\mu_1Q_1 t}Q_j\mathrm{e}^{\mathrm{i}\mu_1Q_1 t}=Q_j $, so we obtain a zero-energy eigenvector. In the latter case, using the Baker-Campbell-Hausdorff formula, we obtain
	\begin{align}
		\mathrm{e}^{\mathrm{i}\mu_1Q_1 t}Q_\pm\mathrm{e}^{-\mathrm{i}\mu_1Q_1 t}=\mathrm{e}^{\pm\mathrm{i}\mu_1\alpha t}Q_\pm.
	\end{align}
	From them, we obtain an SSB-originated \textit{finite-energy} solution:
	\begin{align}
		\begin{pmatrix}\boldsymbol{u} \\ \boldsymbol{v} \end{pmatrix}=\begin{pmatrix} Q_\pm\boldsymbol{\psi} \\ -Q_{\mp}^*\boldsymbol{\psi}^* \end{pmatrix} \qquad \text{with an eigenvalue }\epsilon=\mp\mu_1\alpha.
	\end{align}
	\indent Let us examine the above result by a familiar example, i.e., the spinor BEC in the presence of magnetic field:
	\begin{align}
		h=\sum_{j=-F}^F|\nabla\psi_j|^2-\mu\rho +h_{\text{int}}-B M_z,
	\end{align}
	where, the model is the same with Subsec.~\ref{sec:spinorgen} except for the last term.  $ B $ represents a magnitude of the magnetic field. Note that the term $ -\mu\rho $ does not break any symmetry. In the present system $ \mu_1M_1=BM_z $, and the commutation relations are given by $ [F_z,F_z]=[F_z,I]=0 $ and $ [F_z,F_\pm]=\pm F_\pm $. Thus, the SSB-originated solutions are given by
	\begin{alignat}{2}
		\begin{pmatrix}\boldsymbol{u} \\ \boldsymbol{v}\end{pmatrix}=&\begin{pmatrix} \boldsymbol{\psi} \\ -\boldsymbol{\psi}^* \end{pmatrix} \quad \text{with } \epsilon=0, &&\qquad \begin{pmatrix} F_z\boldsymbol{\psi} \\ -F_z^*\boldsymbol{\psi}^* \end{pmatrix} \quad \text{with } \epsilon=0, \label{eq:finiteSSB0} \\
		&\begin{pmatrix} F_+\boldsymbol{\psi} \\ -F_-^*\boldsymbol{\psi}^* \end{pmatrix} \quad\text{with } \epsilon=-B, &&\qquad \begin{pmatrix} F_-\boldsymbol{\psi} \\ -F_+^*\boldsymbol{\psi}^* \end{pmatrix} \quad\text{with } \epsilon=+B. \label{eq:finiteSSB}
	\end{alignat}
	If we set $ B=0 $, they simply reproduce Eq.~(\ref{eq:spinorzeromode02}). We can also derive a dispersion relation for finite $ k $ using the perturbation theory in Sec.~\ref{sec:perturb}. For the last mode in Eq.~(\ref{eq:finiteSSB}), the second-order result is given by
	\begin{align}
		\epsilon = B+\frac{\boldsymbol{u}^\dagger\boldsymbol{u}+\boldsymbol{v}^\dagger\boldsymbol{v}}{\boldsymbol{u}^\dagger\boldsymbol{u}-\boldsymbol{v}^\dagger\boldsymbol{v}}k^2=B+\frac{\boldsymbol{\psi}^\dagger(F_+F_-+F_-F_+)\boldsymbol{\psi}}{\boldsymbol{\psi}^\dagger(F_+F_--F_-F_+)\boldsymbol{\psi}}k^2=B+\frac{N_{+-}}{M_z}k^2. \label{eq:lastdance}
	\end{align}
	Here $ N_{+-} $ is a component of a nematic tensor given by Eq.~(\ref{eq:Npm}).\\
	\indent We note that the above discussion is valid only when the symmetry-breaking term is given by a conserved quantity. For example, in spinor BECs, the quadratic Zeeman term $ q N_{zz} $, where $ N_{zz} $ is a $ (z,z) $-component of a nematic tensor (\ref{eq:Nzz}), is also important (e.g., See Ref. \cite{Kawaguchi:2012ii}.). If this term is added, the finite-energy solutions (\ref{eq:finiteSSB}) no longer exist. (On the other hand, the zero-energy solutions (\ref{eq:finiteSSB0}) always exist even in this case.)




\bibliographystyle{model1a-num-names}

\begin{thebibliography}{78}
\expandafter\ifx\csname natexlab\endcsname\relax\def\natexlab#1{#1}\fi
\providecommand{\bibinfo}[2]{#2}
\ifx\xfnm\relax \def\xfnm[#1]{\unskip,\space#1}\fi
\bibitem[{Nielsen and Chadha(1976)}]{Nielsen:1975hm}
\bibinfo{author}{H.~B. Nielsen}, \bibinfo{author}{S.~Chadha},
  \bibinfo{journal}{Nucl.Phys.} \bibinfo{volume}{B105} (\bibinfo{year}{1976})
  \bibinfo{pages}{445}.
\bibitem[{Nambu(2004)}]{Nambu:2004yia}
\bibinfo{author}{Y.~Nambu}, \bibinfo{journal}{J. Stat. Phys.}
  \bibinfo{volume}{115} (\bibinfo{year}{2004}) \bibinfo{pages}{7--17}.
\bibitem[{Watanabe and Brauner(2011)}]{PhysRevD.84.125013}
\bibinfo{author}{H.~Watanabe}, \bibinfo{author}{T.~Brauner},
  \bibinfo{journal}{Phys. Rev. D} \bibinfo{volume}{84} (\bibinfo{year}{2011})
  \bibinfo{pages}{125013}.
\bibitem[{Watanabe and Murayama(2012)}]{PhysRevLett.108.251602}
\bibinfo{author}{H.~Watanabe}, \bibinfo{author}{H.~Murayama},
  \bibinfo{journal}{Phys. Rev. Lett.} \bibinfo{volume}{108}
  (\bibinfo{year}{2012}) \bibinfo{pages}{251602}.
\bibitem[{Hidaka(2013)}]{Hidaka:2012ym}
\bibinfo{author}{Y.~Hidaka}, \bibinfo{journal}{Phys. Rev. Lett.}
  \bibinfo{volume}{110} (\bibinfo{year}{2013}) \bibinfo{pages}{091601}.
\bibitem[{Coleman et~al.(1969)Coleman, Wess, and Zumino}]{Coleman:1969sm}
\bibinfo{author}{S.~R. Coleman}, \bibinfo{author}{J.~Wess},
  \bibinfo{author}{B.~Zumino}, \bibinfo{journal}{Phys. Rev.}
  \bibinfo{volume}{177} (\bibinfo{year}{1969}) \bibinfo{pages}{2239--2247}.
\bibitem[{Callan et~al.(1969)Callan, Coleman, Wess, and Zumino}]{Callan:1969sn}
\bibinfo{author}{J.~Callan, Curtis~G.}, \bibinfo{author}{S.~R. Coleman},
  \bibinfo{author}{J.~Wess}, \bibinfo{author}{B.~Zumino},
  \bibinfo{journal}{Phys. Rev.} \bibinfo{volume}{177} (\bibinfo{year}{1969})
  \bibinfo{pages}{2247--2250}.
\bibitem[{Leutwyler(1994)}]{Leutwyler:1993gf}
\bibinfo{author}{H.~Leutwyler}, \bibinfo{journal}{Phys. Rev. D}
  \bibinfo{volume}{49} (\bibinfo{year}{1994}) \bibinfo{pages}{3033--3043}.
\bibitem[{Watanabe and Murayama(2014)}]{Watanabe:2014fva}
\bibinfo{author}{H.~Watanabe}, \bibinfo{author}{H.~Murayama},
  \bibinfo{journal}{Phys. Rev. X} \bibinfo{volume}{4} (\bibinfo{year}{2014})
  \bibinfo{pages}{031057}.
\bibitem[{Ho and Shenoy(1996)}]{HoShenoy}
\bibinfo{author}{T.-L. Ho}, \bibinfo{author}{V.~B. Shenoy},
  \bibinfo{journal}{Phys. Rev. Lett.} \bibinfo{volume}{77}
  (\bibinfo{year}{1996}) \bibinfo{pages}{3276}.
\bibitem[{Myatt et~al.(1997)Myatt, Burt, Ghrist, Cornell, and Wieman}]{Myatt}
\bibinfo{author}{C.~J. Myatt}, \bibinfo{author}{E.~A. Burt},
  \bibinfo{author}{R.~W. Ghrist}, \bibinfo{author}{E.~A. Cornell},
  \bibinfo{author}{C.~E. Wieman}, \bibinfo{journal}{Phys. Rev. Lett.}
  \bibinfo{volume}{78} (\bibinfo{year}{1997}) \bibinfo{pages}{586}.
\bibitem[{Kawaguchi and Ueda(2012)}]{Kawaguchi:2012ii}
\bibinfo{author}{Y.~Kawaguchi}, \bibinfo{author}{M.~Ueda},
  \bibinfo{journal}{Phys. Rept.} \bibinfo{volume}{520} (\bibinfo{year}{2012})
  \bibinfo{pages}{253--381}.
\bibitem[{Stamper-Kurn and Ueda(2013)}]{RevModPhys.85.1191}
\bibinfo{author}{D.~M. Stamper-Kurn}, \bibinfo{author}{M.~Ueda},
  \bibinfo{journal}{Rev. Mod. Phys.} \bibinfo{volume}{85}
  (\bibinfo{year}{2013}) \bibinfo{pages}{1191--1244}.
\bibitem[{Stamper-Kurn et~al.(1998)Stamper-Kurn, Andrews, Chikkatur, Inouye,
  Miesner, Stenger, and Ketterle}]{PhysRevLett.80.2027}
\bibinfo{author}{D.~M. Stamper-Kurn}, \bibinfo{author}{M.~R. Andrews},
  \bibinfo{author}{A.~P. Chikkatur}, \bibinfo{author}{S.~Inouye},
  \bibinfo{author}{H.-J. Miesner}, \bibinfo{author}{J.~Stenger},
  \bibinfo{author}{W.~Ketterle}, \bibinfo{journal}{Phys. Rev. Lett.}
  \bibinfo{volume}{80} (\bibinfo{year}{1998}) \bibinfo{pages}{2027--2030}.
\bibitem[{Stenger et~al.(1998)Stenger, Inouye, Stamper-Kurn, Miesner,
  Chikkatur, and Ketterle}]{Stenger}
\bibinfo{author}{J.~Stenger}, \bibinfo{author}{S.~Inouye},
  \bibinfo{author}{D.~M. Stamper-Kurn}, \bibinfo{author}{H.-J. Miesner},
  \bibinfo{author}{A.~P. Chikkatur}, \bibinfo{author}{W.~Ketterle},
  \bibinfo{journal}{Nature} \bibinfo{volume}{396} (\bibinfo{year}{1998})
  \bibinfo{pages}{345--348}.
\bibitem[{Ohmi and Machida(1998)}]{JPSJ.67.1822}
\bibinfo{author}{T.~Ohmi}, \bibinfo{author}{K.~Machida}, \bibinfo{journal}{J.
  Phys. Soc. Jpn.} \bibinfo{volume}{67} (\bibinfo{year}{1998})
  \bibinfo{pages}{1822--1825}.
\bibitem[{Ho(1998)}]{Ho:1998zz}
\bibinfo{author}{T.-L. Ho}, \bibinfo{journal}{Phys. Rev. Lett.}
  \bibinfo{volume}{81} (\bibinfo{year}{1998}) \bibinfo{pages}{742--745}.
\bibitem[{Ciobanu et~al.(2000)Ciobanu, Yip, and Ho}]{PhysRevA.61.033607}
\bibinfo{author}{C.~V. Ciobanu}, \bibinfo{author}{S.-K. Yip},
  \bibinfo{author}{T.-L. Ho}, \bibinfo{journal}{Phys. Rev. A}
  \bibinfo{volume}{61} (\bibinfo{year}{2000}) \bibinfo{pages}{033607}.
\bibitem[{Schmaljohann et~al.(2004)Schmaljohann, Erhard, Kronj\"ager, Kottke,
  van Staa, Cacciapuoti, Arlt, Bongs, and Sengstock}]{PhysRevLett.92.040402}
\bibinfo{author}{H.~Schmaljohann}, \bibinfo{author}{M.~Erhard},
  \bibinfo{author}{J.~Kronj\"ager}, \bibinfo{author}{M.~Kottke},
  \bibinfo{author}{S.~van Staa}, \bibinfo{author}{L.~Cacciapuoti},
  \bibinfo{author}{J.~J. Arlt}, \bibinfo{author}{K.~Bongs},
  \bibinfo{author}{K.~Sengstock}, \bibinfo{journal}{Phys. Rev. Lett.}
  \bibinfo{volume}{92} (\bibinfo{year}{2004}) \bibinfo{pages}{040402}.
\bibitem[{Chang et~al.(2004)Chang, Hamley, Barrett, Sauer, Fortier, Zhang, You,
  and Chapman}]{PhysRevLett.92.140403}
\bibinfo{author}{M.-S. Chang}, \bibinfo{author}{C.~D. Hamley},
  \bibinfo{author}{M.~D. Barrett}, \bibinfo{author}{J.~A. Sauer},
  \bibinfo{author}{K.~M. Fortier}, \bibinfo{author}{W.~Zhang},
  \bibinfo{author}{L.~You}, \bibinfo{author}{M.~S. Chapman},
  \bibinfo{journal}{Phys. Rev. Lett.} \bibinfo{volume}{92}
  (\bibinfo{year}{2004}) \bibinfo{pages}{140403}.
\bibitem[{Kuwamoto et~al.(2004)Kuwamoto, Araki, Eno, and
  Hirano}]{PhysRevA.69.063604}
\bibinfo{author}{T.~Kuwamoto}, \bibinfo{author}{K.~Araki},
  \bibinfo{author}{T.~Eno}, \bibinfo{author}{T.~Hirano},
  \bibinfo{journal}{Phys. Rev. A} \bibinfo{volume}{69} (\bibinfo{year}{2004})
  \bibinfo{pages}{063604}.
\bibitem[{Griesmaier et~al.(2005)Griesmaier, Werner, Hensler, Stuhler, and
  Pfau}]{PhysRevLett.94.160401}
\bibinfo{author}{A.~Griesmaier}, \bibinfo{author}{J.~Werner},
  \bibinfo{author}{S.~Hensler}, \bibinfo{author}{J.~Stuhler},
  \bibinfo{author}{T.~Pfau}, \bibinfo{journal}{Phys. Rev. Lett.}
  \bibinfo{volume}{94} (\bibinfo{year}{2005}) \bibinfo{pages}{160401}.
\bibitem[{Diener and Ho(2006)}]{PhysRevLett.96.190405}
\bibinfo{author}{R.~B. Diener}, \bibinfo{author}{T.-L. Ho},
  \bibinfo{journal}{Phys. Rev. Lett.} \bibinfo{volume}{96}
  (\bibinfo{year}{2006}) \bibinfo{pages}{190405}.
\bibitem[{Kawaguchi and Ueda(2011)}]{PhysRevA.84.053616}
\bibinfo{author}{Y.~Kawaguchi}, \bibinfo{author}{M.~Ueda},
  \bibinfo{journal}{Phys. Rev. A} \bibinfo{volume}{84} (\bibinfo{year}{2011})
  \bibinfo{pages}{053616}.
\bibitem[{Andrews et~al.(1997)Andrews, Kurn, Miesner, Durfee, Townsend, Inouye,
  and Ketterle}]{PhysRevLett.79.553}
\bibinfo{author}{M.~R. Andrews}, \bibinfo{author}{D.~M. Kurn},
  \bibinfo{author}{H.-J. Miesner}, \bibinfo{author}{D.~S. Durfee},
  \bibinfo{author}{C.~G. Townsend}, \bibinfo{author}{S.~Inouye},
  \bibinfo{author}{W.~Ketterle}, \bibinfo{journal}{Phys. Rev. Lett.}
  \bibinfo{volume}{79} (\bibinfo{year}{1997}) \bibinfo{pages}{553--556}.
\bibitem[{Marti et~al.(2014)Marti, MacRae, Olf, Lourette, Fang, and
  Stamper-Kurn}]{2014arXiv1404.5631M}
\bibinfo{author}{G.~E. Marti}, \bibinfo{author}{A.~MacRae},
  \bibinfo{author}{R.~Olf}, \bibinfo{author}{S.~Lourette},
  \bibinfo{author}{F.~Fang}, \bibinfo{author}{D.~M. Stamper-Kurn},
  \bibinfo{journal}{Phys. Rev. Lett.} \bibinfo{volume}{113}
  (\bibinfo{year}{2014}) \bibinfo{pages}{155302}.
\bibitem[{Watanabe and Murayama(2013)}]{Watanabe:2013iia}
\bibinfo{author}{H.~Watanabe}, \bibinfo{author}{H.~Murayama},
  \bibinfo{journal}{Phys. Rev. Lett.} \bibinfo{volume}{110}
  (\bibinfo{year}{2013}) \bibinfo{pages}{181601}.
\bibitem[{Hayata and Hidaka(2014)}]{Hayata:2013vfa}
\bibinfo{author}{T.~Hayata}, \bibinfo{author}{Y.~Hidaka},
  \bibinfo{journal}{Phys. Lett. B} \bibinfo{volume}{735} (\bibinfo{year}{2014})
  \bibinfo{pages}{195--199}.
\bibitem[{Brauner and Watanabe(2014)}]{Brauner:2014aha}
\bibinfo{author}{T.~Brauner}, \bibinfo{author}{H.~Watanabe},
  \bibinfo{journal}{Phys. Rev. D} \bibinfo{volume}{89} (\bibinfo{year}{2014})
  \bibinfo{pages}{085004}.
\bibitem[{Kobayashi and Nitta(2014{\natexlab{a}})}]{Kobayashi01022014}
\bibinfo{author}{M.~Kobayashi}, \bibinfo{author}{M.~Nitta},
  \bibinfo{journal}{Prog. Theor. Exp. Phys.} \bibinfo{volume}{2014}
  (\bibinfo{year}{2014}{\natexlab{a}}) \bibinfo{pages}{021B01}.
\bibitem[{Kobayashi and Nitta(2014{\natexlab{b}})}]{Kobayashi:2014xua}
\bibinfo{author}{M.~Kobayashi}, \bibinfo{author}{M.~Nitta},
  \bibinfo{journal}{Phys. Rev. Lett.} \bibinfo{volume}{113}
  (\bibinfo{year}{2014}{\natexlab{b}}) \bibinfo{pages}{120403}.
\bibitem[{Takeuchi and Kasamatsu(2013)}]{PhysRevA.88.043612}
\bibinfo{author}{H.~Takeuchi}, \bibinfo{author}{K.~Kasamatsu},
  \bibinfo{journal}{Phys. Rev. A} \bibinfo{volume}{88} (\bibinfo{year}{2013})
  \bibinfo{pages}{043612}.
\bibitem[{Watanabe and Murayama(2014{\natexlab{a}})}]{Watanabe:2014zza}
\bibinfo{author}{H.~Watanabe}, \bibinfo{author}{H.~Murayama},
  \bibinfo{journal}{Phys. Rev. D} \bibinfo{volume}{89}
  (\bibinfo{year}{2014}{\natexlab{a}}) \bibinfo{pages}{101701}.
\bibitem[{Watanabe and Murayama(2014{\natexlab{b}})}]{Watanabe:2014pea}
\bibinfo{author}{H.~Watanabe}, \bibinfo{author}{H.~Murayama},
  \bibinfo{journal}{Phys. Rev. Lett.} \bibinfo{volume}{112}
  (\bibinfo{year}{2014}{\natexlab{b}}) \bibinfo{pages}{191804}.
\bibitem[{Kobayashi and Nitta(2014)}]{Kobayashi:2014eqa}
\bibinfo{author}{M.~Kobayashi}, \bibinfo{author}{M.~Nitta},
  \bibinfo{journal}{Phys. Rev. D} \bibinfo{volume}{90} (\bibinfo{year}{2014})
  \bibinfo{pages}{025010}.
\bibitem[{Uchino et~al.(2010)Uchino, Kobayashi, and Ueda}]{PhysRevA.81.063632}
\bibinfo{author}{S.~Uchino}, \bibinfo{author}{M.~Kobayashi},
  \bibinfo{author}{M.~Ueda}, \bibinfo{journal}{Phys. Rev. A}
  \bibinfo{volume}{81} (\bibinfo{year}{2010}) \bibinfo{pages}{063632}.
\bibitem[{Yukawa and Ueda(2012)}]{PhysRevA.86.063614}
\bibinfo{author}{E.~Yukawa}, \bibinfo{author}{M.~Ueda}, \bibinfo{journal}{Phys.
  Rev. A} \bibinfo{volume}{86} (\bibinfo{year}{2012}) \bibinfo{pages}{063614}.
\bibitem[{Weinberg(1972)}]{Weinberg:1972fn}
\bibinfo{author}{S.~Weinberg}, \bibinfo{journal}{Phys. Rev. Lett.}
  \bibinfo{volume}{29} (\bibinfo{year}{1972}) \bibinfo{pages}{1698--1701}.
\bibitem[{Georgi and Pais(1975)}]{Georgi:1975tz}
\bibinfo{author}{H.~Georgi}, \bibinfo{author}{A.~Pais}, \bibinfo{journal}{Phys.
  Rev. D} \bibinfo{volume}{12} (\bibinfo{year}{1975}) \bibinfo{pages}{508}.
\bibitem[{Uchino et~al.(2010)Uchino, Kobayashi, Nitta, and
  Ueda}]{PhysRevLett.105.230406}
\bibinfo{author}{S.~Uchino}, \bibinfo{author}{M.~Kobayashi},
  \bibinfo{author}{M.~Nitta}, \bibinfo{author}{M.~Ueda},
  \bibinfo{journal}{Phys. Rev. Lett.} \bibinfo{volume}{105}
  (\bibinfo{year}{2010}) \bibinfo{pages}{230406}.
\bibitem[{Gross(1961)}]{Gross}
\bibinfo{author}{E.~P. Gross}, \bibinfo{journal}{Nuovo Cimento}
  \bibinfo{volume}{20} (\bibinfo{year}{1961}) \bibinfo{pages}{454}.
\bibitem[{Pitaevskii(1961)}]{Pitaevskii}
\bibinfo{author}{L.~P. Pitaevskii}, \bibinfo{journal}{Sov. Phys. JETP}
  \bibinfo{volume}{13} (\bibinfo{year}{1961}) \bibinfo{pages}{451}.
\bibitem[{Bogoliubov(1947)}]{Bogoliubov}
\bibinfo{author}{N.~N. Bogoliubov}, \bibinfo{journal}{J.\ Phys.\ (Moscow)}
  \bibinfo{volume}{11} (\bibinfo{year}{1947}) \bibinfo{pages}{23--32}.
\bibitem[{Fetter(1972)}]{Fetter197267}
\bibinfo{author}{A.~L. Fetter}, \bibinfo{journal}{Ann. Phys. (N.Y.)}
  \bibinfo{volume}{70} (\bibinfo{year}{1972}) \bibinfo{pages}{67--101}.
\bibitem[{Dalfovo et~al.(1999)Dalfovo, Giorgini, Pitaevskii, and
  Stringari}]{DalfovoGiorginiPitaevskiiStringari}
\bibinfo{author}{F.~Dalfovo}, \bibinfo{author}{S.~Giorgini},
  \bibinfo{author}{L.~P. Pitaevskii}, \bibinfo{author}{S.~Stringari},
  \bibinfo{journal}{Rev. Mod. Phys.} \bibinfo{volume}{71}
  (\bibinfo{year}{1999}) \bibinfo{pages}{463}.
\bibitem[{Pethick and Smith(2002)}]{PethickSmith}
\bibinfo{author}{C.~J. Pethick}, \bibinfo{author}{H.~Smith},
  \bibinfo{title}{Bose-Einstein Condensation in Dilute Bose Gases},
  \bibinfo{publisher}{Cambridge University Press},
  \bibinfo{address}{Cambridge}, \bibinfo{year}{2002}.
\bibitem[{Fetter and Walecka(2003)}]{FetterWalecka}
\bibinfo{author}{A.~L. Fetter}, \bibinfo{author}{J.~D. Walecka},
  \bibinfo{title}{Quantum Theory of Many-Particle Systems},
  \bibinfo{publisher}{Dover Publications}, \bibinfo{address}{Mineola, New
  York}, \bibinfo{year}{2003}.
\bibitem[{Colpa(1986{\natexlab{a}})}]{Colpa1986}
\bibinfo{author}{J.~H.~P. Colpa}, \bibinfo{journal}{Physica}
  \bibinfo{volume}{134A} (\bibinfo{year}{1986}{\natexlab{a}})
  \bibinfo{pages}{377--416}.
\bibitem[{Colpa(1986{\natexlab{b}})}]{Colpa1986II}
\bibinfo{author}{J.~H.~P. Colpa}, \bibinfo{journal}{Physica}
  \bibinfo{volume}{134A} (\bibinfo{year}{1986}{\natexlab{b}})
  \bibinfo{pages}{417--442}.
\bibitem[{Nicolis and Piazza(2013)}]{PhysRevLett.110.011602}
\bibinfo{author}{A.~Nicolis}, \bibinfo{author}{F.~Piazza},
  \bibinfo{journal}{Phys. Rev. Lett.} \bibinfo{volume}{110}
  (\bibinfo{year}{2013}) \bibinfo{pages}{011602}.
\bibitem[{Watanabe et~al.(2013)Watanabe, Brauner, and
  Murayama}]{PhysRevLett.111.021601}
\bibinfo{author}{H.~Watanabe}, \bibinfo{author}{T.~Brauner},
  \bibinfo{author}{H.~Murayama}, \bibinfo{journal}{Phys. Rev. Lett.}
  \bibinfo{volume}{111} (\bibinfo{year}{2013}) \bibinfo{pages}{021601}.
\bibitem[{Nicolis et~al.(2013)Nicolis, Penco, Piazza, and Rosen}]{Nicolis}
\bibinfo{author}{A.~Nicolis}, \bibinfo{author}{R.~Penco},
  \bibinfo{author}{F.~Piazza}, \bibinfo{author}{R.~A. Rosen},
  \bibinfo{journal}{JHEP} \bibinfo{volume}{11} (\bibinfo{year}{2013})
  \bibinfo{pages}{055}.
\bibitem[{Hayata and Hidaka(2014)}]{2014arXiv1406.6271}
\bibinfo{author}{T.~Hayata}, \bibinfo{author}{Y.~Hidaka},
  \bibinfo{year}{(2014)}. \bibinfo{note}{{arXiv:1406.6271}}.
\bibitem[{Kovrizhin(2001)}]{Kovrizhin}
\bibinfo{author}{D.~L. Kovrizhin}, \bibinfo{journal}{Phys.\ Lett.\ A}
  \bibinfo{volume}{287} (\bibinfo{year}{2001}) \bibinfo{pages}{392}.
\bibitem[{{Yu. Kagan} et~al.(2003){Yu. Kagan}, Kovrizhin, and Maksimov}]{Kagan}
\bibinfo{author}{{Yu. Kagan}}, \bibinfo{author}{D.~L. Kovrizhin},
  \bibinfo{author}{L.~A. Maksimov}, \bibinfo{journal}{Phys.\ Rev.\ Lett.}
  \bibinfo{volume}{90} (\bibinfo{year}{2003}) \bibinfo{pages}{130402}.
\bibitem[{Danshita et~al.(2006)Danshita, Yokoshi, and
  Kurihara}]{DanshitaYokoshiKurihara}
\bibinfo{author}{I.~Danshita}, \bibinfo{author}{N.~Yokoshi},
  \bibinfo{author}{S.~Kurihara}, \bibinfo{journal}{New J.\ Phys.}
  \bibinfo{volume}{8} (\bibinfo{year}{2006}) \bibinfo{pages}{44}.
\bibitem[{Kato et~al.(2008)Kato, Nishiwaki, and
  Fujita}]{doi:10.1143/JPSJ.77.013602}
\bibinfo{author}{Y.~Kato}, \bibinfo{author}{H.~Nishiwaki},
  \bibinfo{author}{A.~Fujita}, \bibinfo{journal}{J. Phys. Soc. Jpn.}
  \bibinfo{volume}{77} (\bibinfo{year}{2008}) \bibinfo{pages}{013602}.
\bibitem[{Watabe and Kato(2008)}]{PhysRevA.78.063611}
\bibinfo{author}{S.~Watabe}, \bibinfo{author}{Y.~Kato}, \bibinfo{journal}{Phys.
  Rev. A} \bibinfo{volume}{78} (\bibinfo{year}{2008}) \bibinfo{pages}{063611}.
\bibitem[{Watabe et~al.(2011)Watabe, Kato, and Ohashi}]{PhysRevA.83.033627}
\bibinfo{author}{S.~Watabe}, \bibinfo{author}{Y.~Kato},
  \bibinfo{author}{Y.~Ohashi}, \bibinfo{journal}{Phys. Rev. A}
  \bibinfo{volume}{83} (\bibinfo{year}{2011}) \bibinfo{pages}{033627}.
\bibitem[{Watabe and Kato(2011)}]{PhysRevA.83.053624}
\bibinfo{author}{S.~Watabe}, \bibinfo{author}{Y.~Kato}, \bibinfo{journal}{Phys.
  Rev. A} \bibinfo{volume}{83} (\bibinfo{year}{2011}) \bibinfo{pages}{053624}.
\bibitem[{Watabe et~al.(2011)Watabe, Kato, and Ohashi}]{PhysRevA.84.013616}
\bibinfo{author}{S.~Watabe}, \bibinfo{author}{Y.~Kato},
  \bibinfo{author}{Y.~Ohashi}, \bibinfo{journal}{Phys. Rev. A}
  \bibinfo{volume}{84} (\bibinfo{year}{2011}) \bibinfo{pages}{013616}.
\bibitem[{Mine et~al.(2007)Mine, Okumura, Sunaga, and
  Yamanaka}]{MineOkumuraSUnagaYamanaka}
\bibinfo{author}{M.~Mine}, \bibinfo{author}{M.~Okumura},
  \bibinfo{author}{T.~Sunaga}, \bibinfo{author}{Y.~Yamanaka},
  \bibinfo{journal}{Ann. Phys.} \bibinfo{volume}{322} (\bibinfo{year}{2007})
  \bibinfo{pages}{2327--2349}.
\bibitem[{Galin(1982)}]{Galin}
\bibinfo{author}{D.~M. Galin}, \bibinfo{journal}{Amer. Math. Soc. Transl. (2)}
  \bibinfo{volume}{118} (\bibinfo{year}{1982}) \bibinfo{pages}{1--12}.
\bibitem[{Williamson(1936)}]{Williamson}
\bibinfo{author}{J.~Williamson}, \bibinfo{journal}{Am. J. Math.}
  \bibinfo{volume}{58} (\bibinfo{year}{1936}) \bibinfo{pages}{141--163}.
\bibitem[{Arnold(1989)}]{Arnold}
\bibinfo{author}{V.~I. Arnold}, \bibinfo{title}{Mathematical Methods of
  Classical Mechanics}, \bibinfo{publisher}{Springer},
  \bibinfo{address}{Berlin, Heidelberg}, \bibinfo{edition}{2nd} edition,
  \bibinfo{year}{1989}.
\bibitem[{Thouless(1961)}]{Thouless}
\bibinfo{author}{D.~J. Thouless}, \bibinfo{journal}{Nucl. Phys.}
  \bibinfo{volume}{22} (\bibinfo{year}{1961}) \bibinfo{pages}{78}.
\bibitem[{Colpa(1978)}]{Colpa1978}
\bibinfo{author}{J.~H.~P. Colpa}, \bibinfo{journal}{Physica}
  \bibinfo{volume}{93A} (\bibinfo{year}{1978}) \bibinfo{pages}{327--353}.
\bibitem[{Song et~al.(2007)Song, Semenoff, and Zhou}]{PhysRevLett.98.160408}
\bibinfo{author}{J.~L. Song}, \bibinfo{author}{G.~W. Semenoff},
  \bibinfo{author}{F.~Zhou}, \bibinfo{journal}{Phys. Rev. Lett.}
  \bibinfo{volume}{98} (\bibinfo{year}{2007}) \bibinfo{pages}{160408}.
\bibitem[{Turner et~al.(2007)Turner, Barnett, Demler, and
  Vishwanath}]{PhysRevLett.98.190404}
\bibinfo{author}{A.~M. Turner}, \bibinfo{author}{R.~Barnett},
  \bibinfo{author}{E.~Demler}, \bibinfo{author}{A.~Vishwanath},
  \bibinfo{journal}{Phys. Rev. Lett.} \bibinfo{volume}{98}
  (\bibinfo{year}{2007}) \bibinfo{pages}{190404}.
\bibitem[{Phuc et~al.(2014)Phuc, Kawaguchi, and Ueda}]{2014arXiv1404.5685}
\bibinfo{author}{N.~T. Phuc}, \bibinfo{author}{Y.~Kawaguchi},
  \bibinfo{author}{M.~Ueda}, \bibinfo{journal}{Phys. Rev. Lett.}
  \bibinfo{volume}{113} (\bibinfo{year}{2014}) \bibinfo{pages}{230401}.
\bibitem[{Phuc et~al.(2013)Phuc, Kawaguchi, and Ueda}]{Phuc2013158}
\bibinfo{author}{N.~T. Phuc}, \bibinfo{author}{Y.~Kawaguchi},
  \bibinfo{author}{M.~Ueda}, \bibinfo{journal}{Ann. Phys.}
  \bibinfo{volume}{328} (\bibinfo{year}{2013}) \bibinfo{pages}{158--219}.
\bibitem[{Yip(2007)}]{PhysRevA.75.023625}
\bibinfo{author}{S.-K. Yip}, \bibinfo{journal}{Phys. Rev. A}
  \bibinfo{volume}{75} (\bibinfo{year}{2007}) \bibinfo{pages}{023625}.
\bibitem[{Takahashi(2012)}]{TakahashiPhysD}
\bibinfo{author}{D.~A. Takahashi}, \bibinfo{journal}{Physica D}
  \bibinfo{volume}{241} (\bibinfo{year}{2012}) \bibinfo{pages}{1589}.
\bibitem[{Donnelly(1991)}]{Donnelly}
\bibinfo{author}{R.~J. Donnelly}, \bibinfo{title}{Quantized Vortices in Helium
  II}, \bibinfo{publisher}{Cambridge University Press},
  \bibinfo{address}{Cambridge}, \bibinfo{year}{1991}.
\bibitem[{Takahashi et~al.(2014)Takahashi, Kobayashi, and Nitta}]{TTKKN2014}
\bibinfo{author}{D.~A. Takahashi}, \bibinfo{author}{M.~Kobayashi},
  \bibinfo{author}{M.~Nitta}, \bibinfo{year}{(2014)}. \bibinfo{note}{{in
  preparation}}.
\bibitem[{Beekman(2014)}]{2014arXiv1408.1691}
\bibinfo{author}{A.~J. Beekman}, \bibinfo{year}{(2014)}.
  \bibinfo{note}{{arXiv:1408.1691}}.
\bibitem[{Kunimi(2014)}]{2014arXiv1408.4129}
\bibinfo{author}{M.~Kunimi}, \bibinfo{journal}{Phys. Rev. A}
  \bibinfo{volume}{90} (\bibinfo{year}{2014}) \bibinfo{pages}{063632}.
\bibitem[{Griffin(1996)}]{PhysRevB.53.9341}
\bibinfo{author}{A.~Griffin}, \bibinfo{journal}{Phys. Rev. B}
  \bibinfo{volume}{53} (\bibinfo{year}{1996}) \bibinfo{pages}{9341--9347}.

\end{thebibliography}







\end{document}